\newtheorem{thm}{Theorem}
\newtheorem{lem}[thm]{Lemma}
\newtheorem{corol}[thm]{Corollary}
\newtheorem{propo}[thm]{Proposition}
\def\C{\mathcal{C}}
\def\TTT{\mathcal{T}}
\def\restr{|}
\def\bs{\setminus}
\def\SSS{\mathcal{S}}
\def\CSG{CSG}
\title{Using Contracted Solution Graphs for\\ Solving Reconfiguration Problems\thanks{An extended abstract of this paper appeared in the proceedings of MFCS 2016~\cite{BP16}.}}
\author{Paul Bonsma\thanks{Supported by the European Community's Seventh Framework Programme (FP7/2007-2013), grant agreement n$^{\circ}$ 317662.} \inst{1}
\and
Dani\"el Paulusma\thanks{Supported by EPSRC Grant EP/K025090/1.} \inst{2}
}
\institute{
Saxion University of Applied Sciences, Enschede, The Netherlands,
\texttt{p.s.bonsma@saxion.nl}
\and
Durham University, UK,
\texttt{daniel.paulusma@durham.ac.uk}
}
\begin{document}

\maketitle

\begin{abstract}
We introduce in a general setting a dynamic programming method for solving reconfiguration problems.
Our method is based on {\em contracted solution graphs}, which are obtained from  solution graphs 
by performing an appropriate series of edge contractions that decrease the graph size
without losing any critical information needed to solve the reconfiguration problem under consideration. 
Our general framework captures the approach behind known reconfiguration results of Bonsma (2012) and Hatanaka, Ito and Zhou (2014).
As a third example, we apply  the method to the following well-studied problem:  
given two $k$-colorings $\alpha$ and $\beta$ of a graph~$G$, can $\alpha$ be modified into $\beta$ by recoloring one vertex of $G$ at a time, while maintaining a $k$-coloring throughout?
This problem is known to be PSPACE-hard even for bipartite planar graphs and $k=4$. 
By applying our method in combination with a thorough exploitation of the graph structure we obtain a polynomial-time algorithm for $(k-2)$-connected chordal graphs. 

\medskip
\noindent
{\bf Keywords.} reconfiguration, contraction, dynamic programming, graph coloring.
\end{abstract}

\section{Introduction}
\label{sec:intro}
 
Solving a given instance of an {\sf NP}-hard search problem means that we need to explore an exponentially large solution space. In order to get more insight into the solution space, it is a natural question to check how ``close'''
one solution for a particular instance is to another solution of that instance.
Doing so could, for instance, be 
potentially interesting for improving the performance of corresponding heuristics~\cite{GKMP09}. Searching the solution space by making small ``feasible'' moves also turned out to be useful when analyzing randomized algorithms for sampling and counting $k$-colorings of a graph or when analyzing cases of Glauber dynamics in statistical physics (see Section~5 of the survey of van den Heuvel~\cite{He13}).
Solution spaces in practical problems, such as stacking problems arising in storage planning~\cite{KLMSS07}, have been explored in a similar matter.

The above situation can be modeled as follows. 
A {\em solution graph concept} $\SSS$ is obtained by defining a set of {\em instances}, {\em solutions} for these instances, and a (symmetric) {\em adjacency relation} between pairs of solutions. For every instance $G$ of the problem, this gives a {\em solution graph}~$\SSS(G)$, also called a {\em reconfiguration graph}, which has as node 
set all solutions of $G$, with edges as defined by some given adjacency relation (if $G$ has no solutions, then $\SSS(G)$ is the empty graph).
The adjacency relation usually represents a smallest possible change ({\em reconfiguration move}) between two solutions of the same instance. 
For example, the well-known
{\em $k$-Color Graph concept} $\C_k$, related to the {\sc $k$-Coloring} search problem, is defined as follows: instances are graphs $G$, and solutions are (proper) $k$-colorings of $G$. Two $k$-colorings are adjacent nodes in
the reconfiguration graph $\C_k(G)$
 if 
and only if they differ in exactly one vertex of $G$.
In general there may be more than one natural way to define the adjacency relation.

Solution graphs and their properties have been studied very intensively over the last couple of years for a variety of search problems, such as {\sc $k$-Coloring}~\cite{BB18,BDFJP17,BJLPP14,BC09,CHJ08,CHJ09,CHJ11,FJP16,JKKPP16},
{\sc Satisfiability}~\cite{GKMP09,MNPR17},
{\sc Independent Set}~\cite{Bo16,BKW14,KMM12}, 
{\sc Shortest Path}~\cite{Bo17,Bo12B,KMM12}, 
{\sc List Coloring}~\cite{HIZ15},
{\sc List Edge Coloring}~\cite{IKD12,IKZ12},
{\sc $L(2,1)$-Labeling}~\cite{IKOZ14},
{\sc $H$-Coloring}~\cite{Wr15}
and {\sc Subset Sum}~\cite{ID14}; see also the surveys~\cite{He13,Ni18}.
The study of such solution graphs is commonly called {\em reconfiguration}. 

The area of reconfiguration is fast growing, and
both algorithmic and combinatorial questions have been considered.
For instance, what is the diameter
of $\SSS(G)$ (in terms of the size of the instance $G$) or if $\SSS(G)$ is not connected, what is the diameter of its (connected) components? In particular, is the diameter always polynomially bounded or not?
This led to the introduction of 
the {\sc $\SSS$-Connectivity} problem, which is that of deciding 
whether the solution graph $\SSS(G)$ of a given instance~$G$ is connected. 
We consider the following related problem, which is also a central problem in the area of reconfiguration:

\begin{center}
\begin{boxedminipage}{.99\textwidth}
\textsc{$\SSS$-Reachability}\\[2pt]
\begin{tabular}{ r p{0.8\textwidth}}
\textit{~~~~Instance:} &an instance $G$ with two solutions $\alpha$ and $\beta$.\\
\textit{Question:} & is there a path from $\alpha$ to $\beta$ in $\SSS(G)$?
\end{tabular}
\end{boxedminipage}
\end{center}

\smallskip
\noindent
The {\sc $\SSS$-Reachability} problem is sometimes called the {\em $\alpha$-$\beta$-path} problem for $\SSS$~\cite{He13}.  Our example problem {\sc $\C_k$-Reachability} is also known as the {\sc $k$-Color Path} problem~\cite{CHJ11}.

\begin{center}
\begin{boxedminipage}{.99\textwidth}
\textsc{$\C_k$-Reachability}\\[2pt]
\begin{tabular}{ r p{0.8\textwidth}}
\textit{~~~~Instance:} &an instance $G$ with two $k$-colorings $\alpha$ and $\beta$.\\
\textit{Question:} & is there a path from $\alpha$ to $\beta$ in $\C_k(G)$?
\end{tabular}
\end{boxedminipage}
\end{center}

It is known 
that {\sc $\SSS$-Reachability} is PSPACE-complete for most of the aforementioned solution graph concepts
even for 
special classes of instances~\cite{BC09,HIMNOST16,IDHPSUU11,MNRW14,Wr18,Za15}.
For instance, {\sc $\C_k$-Reachability} is PSPACE-complete even if $k=4$ and instances are restricted to planar bipartite graphs~\cite{BC09}. 
This explains that
efficient algorithms are only known
for very restricted 
classes of instances.
Hence, there is still a need for developing general algorithmic techniques
for solving these problems in practice, and 
for sharpening the boundary between tractable and computationally hard instance classes.

One important algorithmic technique is dynamic programming (DP).
There are only relatively few successful examples of nontrivial dynamic programming algorithms for solving {\sc $\SSS$-Reachability} problems.
The reason for this is that many well-studied {\sc $\SSS$-Reachability} problems (including {\sc $\C_k$-Reachability} for an appropriate constant $k$) are PSPACE-complete even for graphs of bounded bandwidth~\cite{MNRW14,Wr18}, and therefore also for graphs of bounded treewidth.
In fact, the PSPACE-completeness results from~\cite{MNRW14,Wr18} hold even for planar graphs of bounded bandwidth and low maximum degree~\cite{Za15}.

One way to cope with the above problem is to restrict the problem even further. 
For instance, in a number of recent papers~\cite{BMNR14,HIZ18,JKKPP16,KMM12,MNPR17,MNRS18,MNRSS17} the {\it length-bounded} version of the {\sc $\SSS$-Reachability} problem was studied. This is the problem of finding a path of length at most~$\ell$ in the solution graph between two given solutions. 
Taking the length~$\ell$ of a path between two solutions as a natural parameter, a particular aim of these papers was to determine fixed-parameter tractability.
For instance, although {\sc $\C_k$-Reachability} is PSPACE-complete for $k\ge 4$, the length-bounded version is FPT when parameterized by the length~$\ell$~\cite{BMNR14,JKKPP16} (for $k\leq 3$, the length-bounded version is even polynomial-time solvable~\cite{JKKPP16}).
In this restricted context, dynamic programming algorithms over tree decompositions for reconfiguration problems are more common.
For instance, in~\cite{MNRW14}
FPT algorithms are given for various length-bounded reachability problems, parameterized by both the treewidth and the length~$\ell$. To give another example, in~\cite{LMPRS18} FPT algorithms are given for the reachability versions of different token reconfiguration problems for graphs of bounded degeneracy (and thus for bounded treewidth), when parameterized by the number of tokens.

\subsection{Aims and Methodology}

We aim to solve the (original) {\sc $\SSS$-Reachability} problem via an algorithm that uses a generally applicable  DP method based on {\em contracted solution graphs}. Due to the PSPACE-completeness
of  {\sc $\SSS$-Reachability}, such an algorithm does not terminate in polynomial time for all instances. Hence we aim to identify restricted instance classes for which we do obtain a polynomial running time, as illustrated by a new application explained in Section~\ref{s-appp} and two known examples~\cite{Bo17,HIZ15} explained below.  

Bonsma~\cite{Bo17} introduced the DP method based on contracted solution graphs to obtain an efficient algorithm for {\sc Shortest-Path-Reachability}  restricted to planar graphs. 
Hatanaka, Ito and Zhou~\cite{HIZ15} used 
this DP method for proving that {\sc List-Coloring-Reachability} is polynomial-time solvable for caterpillars
(in both papers, contracted solution graphs are called {\em encodings}).
To be more precise, in~\cite{HIZ15} dynamic programming
was done over a {\em path decomposition} of the given caterpillar. In~\cite{Bo17}, a layer-based decomposition of the graph was used, which can also be viewed as a path decomposition. 
In our paper we focus on the more general {\em tree decompositions} instead (which requires us 
to introduce a join rule).
We will generalize the ideas of~\cite{Bo17,HIZ15} to a unified DP method
and illustrate the method by giving a new application.

We now sketch our method and refer to Section~\ref{s-method} for a detailed description.
In dynamic programming
one first computes the required information for
parts of an instance~$G$.
One combines/propagates this to compute the same information for ever larger parts of the instance, until the desired information is known for $G$ entirely. In our case, $G$ can be any relational structure on a ground set, such as (directed) graphs, hypergraphs, satisfiability formulas, or constraint satisfaction problems in general (see e.g.~\cite{BMNR14}). 
The order in which the information can be computed or the order in which parts must be considered is given by a {\em decomposition} of $G$. 
For a processed part $H$ of $G$, the elements of the ground set that are  in~$H$
and that
have incidences with the unexplored part
are called {\em terminals}.
Reconfiguration moves in $H$ that do not involve terminals are often irrelevant. We capture the information that is relevant  by the notion of a {\em terminal projection}. 
These projections assign labels to solutions, yielding so-called {\em label components}, which are maximally connected subgraphs of $\SSS(H)$ induced by sets of solutions that all have the same label. A {\it contracted solution graph} is
obtained from $\SSS(H)$ by contracting the label components into single vertices. 
Dynamic programming rules for a given decomposition of $G$ describe how to compute new (larger) contracted solution graphs from smaller ones.
 
In Section~\ref{sec:DP} we illustrate 
our method by giving dynamic programming rules for the {\sc $\C_k$-Reachability} 
problem that can be used if a  tree decomposition of the graph is given.
Recall that similar dynamic programming rules have been for other reconfiguration problems~\cite{Bo17,HIZ15} when a path decomposition is given.
Our rules solve the {\sc $\C_k$-Reachability problem} 
correctly for every graph $G$ and can also be used directly for  {\sc List-Coloring-Reachability} and thus generalize the rules of~\cite{HIZ15}.
Nevertheless, the algorithm is only {\em efficient} when the contracted solution graphs stay small enough 
(that is, polynomially bounded).
As indicated
by the aforementioned PSPACE-hardness of  {\sc $\C_k$-Reachability}, this is 
not always the case. 
To make this explicitly clear,
in Section~\ref{sec:badexamples}, 
we apply the DP rules on a specific example, which shows that
the size of the contracted solution graphs can indeed grow exponentially, even for 2-connected 4-colorable unit interval graphs. 

\subsection{Application}\label{s-appp}

In Section~\ref{sec:chordalgraphs} we apply the DP method to show that, for all $k\geq 3$, {\sc $\C_k$-Reachability}  
is polynomial-time solvable 
for $(k-2)$-connected chordal graphs.
Chordal graphs form a well-studied graph class; see e.g.~\cite{BLS99} for more information.
As unit interval graphs are chordal, the  example given in Section~\ref{sec:badexamples} 
implies that we need to
exploit the structure of chordal graphs  further in combination with
applying the DP method. The idea is to show that it suffices to compute the contracted solution graphs only partly.
In order to do this we introduce the new notion of injective neighbourhood property of contracted solution graphs for
 {\sc $\C_k$-Reachability}, which helps us to characterize contracted solution graphs if the original graph~$G$ is chordal and $(k-2)$-connected.

As the proof for the PSPACE-completeness result for bipartite graphs from~\cite{BC09} can be easily modified to hold for $(k-2)$-connected bipartite graphs, our result
for $(k-2)$-connected chordal graphs cannot be extended to $(k-2)$-connected perfect graphs.
Our result cannot be extended to all chordal graphs either: recently, Hatanaka, Ito and Zhou~\cite{HIZ17b} solved an open problem  posed in the conference version of our paper~\cite{BP16} by proving that {\sc $\C_k$-Reachability} is PSPACE-complete for chordal graphs if $k$ is a sufficiently large constant.
We note that in contrast, {\sc $\C_k$-Connectivity} is polynomial-time solvable on chordal graphs. This is due to a 
more general 
result of Bonamy et al.~\cite{BJLPP14}, which implies that for a chordal graph $G$, $\C_k(G)$ is connected if and only if $G$ has no clique with more than $k-1$ vertices. 

Our result on {\sc $\C_k$-Reachability} on $(k-2)$-connected chordal graphs is the 
first time that dynamic programming over tree decompositions is used to solve 
the general version of a PSPACE-complete reachability problem in polynomial time for a graph class strictly broader than trees.
As reachability problems become quickly PSPACE-complete, we can only hope to obtain polynomial-time algorithms for rather restricted graph classes. Nevertheless we believe that there are a number of interesting open problems in this direction, for which our method could be useful (see Section~\ref{s-discussion}). We also 
remark that the true strength of our method is not always revealed when using the viewpoint of worst-case algorithm analysis. 
For instance, we observed from some initial experiments using randomly generated $k$-colorable chordal or interval graphs that
the method performs well on most instances, despite the fact that specialized examples can be constructed that exhibit exponential growth. Discussing this is beyond the scope of the current paper. However, being able to use our method for performing experiments with a goal to further refining it was one of the reasons why we choose to present it in full generality.

\section{Preliminaries}\label{s-pre}

We consider finite undirected graphs that have no multi-edges and no loops. 
Below we define some basic terminology. 
In particular we give some coloring terminology, 
as we need such terminology throughout the paper.
We refer to the textbook of Diestel~\cite{Di10} for any undefined terms.

For a connected graph $G$, a {\em vertex cut} is a set $S\subseteq V(G)$ such that $G-S$ is disconnected.
Vertices in different components of $G-S$ are said to be {\em separated} by $S$.
For $k\ge 1$, a (connected) graph $G$ is {\em $k$-connected} if $|V(G)|\ge k+1$ and every vertex cut $S$ has $|S|\ge k$. The \emph{contraction} of an edge~$uv$ of a graph $G$  replaces $u$ and $v$ by 
 a new vertex made adjacent to precisely those vertices that were adjacent to $u$ or $v$ in $G$
(this does not create any multi-edges or loops).
A graph is {\em chordal} if it has no induced cycle of length greater than~3. 

Let $G$ be a graph.
A {\em $k$-color assignment} of  $G$ is a function $\alpha:V(G)\to \{1,\ldots,k\}$. For $v\in V(G)$,  $\alpha(v)$ is called the {\em color} of $v$. 
A $k$-color assignment~$\alpha$ is a {\em $k$-coloring} if  $\alpha(u)\not=\alpha(v)$ for every edge $uv\in E(G)$. 
A {\em coloring} of $G$ is a $k$-coloring for some value of $k$. 
If  $\alpha$ and $\beta$ are colorings of $G$ and a subgraph $H$ of $G$, respectively, such that
$\alpha\restr_{V(H)}=\beta$ (that is, 
$\alpha$ and $\beta$
coincide on $V(H)$)  then  
$\alpha$ and $\beta$ are said to be {\em compatible}.

For an
integer $k$, the {\em $k$-color graph $\C_k(G)$} has as nodes
all (proper) $k$-colorings of $G$, such that two colorings are adjacent if and only if they differ on one vertex.
A {\em walk} from $u$ to $v$ in $G$ is a sequence of vertices $v_0,\ldots,v_k$ with $u=v_0$, $v=v_k$, such that for all $i<k$, $v_iv_{i+1}\in E(G)$. A {\em pseudowalk} from $u$ to $v$ is a sequence of vertices $v_0,\ldots,v_k$ with $u=v_0$, $v=v_k$, such that for all $i<k$, either $v_i=v_{i+1}$, or $v_iv_{i+1}\in E(G)$.
A {\em recoloring sequence} from a $k$-coloring $\alpha$ of $G$ to a $k$-coloring $\beta$ of $G$ is a pseudowalk from $\alpha$ to $\beta$ in $\C_k(G)$.

A {\em labeled graph} is a pair $(G,\ell)$ where $G=(V,E)$ is a graph and $\ell:V\to X$ for some set~$X$ is 
a vertex labeling (which may assign the same label to different vertices); we refer to Section~\ref{s-method} for an example.
A  {\em label preserving isomorphism} between two labeled graphs $(G_1,\ell_1)$ and $(G_2,\ell_2)$ is an isomorphism $\phi:V(G_1)\to V(G_2)$, such that  $\ell_1(v)=\ell_2(\phi(v))$ for all $v\in V(G_1)$. 
Informally, two labeled graphs $(G_1,\ell_1)$ and $(G_2,\ell_2)$ 
are the same if there exists a label preserving isomorphism between them. 

\section{The Method of Contracted Solution Graphs}\label{s-method}

In this section we define the concept of {\it contracted solution graphs} (CSGs)
for reconfiguration problems in general.
Consider a solution graph concept $\SSS$, 
which for every instance $G$ of $\SSS$
defines a solution graph that is denoted by $\SSS(G)$. 
A {\em terminal projection} for $\SSS$ is a 
function~$p$ that 
assigns a {\it label} to each tuple $(G,T,\gamma)$ consisting of an instance $G$ of $\SSS$, a set $T$ of {\em terminals} for $G$ and 
a solution $\gamma$ for $G$.
Terminal projections are used to decide which nodes are ``equivalent'' and can be contracted.
In our example and in previous examples in the literature~\cite{Bo17,HIZ15} $G$ is always a graph, and $T$ is a subset of its vertices.
A terminal projection~$p$  can be seen as a node labeling for the solution graph $\SSS(G)$. So, for every instance  $G$ of $\SSS$, every choice of terminals $T$ may give a different 
node labeling for the solution graph $\SSS(G)$. 
If $G$ and $T$ are clear from the context,  we write $p(\gamma)$ to denote the label of a node $\gamma$ of $\SSS(G)$. 

\medskip
\noindent
{\it Example 1.} Consider the $k$-color graph concept $\C_k$. Let $G$ be a graph. We can define a terminal projection $p$ as follows.
Let $T$ be a subset of $V(G)$. The nodes of $\C_k(G)$ are $k$-colorings and we give each node as label its restriction to $T$, that is, for every $k$-coloring $\gamma$ of $G$, we set
$p(\gamma)=p(G,T,\gamma)=\gamma\restr_{T}$. Note that 
$\gamma\restr_{T}$ is a $k$-coloring of $G[T]$.

\medskip
\noindent
Let $p$ be a terminal 
projection
for a solution graph concept $\SSS$.
For an 
instance~$G$ of $\SSS$ and a terminal set~$T$, a 
{\em label component} $C$ of $\SSS(G)$ is a maximal set of nodes $\gamma$ that all have the same label $p(\gamma)$ and that induce a connected subgraph of $\SSS(G)$.
It is easy to see that every solution~$\gamma$ of~$G$ is part of exactly one label component, or in other words: the label components partition the node set of $\SSS(G)$.
The {\em contracted solution graph (CSG)} $\SSS^c(G,T)$ is a labeled graph that 
has a node set that corresponds bijectively to the set of label components of~$G$.
For a node $x$ of $\SSS^c(G,T)$, we denote by $S_x$ the corresponding label component.
Two distinct nodes $x_1$ and $x_2$ of $\SSS^c(G,T)$ are adjacent if and only if there exist solutions $\gamma_1\in S_{x_1}$ and $\gamma_2\in S_{x_2}$ such that $\gamma_1$ and $\gamma_2$ are adjacent in $\SSS(G)$. 
We define a label function $\ell^*$ for nodes of $\SSS^c(G,T)$ to denote the corresponding label in $\SSS(G)$. More precisely: for a node $x$ of $\SSS^c(G,T)$, 
the label $\ell^*(x)$ is chosen such that $\ell^*(x)=p(\gamma)$ for all $\gamma\in S_x$.
Note that the contracted solution graph~$\SSS^c(G,T)$ can also be obtained from $\SSS(G)$ by contracting all label components into single nodes and choosing node labels appropriately.

\medskip
\noindent
{\it Example 2.} 
Figure~\ref{fig:cutvertCSG}(c) shows one component of $\C_4(G)$ for the 
(4-colorable) 
graph $G$ from Figure~\ref{fig:cutvertCSG}(a). This is the component that contains all colorings of~$G$ 
whose vertices $a,b,c,d$ are colored with colors $4,3,2,1$, respectively 
(note that it is not possible to recolor any of these four vertices, as one may recolor only one vertex at a time).
So in Figure~\ref{fig:cutvertCSG}(c) the colors of the vertices $a,b,c,d$ 
are omitted in the node labels, which only indicate the colors of $e,f,g$, in this order. For terminal set $T=\{f\}$, this component contains three label components (of equal size), and contracting them yields the CSG $\C^c_4(G,\{f\})$ shown in Figure~\ref{fig:cutvertCSG}(d). For $T=\{g\}$, there are seven label components, and the corresponding CSG $\C^c_4(G,\{g\})$ is shown in Figure~\ref{fig:cutvertCSG}(e). Note that 
$\C^c_4(G,\{g\})$
contains different nodes with the same label. 
\begin{figure}
\centering
\scalebox{0.5}{$\input{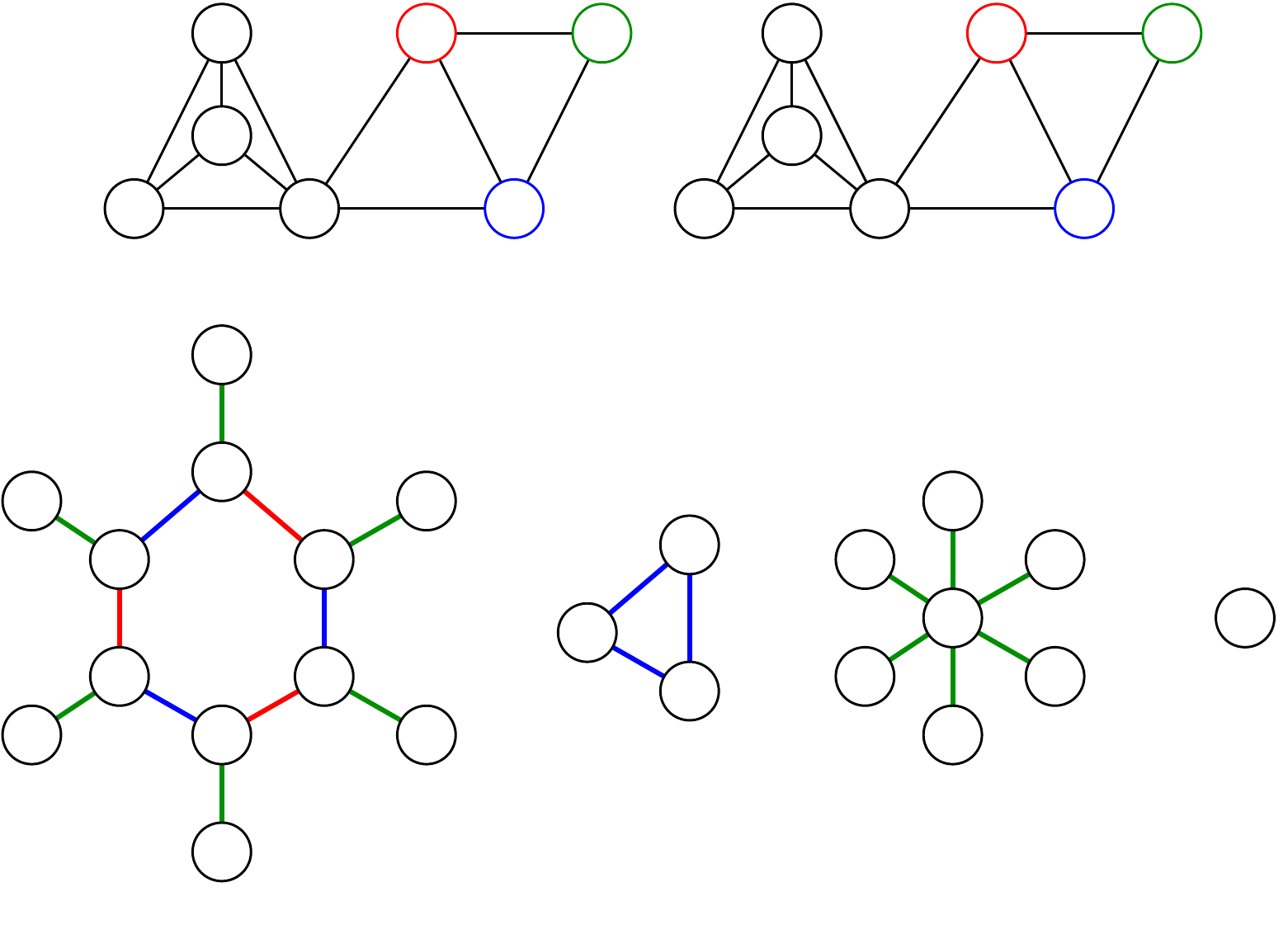_t}$}
\caption{(a) A 4-colorable
 chordal graph $G$
 with $V(G)=\{a,b,c,d,e,f,g\}$.
(b) a 
4-coloring~$\alpha$, and one component of 
the
CSGs of $G$ 
for four 
different
terminal sets $T$: (c) $\C^c_4(G,\{e,f,g\})$, (d) $\C^c_4(G,\{f\})$, (e) $\C^c_4(G,\{g\})$
and (f) $\C^c_4(G,\{a,b,c,d\})$. 
The $G[T]$-colorings in the node labels are given as sequences of colors, for the (ordered version of) $T$ 
as indicated below each CSG.
Example (c) can also be seen as the component of $\C_4(G)$ where vertices $a,b,c,d$ receive colors $4,3,2,1$. 
}
\label{fig:cutvertCSG}
\end{figure}

\medskip
\noindent
We stress that the CSG $\SSS^c(G,T)$ is a labeled graph that includes the label function~$\ell^*$ defined above. However, to keep its size reasonable, the CSG itself does not include the solution sets $S_x$ for each node that were used to define it.
For proving the correctness of 
dynamic programming rules
for CSGs the following alternative characterization of CSGs 
is useful; note that the sets~$S_x$ correspond exactly to the label components. 

\begin{lem}
\label{lem:characterizationCSGs}
Consider an instance $G$ of a solution graph concept $\SSS$, terminal 
set~$T$ and terminal projection $p$. 
Let $(H,\ell)$ be a labeled graph.
Then $(H,\ell) = \SSS^c(G,T)$ if and only if one can define (nonempty) sets of solutions $S_x$ for each node $x\in V(H)$ such that the following properties hold:
\begin{enumerate}
 \item\label{pr:partition}
 $\{S_x \mid x\in V(H)\}$ is a partition of the nodes of $\SSS(G)$ (the solutions of $G$).
 \item\label{pr:correctlabels}
 For every $x\in V(H)$ and every solution $\gamma\in S_x$: $p(G,T,\gamma)=\ell(x)$.
 \item\label{pr:propercoloring}
 For every edge 
 $xy\in E(H)$: $\ell(x)\not=\ell(y)$.
 \item\label{pr:connectedsets}
 For every $x\in V(H)$: $S_x$ induces a connected subgraph of $\SSS(G)$.
 \item\label{pr:adjacency}
 For every pair of distinct nodes $x,y\in V(H)$: 
 $xy\in E(H)$ if and only if there exist solutions $\alpha\in S_x$ and $\beta\in S_y$ such that $\alpha$ and $\beta$ are adjacent in $\SSS(G)$.
\end{enumerate}
\end{lem}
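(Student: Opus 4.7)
The plan is to prove the two implications separately, with the forward direction being essentially a restatement of definitions, while the reverse direction requires a careful maximality argument.

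For the forward direction, suppose $H,\ell = \SSS^c(G,T)$. Then by the definition of the CSG, for every node $x\in V(H)$ there is an associated label component $S_x$ of $\SSS(G)$. I would check each property in turn: (a) holds because the label components partition the nodes of $\SSS(G)$ by definition; (b) holds because $\ell^*(x)=p(\gamma)$ for all $\gamma\in S_x$ by the definition of $\ell^*$; (c) holds because two label components with the same label cannot be adjacent in $\SSS(G)$ (otherwise their union would still be a connected set of solutions with a common label, contradicting maximality); (d) holds because a label component is, by definition, a connected subgraph; and (e) is exactly the adjacency rule used when defining $\SSS^c(G,T)$.

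For the reverse direction, suppose sets $S_x$ satisfying (a)--(e) are given. The core task is to show each $S_x$ is in fact a label component of $\SSS(G)$ with respect to $p$; once this is established, the map $x \mapsto S_x$ is a bijection between $V(H)$ and the label components (by (a)), it preserves labels (by (b)), and preserves adjacency (by (e)), so it realizes the required label-preserving isomorphism $H,\ell \cong \SSS^c(G,T)$. By (b) and (d), each $S_x$ is a connected subgraph of $\SSS(G)$ all of whose nodes share the label $\ell(x)$, so $S_x$ is contained in some label component $L$.

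The main obstacle is maximality: showing $S_x = L$. Suppose for contradiction that $S_x \subsetneq L$. Since $L$ is connected in $\SSS(G)$, there is an edge $\alpha\beta$ of $\SSS(G)$ with $\alpha\in S_x$ and $\beta\in L\setminus S_x$. By (a), $\beta\in S_y$ for a unique $y$, and $y\neq x$. By (e), $xy\in E(H)$. On the other hand, $\alpha,\beta\in L$ share the same $p$-label, so by (b) we have $\ell(x)=p(\alpha)=p(\beta)=\ell(y)$, contradicting (c). Hence $S_x=L$, completing the argument.
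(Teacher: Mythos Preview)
The proposal is correct and follows essentially the same approach as the paper's proof: both directions verify the properties directly from the definition, and the key maximality step in the reverse direction (showing each $S_x$ equals its containing label component by finding a boundary edge and deriving a contradiction from (a), (e), (b), (c)) is argued in the same way. Your exposition is slightly more streamlined, but there is no substantive difference.
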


\begin{proof}
$\Rightarrow$:
We choose the sets $S_x$ to be the label components, as chosen in the above definition of $\SSS^c(G,T)$. Then the Properties~(\ref{pr:partition}), (\ref{pr:correctlabels}), (\ref{pr:connectedsets}) and~(\ref{pr:adjacency}) follow immediately from the definitions. For Property~(\ref{pr:propercoloring}), we use that label components are {\em maximal} connected node sets with the same label, together with Properties~(\ref{pr:correctlabels}) and~(\ref{pr:adjacency}).

\medskip
\noindent
$\Leftarrow$: Let $(H,\ell)$ be a labeled graph for which solution sets $S_x$ can be defined such that the five properties hold. Consider a node $x\in V(H)$. 
Properties~(\ref{pr:connectedsets}) and~(\ref{pr:correctlabels}) show that all solutions in $S_x$ are part of the same label component; denote this label component by $C$. 
Note that for all $\alpha\in S_x$, $p(G,T,\alpha)=\ell(x)$ due to Property~(\ref{pr:correctlabels}).

In order to show that in fact $S_x=C$, let $\gamma\notin S_x$ be adjacent to some solution in $S_x$. 
We must show that $p(G,T,\gamma)\neq \ell(x)$. 
By Property~(\ref{pr:partition}), $\gamma$ belongs to a set~$S_y$ for some $y\neq x$. By Property~(\ref{pr:adjacency}), we obtain $xy\in E(C)$. Then, by Property~(\ref{pr:propercoloring}), we find that $\ell(y)\neq \ell(x)$. 
Hence $p(G,T,\gamma)=\ell(y)\neq \ell(x)$ due to Property~(\ref{pr:correctlabels}).

Because $S_x$ induces a label component for every $x$,
there exists a bijection $\phi$ between the nodes of $H$ and the label components of $\SSS(G)$ (This is a bijection because of Property~(\ref{pr:partition})). This yields a bijection $\phi'$ between the nodes of $H$ and the nodes of $\SSS^c(G,T)$, which is label preserving by Property~(\ref{pr:correctlabels}) and an isomorphism by Property~(\ref{pr:adjacency}) and the definition of $\SSS^c(G,T)$. Hence $(H,\ell)=\SSS^c(G,T).$\footnote{Since node names are irrelevant, we will simply write $(H,\ell)=\SSS^c(G,T)$ to denote that there is a label preserving isomorphism between the two. More formally, $\SSS^c(G,T)$ can be seen as a class of labeled graphs that are equivalent under labeled isomorphisms.}\qed
\end{proof}

A mapping $S$ that assigns solution sets (or label components) $S_x$ to each node~$x$ of $\SSS^c(G,T)$ that satisfies the properties given in 
Lemma~\ref{lem:characterizationCSGs}
is called a {\em certificate} for $\SSS^c(G,T)$.
Given such a certificate $S$ and a solution $\gamma$ for $G$, we define 
the {\em $\gamma$-node} of $\SSS^c(G,T)$ with respect to $S$
to be the node $x$ with $\gamma\in S_x$. 
For readability, we will not always explicitly mention this certificate when talking about $\gamma$-nodes in $\SSS^c(G,T)$ (except in Lemma~\ref{propo:one} below), but the reader should keep the following convention in mind: when $\gamma$-nodes are identified in $\SSS^c(G,T)$ for multiple solutions $\gamma$, {\em these are all chosen with respect to the same certificate}.

\medskip
\noindent
{\it Example 3.}
In Figures~\ref{fig:cutvertCSG}(c)--(f),
the $\alpha$-node for the coloring $\alpha$ shown in Figure~\ref{fig:cutvertCSG}(b) is marked.
In particular consider $\C^c_4(G,\{g\})$ in Figure~\ref{fig:cutvertCSG}(e).
Since the certificate for $\C^c_4(G,\{g\})$ is not actually indicated in the figure,
the other leaf with label~2 
can also be chosen as the $\alpha$-node (considering the nontrivial label-preserving automorphisms of the graph). Similarly, if we choose a coloring $\beta$ that coincides with $\alpha$ except on nodes $e$ and $f$, where we choose $\beta(e)=3$ 
and $\beta(f)=4$, then
the same two leaves (the ones with label 2) of
$\C^c_4(G,\{g\})$
can be chosen as the $\beta$-node. 
Nevertheless, if both an $\alpha$-node and $\beta$-node are marked, then this will only be correct according to the above convention when they are distinct!
\footnote{It is possible for the given example to choose two solutions $\alpha$ and $\beta$, and correctly mark an $\alpha$-node $x$ with respect to a one certificate $S^1$, and a $\beta$-node $y$ with respect to another certificate $S^2$, such that $\alpha$ and $\beta$ are in different components of $\SSS(G)$, but $x$ and $y$ are in the same component of $\SSS^c(G,T)$. This is clearly not desirable; see Lemma~\ref{propo:one}.}

\medskip
\noindent
The main purpose of 
our 
definitions 
is the 
following 
key lemma.
\begin{lem}
\label{propo:one}
Let 
$(G,T)$
be an instance of a solution graph concept~$\SSS$.
Let $\SSS^c(G,T)$ be the contracted solution graph for some terminal projection~$p$.
Let $\alpha$ and $\beta$ be two solutions and let $x$ and $y$ be the $\alpha$-node resp.\ $\beta$-node with respect to some certificate~$S$.
Then there is a path from $\alpha$ to $\beta$ in $\SSS(G)$ if and only if there is a path from $x$ to $y$ in $\SSS^c(G,T)$.
\end{lem}

\begin{proof}
First suppose that there exists  
a path $\gamma_0,\ldots,\gamma_k$ from $\alpha$ to $\beta$ in $\SSS(G)$. 
Replace every 
solution~$\gamma_i$ in this sequence by the node $v$ of $\SSS^c(G,T)$ with $\gamma_i\in S_v$. By definition, the resulting node sequence starts in $x$, and terminates in $y$. By  Lemma~\ref{lem:characterizationCSGs}(\ref{pr:adjacency}), consecutive nodes in this sequence are the same or adjacent, so this sequence is a pseudowalk from $x$ to $y$. This 
immediately
yields a path from $x$ to~$y$. 

For the other direction, consider a path $v_0,\ldots,v_k$ from $x$ to $y$ in $\SSS^c(G,T)$. For every node $v_i$, $S_{v_i}$ induces a connected subgraph of $\SSS(G)$ (Lemma~\ref{lem:characterizationCSGs}(\ref{pr:connectedsets})). For any two consecutive nodes $v_i$ and $v_{i+1}$, there exist solutions $\gamma\in S_{v_i}$ and $\gamma'\in S_{v_{i+1}}$ that are adjacent in $\SSS(G)$ (Lemma~\ref{lem:characterizationCSGs}(\ref{pr:adjacency})). Clearly, $\alpha\in S_{v_0}$ and $\beta\in S_{v_k}$. 
Combining these facts yields a path from $\alpha$ to $\beta$ in $\SSS(G)$.
\qed\end{proof}

\noindent
Lemma~\ref{propo:one}
implies
that for a solution graph concept $\SSS$ and {\em any} terminal projection $p$ and terminal set $T$, we can 
decide 
{\sc $\SSS$-Connectivity}
 if we know $\SSS^c(G,T)$ (the answer is YES if and only if $\SSS^c(G,T)$ is connected) and the {\sc $\SSS$-Reachability} problem if we know $\SSS^c(G,T)$ and the $\alpha$-node and the $\beta$-node (the answer is YES if and only if these two nodes are in the same component). 
However,
for obtaining an {\em efficient} algorithm using this strategy, we 
must throw away enough irrelevant information to ensure that $\SSS^c(G,T)$ will be significantly smaller than $\SSS(G)$, yet 
maintain enough information to ensure
 the efficient computation of
 $\SSS^c(G,T)$, 
 without first constructing $\SSS(G)$. 
Our strategy for doing this is to use dynamic programming to
compute $\SSS^c(H,T')$ for ever larger subgraphs $H$ of $G$, while ensuring that all of the \CSG s stay small throughout the process.
The remainder of this paper shows 
a successful example of 
this strategy. 

\section{Dynamic Programming Rules for Recoloring}
\label{sec:DP}

The following terminology is based on widely used techniques for dynamic programming over tree decompositions; see  
Section~\ref{sec:chordalgraphs} 
and~\cite{BBL13,Kl94,Ni06} 
for further information. 

A {\em terminal graph $(G,T)$} is a graph $G$ together with a vertex set $T\subseteq V(G)$, 
whose vertices are called the {\em terminals}. 
If $T=V(G)$, then $(G,T)$ is called a {\em leaf}.
If $v\in T$, then we say that the new terminal graph $(G,T\bs \{v\})$ is obtained from $(G,T)$ by {\em forgetting $v$} (or {\em using a forget operation}).
If $T\not=V(G)$,
$v\in T$ and $N(v)\subseteq T$ then we say that $(G,T)$ can be obtained from $(G-v,T\bs \{v\})$ by {\em introducing $v$} (or {\em using an introduce operation}). 
Note that for a terminal graph $(G',T')$ with $T'\not=\emptyset$, different graphs can be obtained from $(G',T')$ by introducing a vertex $v$, whereas forgetting a terminal always yields a unique result. 
The condition that 
each neighbor of the new vertex~$v$ must be in~$T$ is necessary, as we will see at several places in our proofs.
We say that $(G,T)$ is the {\em join of $(G_1,T)$ and $(G_2,T)$}  if
\begin{itemize}
 \item $G_1$ and $G_2$ are induced subgraphs of $G$,
 \item $V(G_1)\cap V(G_2)=T$ and $V(G_1)\cup V(G_2)=V(G)$,
 \item $V(G_1)\not=T$ and $V(G_2)\not=T$, and
 \item for every $uv\in E(G)$, it holds that $uv\in E(G_1)$ or $uv\in E(G_2)$.
\end{itemize}
We will now focus on CSGs for the $k$-color graph concept $\C_k$, using the terminal projection $p(G,T,\gamma)=\gamma\restr_{T}$. 
We will show how to compute the \CSG\ $\C^c_k(G,T)$ when $(G,T)$ is obtained using a forget, introduce or join operation from a (pair of) graph(s) for which we know the \CSG (s).
Hatanaka, Ito and Zhou~\cite{HIZ15} considered a variant of these \CSG s,
namely 
for the case that $|T|=1$ in the 
context of list colorings of caterpillars. They gave a combined introduce and forget rule and a restricted type of join rule, similar to the above rules.

We first state the (trivial) rule for computing $\C^c_k(G,T)$ for leaves, which follows from the facts that $\C_k(G)$ has $k$-colorings of~$G$ as nodes and that the label~$\ell(x)$ of a node~$x$ in $\C^c_k(G,T)$ is a $k$-coloring of $G[T]$.
Afterwards we give  the rules for the forget, introduce and join operations, which can be proven straightforwardly using
Lemma~\ref{lem:characterizationCSGs}.\footnote{Formal proofs of Lemmas~\ref{lem:recolForget}--\ref{lem:recolJoin} can be found in Appendix~\ref{a-rules}.}
Together these four rules allow us to compute the \CSG\ $\C^c_k(G,T)$ for every terminal graph~$(G,T)$. 
Figure~\ref{fig:unitintCSG} illustrates the rules for the forget and introduce operation, whereas Figure~\ref{fig:bigstar} illustrates the rule for the join operation. 

\begin{lem}[Leaf]
\label{propo:recolLeaf}
Let $(G,T)$ be a terminal graph with $T=V(G)$. 
Then $\C^c_k(G,T)$ is isomorphic to $\C_k(G)$ and its label function~$\ell$ is the isomorphism from $\C^c_k(G,T)$ to $\C_k(G)$. 
Moreover, for every $k$-coloring $\gamma$ of $G$, the $\gamma$-node of $\C^c_k(G,T)$ is the node $v$ with $\ell(v)=\gamma$. 
\end{lem}

\def\Cnew{H'}  
\def\Corig{H}  

\begin{lem}[Forget]
\label{lem:recolForget}
Let $(G,T)$ be a terminal graph.
For every $v\in T$, 
it holds that
$(H',\ell')=\C^c_k(G,T\bs \{v\})$ can be computed from $(H,\ell)=\C^c_k(G,T)$ as follows:
\begin{itemize}
 \item For every node $x$ in $\Corig$ with $\ell(x)=\gamma$, let
 $\ell'(x)=\gamma\restr_{T\bs\{v\}}$.
 \item Iteratively contract every edge between two nodes $x$ and $y$ with $\ell'(x)=\ell'(y)$ and assign
 label $\ell'(z):=\ell'(x)$ to the resulting node $z$. 
 \end{itemize}
Moreover, for any coloring $\gamma$ of $G$, the $\gamma$-node of $\C^c_k(G,T\bs \{v\})$ is the node that results from contracting the set of nodes that includes the $\gamma$-node of $\C^c_k(G,T)$.
\end{lem}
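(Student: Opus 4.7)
The plan is to verify the output of the described procedure against Lemma~\ref{lem:characterizationCSGs}: exhibit a candidate certificate for the graph produced by the construction, and check its five properties. Let $S$ be a certificate underlying $H,\ell=\C^c_k(G,T)$. First I would reinterpret the iterative contraction in static terms. Let $H^*$ denote the spanning subgraph of $H$ retaining exactly those edges $xy$ with $\ell'(x)=\ell'(y)$. Because every contraction step merges two same-$\ell'$-label neighbors into a node inheriting their common label, the nodes of $H'$ correspond bijectively to the connected components of $H^*$, and all nodes in such a component $C_z$ share a label that becomes $\ell'(z)$. As candidate certificate I would take $S'_z := \bigcup_{x \in V(C_z)} S_x$ and verify that the graph $H'$ obtained from $H$ by this component-contraction, together with $\ell'$ and $S'$, equals $\C^c_k(G,T\bs\{v\})$.

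Properties~(\ref{pr:partition}) and~(\ref{pr:correctlabels}) are immediate: the $S'_z$ inherit the partition property from $S$, and for $\gamma\in S_x\subseteq S'_z$ we have $\ell(x)=\gamma\restr_T$, so $\ell'(z)=\ell'(x)=\gamma\restr_{T\bs\{v\}}$. Property~(\ref{pr:propercoloring}) follows from the definition of $H^*$: an edge $z_1z_2\in E(H')$ can only come from an edge $x_1x_2\in E(H)\bs E(H^*)$ with $x_i\in V(C_{z_i})$, so $\ell'(x_1)\neq\ell'(x_2)$ and hence $\ell'(z_1)\neq\ell'(z_2)$.

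The substantive steps are~(\ref{pr:connectedsets}) and~(\ref{pr:adjacency}). For~(\ref{pr:connectedsets}), each $S_x$ is already connected in $\C_k(G)$ by Lemma~\ref{lem:characterizationCSGs}(\ref{pr:connectedsets}) applied to $H$; along each edge $x_1x_2\in E(H^*)$, Lemma~\ref{lem:characterizationCSGs}(\ref{pr:adjacency}) for $H$ supplies adjacent solutions $\alpha\in S_{x_1}$, $\beta\in S_{x_2}$ in $\C_k(G)$. Since $\alpha,\beta$ differ on exactly one vertex, and since $\alpha\restr_T=\ell(x_1)$ and $\beta\restr_T=\ell(x_2)$ agree on $T\bs\{v\}$ (same $\ell'$) but differ on $T$ (distinct $\ell$, by Lemma~\ref{lem:characterizationCSGs}(\ref{pr:propercoloring}) for $H$), they must differ precisely on $v$; this chains connectedness across $C_z$. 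For~(\ref{pr:adjacency}), the forward direction follows from the construction of $H'$ combined with Lemma~\ref{lem:characterizationCSGs}(\ref{pr:adjacency}) for $H$; for the reverse, any two adjacent $\alpha\in S'_{z_1}, \beta\in S'_{z_2}$ in $\C_k(G)$ with $z_1\neq z_2$ lie in some $S_{x_1}, S_{x_2}$ with $x_1\in V(C_{z_1}), x_2\in V(C_{z_2})$, and $z_1\neq z_2$ forces $x_1\neq x_2$, so Lemma~\ref{lem:characterizationCSGs}(\ref{pr:adjacency}) for $H$ yields $x_1x_2\in E(H)$, which survives into an edge $z_1z_2$ of $H'$.

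The ``Moreover'' claim is then a direct corollary: if $x$ is the $\gamma$-node of $\C^c_k(G,T)$ (so $\gamma\in S_x$), then $\gamma\in S_x\subseteq S'_z$ where $z$ is the node of $H'$ arising from the component $C_z$ of $H^*$ that contains $x$, and this is exactly the node produced by contracting the set containing the $\gamma$-node of $\C^c_k(G,T)$. The one delicate step I expect to be the main obstacle is confirming within~(\ref{pr:connectedsets}) that the adjacent solutions produced by Lemma~\ref{lem:characterizationCSGs}(\ref{pr:adjacency}) for $H$ must differ exactly on $v$; this relies on combining a single-vertex recoloring move with equality on $T\bs\{v\}$ and inequality on $T$, and is the linchpin of the whole argument since it is what ties an adjacency in $\C_k(G)$ to the specific act of forgetting $v$.
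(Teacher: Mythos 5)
Your proof is correct and follows essentially the same route as the paper: take a certificate for $\C^c_k(G,T)$, define $S'_z$ as the union of the sets $S_x$ over the contracted node set corresponding to $z$, and verify the five properties of Lemma~\ref{lem:characterizationCSGs}. The only notable difference is that your observation in property~(\ref{pr:connectedsets}) that the linking solutions must differ precisely on $v$ is true but not actually needed---connectedness of $S'_z$ already follows from the mere existence of adjacent solutions between consecutive sets along a path inside the contracted node set---so the step you flag as the linchpin is in fact dispensable.
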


\def\Cnew{H'}	
\def\Corig{H}

\begin{lem}[Introduce]
\label{lem:recolIntroduce}
Let $(G,T)$ be a terminal graph obtained from a terminal graph $(G-v,T\setminus \{v\})$ by introducing $v$.
Then $(\Cnew,\ell')=C^c_k(G,T)$ can be computed as follows from $(\Corig,\ell)=\C^c_k(G-v,T\setminus \{v\})$:
\begin{itemize}
 \item For every node $x$ of $\Corig$ with label~$\ell(x)$, and 
 every color $c\in \{1,\ldots,k\}$: if the (unique) function $\delta:T\to \{1,\ldots,k\}$ with $\delta(v)=c$ and $\delta\restr_{T}=\ell(x)$ is a 
 coloring of $G[T]$ then introduce a node $x_c$ with label $\ell'(x_c)=\delta$.
 \item 
 For every pair of distinct nodes $x_c$ and $y_d$: add an edge between them if and only if
 (1) $x=y$ or (2) $xy$ is an edge in $\Corig$ and $c=d$. 
\end{itemize} 
Moreover, for every $k$-coloring $\gamma$ of $G$, if $x$ is the $\gamma\restr_{V(G)\bs \{v\}}$-node in $\Corig$ and $\gamma(v)=c$, then $x_c$ is the $\gamma$-node of $\Cnew$.
\end{lem}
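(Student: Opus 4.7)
\medskip
\noindent
\textbf{Proof plan.} The approach is to apply Lemma~\ref{lem:characterizationCSGs} again: I will propose a candidate certificate $S'$ for $\Cnew,\ell'$ constructed from the certificate $S$ of $\Corig,\ell$, and then verify the five properties.

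The natural choice is
\[
S'_{x_c} \;=\; \bigl\{\gamma \in V(\C_k(G)) \;\colon\; \gamma(v)=c \text{ and } \gamma|_{V(G)\setminus\{v\}} \in S_x \bigr\}.
\]
With this definition the $\gamma$-node assignment in the last line of the lemma is immediate: if $\gamma$ is a $k$-coloring of $G$, $\gamma(v)=c$, and $x$ is the $\gamma|_{V(G)\setminus\{v\}}$-node of $\Corig$, then by construction $\gamma\in S'_{x_c}$, so $x_c$ is the unique $\gamma$-node. First I would observe that $S'_{x_c}$ is nonempty precisely when the color extension $\delta$ is a proper coloring of $G[T]$: since $N(v)\subseteq T\setminus\{v\}$, extending any $\gamma'\in S_x$ by $\gamma(v)=c$ gives a proper coloring of $G$ iff $c\notin \{\ell(x)(u) : u\in N(v)\}$, which (using $\gamma'|_{T\setminus\{v\}}=\ell(x)$ from Lemma~\ref{lem:characterizationCSGs}(\ref{pr:correctlabels})) is equivalent to $\delta$ being a proper coloring of $G[T]$.

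Next I would verify the five properties of Lemma~\ref{lem:characterizationCSGs} one at a time.
Property~(\ref{pr:partition}) follows since every $k$-coloring $\gamma$ of $G$ restricts to a $k$-coloring $\gamma'$ of $G-v$ that lies in a unique $S_x$, and $\gamma(v)$ determines a unique color $c$, placing $\gamma$ in the unique $S'_{x_c}$.
Property~(\ref{pr:correctlabels}) is immediate from $\ell'(x_c)=\delta$ and the definition of $S'_{x_c}$.
For Property~(\ref{pr:propercoloring}), if $x=y$ the labels differ at $v$, and if $xy\in E(\Corig)$ with $c=d$ the labels differ on $T\setminus\{v\}$ by Lemma~\ref{lem:characterizationCSGs}(\ref{pr:propercoloring}) applied to $\Corig$.
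For Property~(\ref{pr:adjacency}), I would split on whether two adjacent colorings differ on $v$ (yielding the rule (1)) or on some $w\ne v$ (yielding the rule (2) via Lemma~\ref{lem:characterizationCSGs}(\ref{pr:adjacency}) for $\Corig$), and check the converse by lifting the witnesses explicitly, using that both nodes $x_c$ and $y_d$ exist in $\Cnew$ (so the needed extensions are proper).

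The main obstacle is Property~(\ref{pr:connectedsets}), the connectivity of each $S'_{x_c}$. Given $\gamma_1,\gamma_2\in S'_{x_c}$, the restrictions $\gamma_1',\gamma_2'$ to $V(G)\setminus\{v\}$ lie in $S_x$, so by Lemma~\ref{lem:characterizationCSGs}(\ref{pr:connectedsets}) for $\Corig$ there is a path $\gamma_1'=\eta_0,\eta_1,\ldots,\eta_t=\gamma_2'$ in $\C_k(G-v)$ with all $\eta_i\in S_x$. The key observation — and the reason the precondition $N(v)\subseteq T$ in the introduce operation is essential — is that every $\eta_i$ satisfies $\eta_i|_{T\setminus\{v\}}=\ell(x)$ by Lemma~\ref{lem:characterizationCSGs}(\ref{pr:correctlabels}); hence $\eta_i(u)\ne c$ for all $u\in N(v)\subseteq T\setminus\{v\}$, because $\delta$ is a proper coloring of $G[T]$. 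Therefore extending each $\eta_i$ by $v\mapsto c$ produces a proper $k$-coloring of $G$ in $S'_{x_c}$, and consecutive extensions are adjacent in $\C_k(G)$, giving the required connecting path. Once this is established, combining all five properties with Lemma~\ref{lem:characterizationCSGs} yields $\Cnew,\ell'=\C^c_k(G,T)$ as claimed.
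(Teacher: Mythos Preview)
Your proposal is correct and follows essentially the same route as the paper: define the certificate $S'_{x_c}=\{\gamma:\gamma(v)=c,\ \gamma|_{V(G)\setminus\{v\}}\in S_x\}$ and verify the five properties of Lemma~\ref{lem:characterizationCSGs}, using $N(v)\subseteq T$ to guarantee that extending colorings in $S_x$ by $v\mapsto c$ stays proper. Your treatment of connectedness and of the adjacency case split (differ on $v$ versus differ elsewhere) matches the paper's argument; the only cosmetic difference is that the paper emphasizes nonemptiness of $S'_{x_c}$ under property~(\ref{pr:partition}) rather than as a separate preliminary observation.
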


\def\Cnew{H}	
\def\Cone{H_1}	
\def\Ctwo{H_2}	

\begin{lem}[Join]
\label{lem:recolJoin}
Let $(G,T)$ be a terminal graph that is the join of terminal graphs $(G_1,T)$ and $(G_2,T)$.
Let $(\Cone,\ell_1)=C^c_k(G_1,T)$ and $(\Ctwo,\ell_2)=C^c_k(G_2,T)$.
Then $(\Cnew,\ell)=C^c_k(G,T)$ can be computed as follows:
\begin{itemize}
\item For every pair of nodes $x\in V(\Cone)$
 and $y\in V(\Ctwo)$: if $\ell_1(x)=\ell_2(y)$ then introduce a node $(x,y)$ with $\ell((x,y))=\ell_1(x)$. 
\item 
 For two distinct nodes $(x,y)$ and $(x',y')$, add an edge between them if and only if $xx'$ is an edge in $\Cone$ and $yy'$ is an edge in $\Ctwo$.
\end{itemize}
Moreover, for every $k$-coloring $\gamma$ of $G$, if $x$ is the $\gamma\restr_{V(G_1)}$-node in
$\Cone$ and $y$ is the $\gamma\restr_{V(G_2)}$-node in
$\Ctwo$, then $(x,y)$ is the $\gamma$-node in $\Cnew$.
\end{lem}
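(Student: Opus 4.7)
The plan is to mimic the proofs of Lemmas~\ref{lem:recolForget} and~\ref{lem:recolIntroduce}: exhibit an explicit certificate for $\Cnew,\ell$, and then verify the five properties of Lemma~\ref{lem:characterizationCSGs}. Let $S^1$ and $S^2$ be certificates for $\Cone,\ell_1$ and $\Ctwo,\ell_2$. For each node $(x,y)$ of $\Cnew$, set
\[
S_{(x,y)} \;=\; \bigl\{\, \gamma \text{ a $k$-coloring of } G \;\bigm|\; \gamma\restr_{V(G_1)} \in S^1_x \text{ and } \gamma\restr_{V(G_2)} \in S^2_y \,\bigr\}.
\]
Non-emptiness is checked by picking $\alpha_1\in S^1_x$, $\alpha_2\in S^2_y$ and gluing along $T$ (they agree there because $\alpha_1\restr_T=\ell_1(x)=\ell_2(y)=\alpha_2\restr_T$); the result is a proper $k$-coloring of $G$ thanks to $E(G)=E(G_1)\cup E(G_2)$. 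Property~(a) is then immediate from the fact that any $k$-coloring of $G$ restricts to a $k$-coloring of each $G_i$ and that the $S^i$ partition those. Properties~(b) and~(c) follow directly from the construction of $\ell$ and from property~(c) for $\Cone$.

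For connectedness~(d) of $S_{(x,y)}$, the key observation is that every coloring in $S^1_x$ has restriction $\ell_1(x)$ on $T$, so any path inside $S^1_x$ only recolors vertices of $V(G_1)\bs T$. Hence, given $\alpha,\beta \in S_{(x,y)}$, a path from $\alpha\restr_{V(G_1)}$ to $\beta\restr_{V(G_1)}$ inside $S^1_x$ lifts to a sequence of recolorings in $\C_k(G)$ that keeps the $G_2$-side fixed and stays entirely in $S_{(x,y)}$ (each intermediate coloring is a valid $k$-coloring of $G$ because only non-terminal vertices of $G_1$ are touched). Doing the same on the $G_2$-side and concatenating gives a path from $\alpha$ to $\beta$ inside $S_{(x,y)}$.

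The adjacency rule~(e) is the delicate part. The backward direction is routine: if $\gamma\in S_{(x,y)}$ and $\gamma'\in S_{(x',y')}$ are adjacent in $\C_k(G)$ with $(x,y)\neq (x',y')$, the unique vertex $w$ on which they differ cannot lie in $V(G_i)\bs T$ for $i\in\{1,2\}$ (otherwise, as in the argument for (d), one would force $x=x'$ and $y=y'$ via properties~(c) and~(e) of $S^i$), so $w\in T$ and then the edges $xx'\in E(\Cone)$ and $yy'\in E(\Ctwo)$ are delivered by property~(e) applied to $\Cone$ and $\Ctwo$. The main obstacle is the forward direction: given $xx'\in E(\Cone)$ and $yy'\in E(\Ctwo)$, produce adjacent $\gamma\in S_{(x,y)}$ and $\gamma'\in S_{(x',y')}$ in $\C_k(G)$. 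I extract adjacent pairs $\alpha_1\in S^1_x, \alpha_1'\in S^1_{x'}$ in $\C_k(G_1)$ and $\alpha_2\in S^2_y, \alpha_2'\in S^2_{y'}$ in $\C_k(G_2)$. Since $\ell_1(x)\neq \ell_1(x')$, the $G_1$-pair must differ on a \emph{terminal} $w_1$, and in fact $\ell_1(x),\ell_1(x')$ agree on $T\bs\{w_1\}$; similarly a terminal $w_2$ appears for the $G_2$-side. The crucial step is that the compatibility equalities $\ell_1(x)=\ell_2(y)$ and $\ell_1(x')=\ell_2(y')$ force $w_1=w_2$ and moreover force the color change at this terminal to be the same on both sides, so that $\alpha_1$ can be glued with $\alpha_2$ (and $\alpha_1'$ with $\alpha_2'$) into $k$-colorings $\gamma,\gamma'$ of $G$ that differ on exactly the one vertex $w_1$. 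The final statement about $\gamma$-nodes then reads off directly from the definition of the certificate $S_{(x,y)}$.
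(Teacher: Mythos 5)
Your proposal is correct and follows essentially the same route as the paper's proof: the same product certificate $S_{(x,y)}=\{\gamma \mid \gamma\restr_{V(G_1)}\in S^1_x,\ \gamma\restr_{V(G_2)}\in S^2_y\}$, the same two-phase recoloring argument for connectedness, and the same treatment of both directions of the adjacency property, including the key observation that the label equalities $\ell_1(x)=\ell_2(y)$ and $\ell_1(x')=\ell_2(y')$ force the two sides to change the same terminal to the same color so the glued colorings are adjacent. No gaps.
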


\begin{figure}
\centering
\scalebox{0.5}{$\input{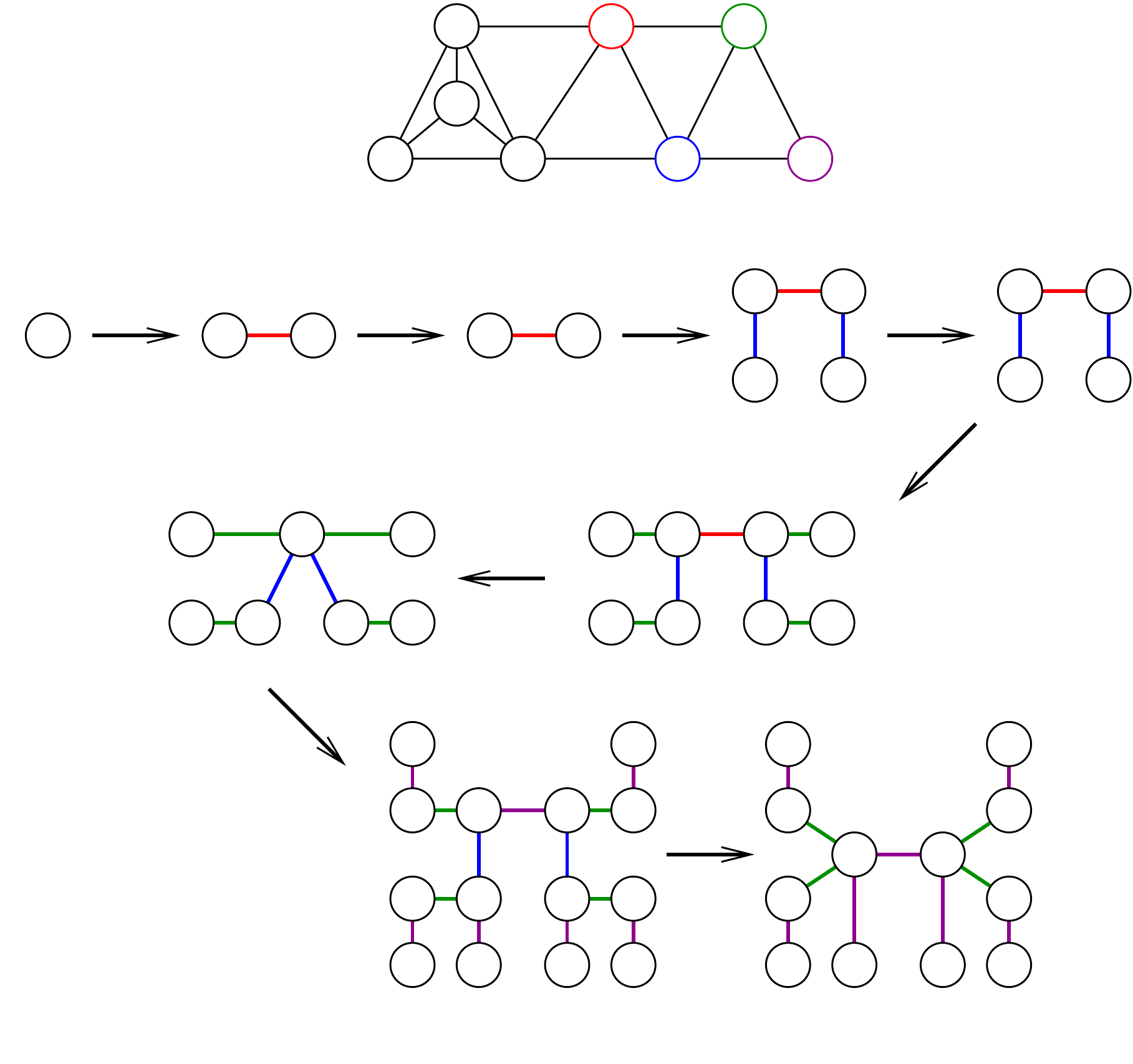_t}$}
\caption{An example of computing CSGs using forget and introduce operations. A 4-colorable 2-connected chordal 
graph $G$ with $V(G)=\{a,b,c,d,e,f,g,h\}$ is shown. 
Note that $G$ is in fact unit interval and isomorphic to 
the graph $G^I_8$ defined in 
Section~\ref{sec:DP}.
Starting with one component of the CSG $\C^c_4(G[\{a,b,c,d\}],\{c,d\})$,
the corresponding component of $\C^c_4(G,\{g,h\})$ is computed, using four forget and introduce operations. The $G[T]$-colorings in the node labels are given as sequences of colors for the ordered version of $T$
as indicated below each CSG.
For instance, for $T=(c,d)$, the node label $12$ indicates the coloring $\gamma$ with $\gamma(c)=1$ and $\gamma(d)=2$.}
\label{fig:unitintCSG}
\end{figure}

\noindent
{\bf Remark 1.} 
The DP rules in this section can be generalized further to capture the rules of~\cite{HIZ15} for
the {\em list coloring} generalization $\C_L$ of $\C_k$. In this generalization, an instance $G,L$ consists of a graph $G$ together with color lists $L(v)\subseteq \{1,\ldots,k\}$ for each $v\in V(G)$. Solutions are now list colorings, which are colorings $\alpha$ of $G$ such that $\alpha(v)\in L(v)$ for each $v\in V(G)$. Adjacency is defined as before. So the {\em list coloring solution graph} $\C_L(G,L)$ is an induced subgraph of $\C_k(G)$.
Hence, it is straightforward to generalize our DP rules to $\C_L$, namely by simply omitting all nodes that correspond to invalid vertex colors.

\section{Examples of Exponential Size CSGs}
\label{sec:badexamples}

In this section we further
illustrate the dynamic programming rules given in Section~\ref{sec:DP}
and show that  components of $\C^c_k(G)$ can grow exponentially, even if $G$ is a chordal graph and $k=4$.
Namely, when considering 4-colorable chordal graphs that may have cut vertices, it is easy to obtain CSGs that have exponentially large components: take $p$ copies of the graph shown in Figure~\ref{fig:cutvertCSG}(a), and identify the $g$-vertices of all of these graphs. Call the resulting graph $G^*_p$. The graph $G^*_2$ is shown in Figure~\ref{fig:bigstar}(a).

\begin{propo}
\label{propo:ex1}
For every integer $p\ge 1$, 
$\C^c_4(G^*_p,\{g\})$ has a component with $1+3\cdot 2^p$ nodes.
\end{propo}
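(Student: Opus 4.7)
The plan is to count label components directly in $\C_4(G^*_p)$ and then transfer the count to $\C^c_4(G^*_p,\{g\})$ via Proposition~\ref{propo:one}. Let $K_p$ be the set of $4$-colorings of $G^*_p$ in which, in every copy $i$, the vertices $a_i,b_i,c_i,d_i$ receive colors $4,3,2,1$. From Figure~\ref{fig:cutvertCSG}(c) these four vertices form a $K_4$ using all four colors, hence cannot be individually recolored, so $K_p$ is closed under recoloring moves. Inspecting the $12$ colorings of that component further reveals that in $G$ the vertex $g$ is adjacent only to $e$ and $f$: $g$ attains every color in $\{1,2,3,4\}$ but always differs from $e$ and $f$.

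First I would verify that $K_p$ is a single connected component of $\C_4(G^*_p)$. A three-phase canonicalization from any $\gamma\in K_p$ to a fixed reference coloring suffices: (i) recolor $g$ to color $1$, which is always feasible since every $e_i,f_i\in\{2,3,4\}$; (ii) with $g$ pinned at $1$, independently recolor each $(e_i,f_i)$ in each copy to a canonical pair, using that the six single-copy colorings with $g=1$ form a connected subgraph of $\C_4(G)$ (the unique label-$1$ node of Figure~\ref{fig:cutvertCSG}(e)) and that distinct copies interact only through $g$; (iii) recolor $g$ to its target value.

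Next I would count the label components of $K_p$ under the terminal projection $p(\gamma)=\gamma|_{\{g\}}$. Figure~\ref{fig:cutvertCSG}(e) records that in a single copy the $6$ colorings with $g=1$ form one label component, whereas for each $c\in\{2,3,4\}$ the $2$ colorings with $g=c$ form two disjoint singleton label components. In $G^*_p$ the $6^p$ colorings with $g=1$ assemble into a single label component, since per-copy $(e_i,f_i)$-changes preserve $g=1$ and the copies are otherwise independent. For each $c\in\{2,3,4\}$, a coloring in $K_p$ with $g=c$ is specified by one of $2$ independent choices per copy; since the two single-copy configurations are not mutually reachable while $g=c$ is fixed, the $2^p$ product colorings split into $2^p$ distinct label components. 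Summing yields $1+3\cdot 2^p$ label components, which by Proposition~\ref{propo:one} and the connectivity of $K_p$ correspond to a single connected component of $\C^c_4(G^*_p,\{g\})$.

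The main technical hurdle is the $2^p$-count for labels $2,3,4$: I must argue that $g=c$-preserving moves cannot mix independent per-copy choices. This reduces to the single-copy statement visible in Figure~\ref{fig:cutvertCSG}(e), because every move occurs within exactly one copy at a time (copies share only $g$), and within a copy the two label-$c$ nodes of the $7$-node CSG component are distinct and hence not connected by moves that fix $g$.
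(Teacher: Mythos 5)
Your proposal is correct, but it takes a genuinely different route from the paper. The paper proves the statement by induction on $p$ using the Join rule (Lemma~\ref{lem:recolJoin}): it verifies the base case $p=1$ from Figure~\ref{fig:cutvertCSG}(d)/(e) (a star with centre labelled $1$ and two leaves of each label $2,3,4$), and then shows that joining the star component of $\C^c_4(G^*_{p-1},\{g\})$ with that of $\C^c_4(G^*_1,\{g\})$ pairs the $2^{p-1}$ label-$j$ leaves with the two label-$j$ leaves of the single copy, yielding $2^p$ leaves per label, all adjacent only to the unique label-$1$ centre. You instead argue directly about label components of $\C_4(G^*_p)$: you identify the frozen $K_4$'s, show the relevant component $K_p$ has $6^p$ colorings with $g=1$ forming one label component and $2^p$ singleton label components for each $g=c\in\{2,3,4\}$ (since with $g$ fixed no $e_i,f_i$ can move), and transfer the count through the contraction. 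Interestingly, the paper explicitly remarks after Proposition~\ref{propo:ex2} that Propositions~\ref{propo:ex1} and~\ref{propo:ex2} are meant to illustrate the two different proof techniques --- DP rules versus direct reasoning about label components --- and your proof of Proposition~\ref{propo:ex1} is in the style the paper reserves for Proposition~\ref{propo:ex2}. What each buys: the paper's induction exercises the Join rule (the pedagogical point of the section) and sidesteps any explicit connectivity argument for $K_p$, whereas your direct argument is self-contained, exposes the exact internal structure (one $6^p$-sized label component plus $3\cdot 2^p$ singletons), and also yields the star shape with no extra work. One small caveat: your argument relies on reading $N(g)\subseteq\{e,f\}$ and the two-singletons-per-label fact off Figure~\ref{fig:cutvertCSG}, but that is on par with the paper's own reliance on the figure for the base case, so it is not a gap.
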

\begin{proof}
By induction over $p$ we prove the following: $\C^c_4(G^*_p,\{g\})$ has a component that is a star (a $K_{1,n}$ graph) in which the central node has label 1 (to be precise, this means that the label is a coloring that assigns color 1 to vertex $g$), and which has $2^p$ leaves with label $j$, for $j\in \{2,3,4\}$. 
The case $p=1$ can easily be verified; see also Figure~\ref{fig:cutvertCSG}. 
For the induction step, apply Lemma~\ref{lem:recolJoin} to the star components of $\C^c_4(G^*_{p-1},\{g\})$ and $\C^c_4(G^*_1,\{g\})$ given by the induction 
hypothesis:
for $j\in \{2,3,4\}$, the $2^{p-1}$ nodes with label~$j$ of the former graph are combined with two nodes with label~$j$ of the latter graph, giving $2^p$ new nodes with label~$j$. All of these are adjacent only to the unique new node with label 1.\qed 
\end{proof}

\begin{figure}
\centering
\scalebox{0.5}{$\input{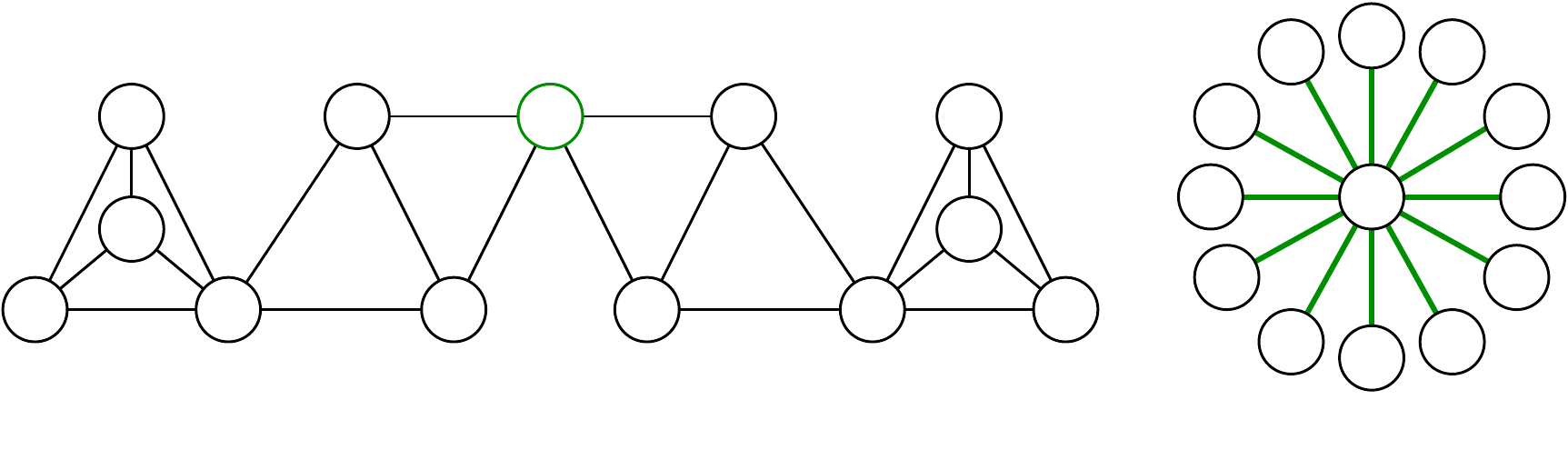_t}$}
\caption{(a) 
The graph $G^*_2$, which can be obtained using a join operation on two subgraphs isomorphic to the graph $G$ shown in Figure~\ref{fig:cutvertCSG}(a), with $T=\{g\}$. (b) One component of the CSG $\C^c_4(G^*_2,\{g\})$, which can be obtained by combining two copies of the CSG from Figure~\ref{fig:cutvertCSG}(d), as shown in Lemma~\ref{lem:recolJoin}.
}
\label{fig:bigstar}
\end{figure}

With a little more effort, we can also construct CSGs with exponentially large components when restricting to $(k-2)$-connected $k$-colorable chordal graphs, or even 2-connected 4-colorable unit interval graphs, as follows. 
For $p\ge 4$, let the graph $G^I_p$ have vertex set $\{v_0,\ldots,v_{p-1}\}$, and edge set $\{v_0v_3\} \cup \{v_iv_{i+1} \mid 0\le i\le p-2\} \cup \{v_iv_{i+2} \mid 0\le i\le p-3\}$. A graph isomorphic to $G^I_8$ is shown in Figure~\ref{fig:unitintCSG}.  
Note that each $G^I_p$ is unit interval.
For our proof 
we need the following simple observation (which has been used in various earlier papers on recoloring, such as~\cite{BC09}).

\begin{lem}
\label{propo:degeneracy}
Let $\alpha$ and $\beta$ be two $k$-colorings of a graph $G=(V,E)$, and let $v$ be a vertex of degree at most $k-2$. Then $\C_k(G)$ contains a path from $\alpha$ to $\beta$ if and only if $\C_k(G-v)$ contains a path from $\alpha\restr_{V\bs \{v\}}$ to  $\beta\restr_{V\bs \{v\}}$.
\end{lem}
Below we state our claim more precisely and give a proof of it as well.

\begin{propo}
\label{propo:ex2}
For $p=4q+4$ with $q\in \mathbb{N}$, the CSG $\C^c_4(G^I_p,\{v_{p-2},v_{p-1}\})$ has $4!$ components on at least $2^q$ nodes.
\end{propo}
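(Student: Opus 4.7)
The plan is to bound the number of components of the CSG and the size of each component separately. For the component count, I would invoke Proposition~\ref{propo:degeneracy} iteratively: in $G^I_p$ the vertex $v_{p-1}$ has only the two neighbors $v_{p-2},v_{p-3}$, so its degree is $2=k-2$, and removing it preserves the component structure of $\C_4$. Moreover $G^I_p-v_{p-1}=G^I_{p-1}$, so iterating this reduction down to $G^I_4=K_4$ shows that $\C_4(G^I_p)$ has as many components as $\C_4(K_4)$, which consists of $4!$ isolated nodes (no vertex in a $K_4$ is recolorable). By Proposition~\ref{propo:one}, the CSG $\C^c_4(G^I_p,\{v_{p-2},v_{p-1}\})$ also has exactly $4!$ components.

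For the lower bound on component size, fix a proper $4$-coloring $(c_0,c_1,c_2,c_3)$ of the $K_4$ induced by $\{v_0,v_1,v_2,v_3\}$. For each $b=(b_1,\dots,b_q)\in\{0,1\}^q$, I would construct an extension $\alpha_b$ of this $K_4$-coloring by assigning, to each block $B_j=\{v_{4j},v_{4j+1},v_{4j+2},v_{4j+3}\}$ with $j=1,\dots,q$, the color sequence $(c_0,c_1,c_2,c_3)$ if $b_j=0$ and $(c_1,c_0,c_2,c_3)$ if $b_j=1$. Each block ends in $(c_2,c_3)$ and begins with a color in $\{c_0,c_1\}$, and it is straightforward to check that the resulting $\alpha_b$ is a proper $4$-coloring of $G^I_p$.

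The crucial step is to show each $\alpha_b$ is \emph{rigid}, in the sense that no vertex $v_i\in V(G^I_p)\setminus\{v_{p-2},v_{p-1}\}$ can be recolored. For $v_0,v_1$, rigidity follows from their three $K_4$-neighbors using three distinct colors. For every $i\in\{2,\dots,p-3\}$, the vertex $v_i$ has exactly four neighbors $v_{i-2},v_{i-1},v_{i+1},v_{i+2}$ in $G^I_p$, and since $c_i$ is not among their colors, a short count shows that $v_i$ is recolorable if and only if $\{c_{i-2},c_{i-1}\}=\{c_{i+1},c_{i+2}\}$. I would verify that this equality never holds for $\alpha_b$ by casework on $i\bmod 4$: in each of the four residues the two pairs can be read off directly from the construction and are seen to differ on at least one color (for instance, when $i=4j$ the left pair is $\{c_2,c_3\}$ while the right pair is a two-element subset of $\{c_0,c_1,c_2\}$ avoiding $c_3$).

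Rigidity of $\alpha_b$ means no recoloring move is available that preserves the label on $\{v_{p-2},v_{p-1}\}$, so $\{\alpha_b\}$ is a singleton label component, hence a node of the CSG. Since the $2^q$ colorings $\alpha_b$ are pairwise distinct (they differ on any block $B_j$ where the $b_j$ values disagree), they give $2^q$ distinct CSG-nodes inside the component corresponding to $(c_0,c_1,c_2,c_3)$. Performing this construction for each of the $4!$ choices of $K_4$-coloring yields the desired bound. The main technical obstacle is the rigidity case analysis, but each of the four block-position cases reduces to a short direct check.
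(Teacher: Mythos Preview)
Your proposal is correct and follows essentially the same approach as the paper: construct $2^q$ explicit colorings (one per binary string, swapping the first two colors in each block), show each is a singleton label component by verifying that every non-terminal vertex sees three distinct colors among its neighbors, and use Proposition~\ref{propo:degeneracy} to identify the components of $\C_4(G^I_p)$ with the $4!$ colorings of the initial $K_4$. The paper states the rigidity claim in one sentence (``all other vertices have three distinctly-colored neighbors''), whereas you spell out the $i \bmod 4$ casework; this is the same argument in more detail.

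One small inaccuracy: your assertion that every $v_i$ with $2\le i\le p-3$ has \emph{exactly} four neighbors $v_{i-2},v_{i-1},v_{i+1},v_{i+2}$ fails for $i=3$, which is also adjacent to $v_0$ via the special edge $v_0v_3$ in the definition of $G^I_p$. This does not affect your conclusion, since an extra neighbor can only make recoloring harder, and your pair-comparison criterion applied to the four neighbors $v_1,v_2,v_4,v_5$ already yields rigidity at $v_3$.
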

\begin{proof}
For every set $S\subseteq \{1,\ldots,q\}$, we construct a coloring $\alpha_S$ of $G^I_p$ as follows. For all $j\in \{0,\ldots,q\}$:
\begin{itemize}
 \item $\alpha_S(v_{4j})=3$ if $j\in S$, and $\alpha_S(v_{4j})=4$ if $j\not\in S$.
 \item $\alpha_S(v_{4j+1})=4$ if $j\in S$, and $\alpha_S(v_{4j+1})=3$ if $j\not\in S$.
 \item $\alpha_S(v_{4j+2})=1$.
 \item $\alpha_S(v_{4j+3})=2$.
\end{itemize}
Observe that for every $S$, $\alpha_S$ is a 4-coloring of $G^I_p$. 
There are $2^q$ possible choices of $S$, and therefore $2^q$ such colorings $\alpha_S$. 
An induction proof based on Lemma~\ref{propo:degeneracy} shows that for every $S_1\subseteq \{1,\ldots,q\}$ and $S_2\subseteq \{1,\ldots,q\}$, $\C_4(G)$ contains a path from $\alpha_{S_1}$ to $\alpha_{S_2}$: informally, vertex $v_{p-1}$ has degree $2$ and is therefore irrelevant for the reachability question. After deleting $v_{p-1}$, $v_{p-2}$ has degree 2, and may be deleted next. Continuing this procedure ends with two 4-colorings of the complete graph on vertices $\{v_0,v_1,v_2,v_3\}$, which coincide.
That is, for every~$S$, $\alpha_S$ assigns the colors $4,3,1,2$ to the vertices $v_0,v_1,v_2,v_3$, respectively
(note that no vertex of  $\{v_0,v_1,v_2,v_3\}$ can be recolored).
It follows that all of the colorings $\alpha_S$ we constructed are part of the same component of $\C_4(G^I_p)$. Finally, we observe that every coloring $\alpha_S$ forms a one-node label component in $\C^c_4(G^I_p,\{v_{p-2},v_{p-1}\})$. Indeed, the only vertex that can be recolored in any $\alpha_S$ is the vertex $v_{p-1}$; all other vertices have three distinctly-colored neighbors.
Summarizing, $\C^c_4(G^I_p,\{v_{p-2},v_{p-1}\})$ contains a component that contains at least $2^q$ nodes  that are labeled with a coloring that assigns colors 1 and 2 to vertices $v_{p-2}$ and $v_{p-1}$, respectively. 

For every 4-coloring of $G[\{v_0,v_1,v_2,v_3\}]$, the CSG contains a component isomorphic to the component considered above, so there are $4!$ components of this type.\qed
\end{proof}

The last CSG shown in Figure~\ref{fig:unitintCSG} contains two nodes with label $12$; these correspond to the colorings $\alpha_{\emptyset}$ and $\alpha_{\{1\}}$ constructed in the above proof. The CSG shows that any recoloring sequence between these two colorings needs to recolor the vertices $v_{p-2}=g$ and $v_{p-1}=h$ at least 
two resp. three times.
We remark that the proofs of Propositions~\ref{propo:ex1} 
and~\ref{propo:ex2}
illustrate different proof techniques for CSGs: one uses the dynamic programming rules, and the other argues about label components of the solution graph directly.
Both examples show that we need to do more than only computing CSGs to solve the problem for $(k-2)$-connected chordal graphs in polynomial time. 
In the next section we will characterize the CSGs and show that it suffices to compute only a part of them.

\section{Recoloring Chordal Graphs}
\label{sec:chordalgraphs}

In this section we will show that \CSG s can be used to efficiently decide the {\sc $\C_k$-Reachability} problem for $(k-2)$-connected chordal graphs. Recall that a graph is chordal if it contains no induced cycle of length greater than~3. 
To prove the result, we use the fact that for a chordal graph $G$ and any clique~$T$ of~$G$, the terminal graph $(G,T)$  can recursively be constructed from simple cliques using a polynomial number of clique-based introduce, forget and join operations. 
In order to do this we first define the notion of a nice tree decomposition for terminal graphs in  Section~\ref{ssec:nice1}.
In the same section we show an upper bound on the size of an arbitrary nice tree decomposition for a terminal graph.
Afterwards we make the above fact for chordal graphs precise in Section~\ref{ssec:Nicetds}, namely by defining chordal nice tree decompositions for terminal graphs.
We remark that some of our statements are similar to (and can alternatively be deduced from) well-known facts about tree decompositions~\cite{Di10} and nice tree decompositions~\cite{Kl94} for graphs.
However, for readability, and since we need to prove an upper bound on the size of a nice tree decomposition for a terminal graphs, we give a self-contained presentation.

\subsection{Nice Tree Decompositions for Terminal Graphs}
\label{ssec:nice1}

Nice tree decompositions
describe how a terminal graph $(G,T)$ can be obtained from trivial graphs using forget, introduce and join operations.
A {\em nice tree decomposition} of a terminal graph $(G,T)$
(where $G$ is an arbitrary graph, not necessarily chordal, and $T$ may not be a clique) 
is a tuple $(\TTT,X,r)$, 
where
$\TTT$ is a tree with root $r$ and $X$ is an assignment of {\em bags} $X_u\subseteq V(G)$ for each $u\in V(\TTT)$ that can be defined recursively as follows:
\begin{itemize}
 \item [(1)] If $T=V(G)$, then the tree $\TTT$ consists of one (root) node $r$ with bag $X_r=T$.
 \item [(2)] If  $v\in V(G)\bs T$
 and $(\TTT',X,r')$ is a nice tree decomposition of $(G,T\cup \{v\})$, then a nice tree decomposition for $(G,T)$ can be obtained by adding a new root $r$ with $X_r=T$ and adding the edge $rr'$.
 \item [(3)] If $(G,T)$ can be obtained from $(G-v,T\bs \{v\})$ using an introduce operation and $(\TTT',X,r')$ is a nice tree decomposition of $(G-v,T\bs \{v\})$, then a nice tree decomposition for $(G,T)$ can be obtained by adding a new root $r$ with $X_r=T$ and adding the edge $rr'$.
 \item [(4)] If $(G,T)$ can be obtained from $(G_1,T)$ and $(G_2,T)$ using a join operation, and $(\TTT_1,X,r_1)$ and $(\TTT_2,X,r_2)$ are nice tree decompositions of $(G_1,T)$ and $(G_2,T)$, then a nice tree decomposition for $(G,T)$ can be obtained by adding a new root $r$ with $X_r=T$ and adding edges $rr_1$ and $rr_2$. 
\end{itemize}
We call a node $u\in V(\TTT)$ a {\em leaf}, {\em forget node}, {\em introduce node} or {\em join node} if 
$u$
is added as the root in case (1), (2), (3) or (4), respectively. 
The {\em width} of $(\TTT,X,r)$ is $\max_{u\in V(\TTT)} |X_u|-1$. 

We will need the following lemma.

\begin{lem}
\label{lem:SizeNiceTreedecomp}
Let $(\TTT,X,r)$ be a nice tree decomposition of a terminal graph~$(G,T)$ of width at most 
$w\geq 1$, and let $n=|V(G)|\geq 1$.
Then $|V(\TTT)|\le (w+4)n$.
\end{lem}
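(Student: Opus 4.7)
I would count the four kinds of tree nodes separately and then combine the counts. Let $\ell, f, i, j$ denote the numbers of leaf, forget, introduce, and join nodes of $\TTT$, so $|V(\TTT)| = \ell + f + i + j$. Join nodes are the only nodes of $\TTT$ with two children (leaves have none and forget/introduce nodes have exactly one), so the standard tree-counting identity gives $\ell = j + 1$. A familiar subtree property of tree decompositions shows that for each $v \in V(G)$ the set of tree nodes whose bag contains $v$ is a subtree of $\TTT$ whose topmost element is either the root $r$ or a forget node that forgets $v$; hence each vertex is forgotten at most once, and $f = n - |X_r| \le n$.

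The key combinatorial step is to establish $j \le n - 1$. To do so, I would introduce the potential $\varphi(u) := |V(G_u)| - |X_u|$, where $G_u$ is the terminal graph at node $u$. Checking the four recursive cases shows that $\varphi = 0$ at leaves, grows by one at every forget, is unchanged at every introduce, and is additive at joins: $\varphi(u) = \varphi(u_1) + \varphi(u_2)$; telescoping yields $\varphi(r) = f$. Crucially, the join condition $V(G_i) \neq X_u$ forces $\varphi(u_i) \ge 1$ at every child of a join. Now for each leaf $l$, consider the unique path in $\TTT$ from $l$ upward to its nearest ancestor join; its interior consists solely of forget/introduce nodes. Since $\varphi$ must grow from $0$ at $l$ to at least $1$ by the child of the join, this path must contain at least one forget. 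The $\ell = j + 1$ such paths (one per leaf) are pairwise disjoint, because distinct leaves lie in distinct child-subtrees of any common ancestor join. Hence $f \ge j + 1$, which combined with $f \le n$ gives $j \le n - 1$.

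To bound $i$, I would derive the arithmetic identity
\[
\sum_{l\text{ leaf}} |V(G_l)| \;+\; i \;=\; \sum_{u\text{ join}} |X_u| \;+\; n,
\]
obtained by telescoping $|V(G_u)|$ from the leaves to the root using the update rules (leaf: $|V|=|X|$; forget: unchanged; introduce: $+1$; join: $|V_1|+|V_2|-|X|$) together with $|V(G_r)| = n$. Assuming, as is standard in the literature on nice tree decompositions, that every leaf bag is nonempty (an empty leaf together with its forced initial introduces can be merged into a single nonempty leaf without enlarging $\TTT$), the left-hand side is at least $\ell = j+1$, while $|X_u| \le w+1$ at each join yields $\sum_{u\text{ join}} |X_u| \le j(w+1)$. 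Hence $i \le j(w+1) + n - (j+1) = jw + n - 1$, and combining everything,
\[
|V(\TTT)| = (j+1) + f + i + j \;\le\; (j+1) + n + (jw + n - 1) + j \;=\; j(w+2) + 2n \;\le\; (n-1)(w+2) + 2n \;=\; (w+4)n - (w+2),
\]
which is at most $(w+4)n$. The principal obstacle is the combinatorial bound $j \le n - 1$; once the potential/path-disjointness argument delivers it, the remainder is routine arithmetic.
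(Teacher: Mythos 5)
Your proof is correct, but it takes a genuinely different route from the paper's. The paper argues by induction on the recursive construction of the nice tree decomposition, establishing the sharper invariant $|V(\mathcal{T})|\le 2n-t+(w+2)\max\{0,n-t-1\}$ (with $t=|T|$) and verifying it separately in the leaf, forget, introduce and join cases; the lemma then follows because this quantity is at most $(w+4)n$. You instead give a global counting argument: the tree identity $\ell=j+1$, the observation that each vertex is forgotten at most once (so $f\le n$), the potential $\varphi(u)=|V(G_u)|-|X_u|$ forcing a forget node in the interior of each leaf-to-nearest-join path (so $j\le n-1$), and a telescoping identity to bound $i$. Your route makes transparent which structural feature ``pays'' for each node type and arrives at essentially the same bound $(w+4)n-(w+2)$; the paper's induction is shorter to verify and gives a bound that improves with $|T|$. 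Three small points of precision. First, your inequality $f\ge j+1$ fails in the degenerate case $j=0$ (a single descending path with no forgets can have $f=0$), but there $j\le n-1$ holds trivially since $n\ge 1$, so the conclusion survives. Second, the nonemptiness of leaf bags that you assume is automatic in this paper's setting: the introduce operation requires $T\ne V(G)$, so every terminal graph occurring in the decomposition is nonempty once $n\ge 1$, and hence every leaf satisfies $|V(G_l)|=|X_l|\ge 1$. Third, the $\ell$ leaf-to-join paths are not literally pairwise disjoint (two leaves in different subtrees of the same join share that join as an endpoint); what your argument needs, and what your lowest-common-ancestor observation actually delivers, is that their \emph{interiors} -- which is where the forget nodes live, since neither a leaf nor a join is a forget node -- are pairwise disjoint.
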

\begin{proof}
Let $t=|T|$.
We use induction over $|V(\TTT)|$ to prove that $$|V(\TTT)|\le 2n-t+(w+2)(\max\{0,n-t-1\}).$$
Then, since this value is at most $(w+4)n$, the lemma statement follows. 

\medskip
\noindent
Let $|V(\TTT)|=1$. 
Then
the root $r$ of $\TTT$ is a leaf (so $T=V(G)$ and $t=n$).
Hence, we have that
$$|V(\TTT)|=1\le n=2n-t+(w+2)(\max\{0,n-t-1\}).$$
Let $|V(\TTT)|\geq 1$. Then the root~$r$ is either a forget, introduce or join node. We consider each of these cases below.\\
If $r$ is a forget node then by induction, after adding 
the new root to the tree,
the number of nodes is at most:
\[
1 + 2n - (t+1) + (w+2)(\max\{0,n-(t+1)-1\})\ \le\ 
2n-t + (w+2)(\max\{0,n-t-1\}).
\]
If $r$ is an introduce node then by induction, after adding
the new root to the tree,
the number of nodes is at most:
\[
1 + 2(n-1)-(t-1) + (w+2)(\max\{0,n-1-(t-1)-1\}) = 
\]
\[
2n-t + (w+2)(\max\{0,n-t-1\}).
\]
Finally, suppose that $r$ is a join node and that $(G,T)$ is obtained by joining together graphs on $n_1$ and $n_2=n-n_1+t$ nodes. 
From the definition of the join operation it follows that both of these values are strictly larger than $t$, so we may write $\max\{0,n_1-t-1\}=n_1-t-1$ and  $\max\{0,n_2-t-1\}=n_2-t-1=n-n_1-1$.
Then by induction, after adding 
the new root,
the number of nodes is at most:
\[
1\  +\  2n_1-t + (w+2)(n_1-t-1)\ +\ 2(n-n_1+t) -t + (w+2)(n-n_1-1)\ =
\]
\[
1 + 2n + (w+2)(n-t-2)\ 
\le\ 
2n - t + (w+2)(n-t-1).
\]
For the last step, we used that $t\le (w+1)$. 
\qed\end{proof}
 
\subsection{Chordal Nice Tree Decompositions for Terminal Graphs}
\label{ssec:Nicetds} 

A nice tree decomposition $(\TTT,X,r)$ of $(G,T)$ is {\em chordal} if for every node $u\in V(\TTT)$, $X_u$ is a clique 
of~$G$. 
Note that,
if $(\TTT,X,r)$ is a chordal nice tree decomposition of a $k$-colorable graph~$G$, then 
the width of $(\TTT,X,r)$ is at most $k-1$. 
Hence, Lemma~\ref{lem:SizeNiceTreedecomp} shows that any chordal nice tree decomposition of a $k$-colorable graph has at most $(k+3)n$ nodes.

In order to show how to find a chordal nice tree decomposition in polynomial time we need the following lemma, which tells us how to select the proper type of root node when constructing such a tree decomposition.
Here, a terminal graph $(G_1,T_1)$ is called {\em smaller} than another terminal graph $(G_2,T_2)$ if 
$2|V(G_1)|-|T_1| < 2|V(G_2)|-|T_2|$.

\begin{lem}
\label{lem:ChordalNiceTreedecomp}
Let $(G,T)$ be a terminal graph where $G=(V,E)$ is a chordal graph, and $T$ is a clique with $T\not=V$.
If $G-T$ is disconnected, then $(G,T)$ can be obtained from a pair of smaller chordal terminal graphs $(G_1,T)$ and $(G_2,T)$ using a join operation.  Otherwise, $(G,T)$ can be obtained from a smaller chordal terminal graph $(G',T')$ using either a forget or introduce operation, where $T'$ is a clique. 
For every such $(G,T)$, the relevant operation and subgraph(s) can be found in polynomial time.
\end{lem}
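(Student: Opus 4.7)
The plan is to handle three exhaustive cases based on the structure of $G$ near $T$. First, if $G-T$ is disconnected, let $U_1$ be the vertex set of one component of $G-T$ and $U_2 = V\bs(T\cup U_1)$, and set $G_1 = G[T\cup U_1]$ and $G_2 = G[T\cup U_2]$. Then $(G,T)$ is the join of $(G_1,T)$ and $(G_2,T)$: the vertex-set conditions from the join definition are immediate, and no edge of $G$ can connect different components of $G-T$, so every edge sits inside some $G_i$. Both $G_i$ are chordal as induced subgraphs of $G$, and since each $U_i$ is a proper subset of $V\bs T$, both resulting terminal graphs are strictly smaller than $(G,T)$.

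When $G-T$ is connected, the plan is to split into two subcases. If some $v\in T$ satisfies $N_G(v)\subseteq T$, then $(G,T)$ is obtained from $(G-v,T\bs\{v\})$ by introducing $v$; the subgraph $G-v$ is chordal, $T\bs\{v\}$ is a clique (as a subset of $T$), and the parameter $2|V|-|T|$ decreases by one. Otherwise, every vertex of $T$ has a neighbor outside $T$, and the claim is that there must exist some $v\in V\bs T$ adjacent to every vertex of $T$; the forget operation then produces $(G,T)$ from $(G,T\cup\{v\})$, whose terminal set is a clique by choice of $v$, and the same parameter again strictly decreases.

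The main obstacle is establishing existence of $v$ in this last subcase. I plan to argue by contradiction using a clique tree $\tau$ of $G$, whose nodes are the maximal cliques of $G$ and in which, for every vertex $u\in V$, the maximal cliques containing $u$ induce a subtree of $\tau$. Suppose no such $v$ exists; then $T$ is itself a maximal clique, and hence a node of $\tau$. It cannot be the only node, otherwise $G=T$, contradicting $T\neq V$. Suppose $T$ has two distinct neighbors $T_1, T_2$ in $\tau$, and let $\tau_1,\tau_2$ be the components of $\tau-T$ containing them. For each $u\in V\bs T$, the subtree of $u$ avoids $T$ and is connected, hence lies in one component of $\tau-T$, say $\tau_{c(u)}$. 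Since each $T_i$ is a maximal clique distinct from $T$, $T_i\bs T$ is nonempty and every component of $\tau-T$ contains at least one vertex of $V\bs T$. But vertices $u,w\in V\bs T$ with $c(u)\neq c(w)$ have disjoint subtrees and thus share no maximal clique, so they are non-adjacent in $G$; this partitions $V\bs T$ into at least two mutually non-adjacent groups, contradicting connectivity of $G-T$. Hence $T$ must be a leaf of $\tau$ with unique neighbor $T'$. Because $T$ and $T'$ are distinct maximal cliques, $T\bs T'$ is nonempty; for any $t\in T\bs T'$, the only way to extend its subtree beyond $T$ would be through $T'$, which is excluded, so $t$ lies in the single maximal clique $T$, forcing $N_G(t)\subseteq T$ and contradicting the subcase hypothesis.

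Recognizing the applicable case and exhibiting the corresponding operation can all be done in polynomial time: connectivity of $G-T$ is tested by one BFS, the scan for $v\in T$ with $N_G(v)\subseteq T$ and the scan for $v\in V\bs T$ adjacent to every vertex of $T$ both run in $O(|T|(|V|+|E|))$ time. The clique tree itself is needed only in the correctness argument; the algorithm simply searches directly for the operation vertex, whose existence is guaranteed by the argument above.
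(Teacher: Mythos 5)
Your proposal is correct and follows the same three-way case split as the paper (join when $G-T$ is disconnected; introduce when some terminal has all its neighbours inside $T$; forget otherwise), but the crucial existence claim --- that when $G-T$ is connected and every terminal has a neighbour outside $T$, some vertex of $V\setminus T$ is adjacent to all of $T$ --- is established by a genuinely different argument. The paper proves it from first principles: it picks $u\in V\setminus T$ with the maximum number of neighbours in $T$, assumes $u$ misses some terminal, walks along a shortest path in $G-T$ to a vertex seeing a terminal that $u$ does not, and extracts an induced cycle of length at least $4$, contradicting chordality. You instead invoke the clique-tree characterization of chordal graphs: assuming no such vertex exists forces $T$ to be a maximal clique, hence a node of the clique tree $\tau$, and you rule out degree at least $2$ (it would disconnect $G-T$, since vertices whose clique subtrees lie in different components of $\tau-T$ are non-adjacent) and degree $1$ (a vertex of $T\setminus T'$ would then lie only in the maximal clique $T$, so all its neighbours would be in $T$, contradicting the subcase hypothesis). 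Both arguments are sound; yours is shorter once the clique-tree machinery is granted, whereas the paper's is elementary and self-contained, which fits its stated aim (it deliberately avoids importing structural results on tree decompositions, cf.\ the remark at the start of Section~6 that these facts ``can alternatively be obtained using known results''). Your observation that the algorithm never needs to build the clique tree --- it just scans for the relevant vertex --- keeps the polynomial-time claim intact. One small point worth making explicit: the clique tree exists because in the subcase where you use it, $G$ is necessarily connected ($G-T$ is connected and nonempty, and every terminal has a neighbour in it); also, the argument tacitly uses that every edge of $G$ lies in some maximal clique, which is what makes vertices with disjoint clique subtrees non-adjacent.
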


\begin{proof}
If $G-T$ is disconnected, then let $C$ be the vertex set of a component of $G-T$, and consider the two graphs $G_1=G[T\cup C]$ and $G_2=G[V\bs C]$. Then $(G,T)$ is the join of $(G_1,T)$ and $(G_2,T)$ with $|V(G_1)|-|T|<|V|-|T|$ and $|V(G_2)|-|T|<|V|-|T|$. 

Next assume $G-T$ is connected. If $T$ contains a vertex $v$ that has no neighbors in $G-T$, then $(G,T)$ can be obtained from $(G-v,T\bs \{v\})$ using an introduce operation, and 
$2|V(G-v)|-|T\bs\{v\}|=2|V|-|T|-1$.

Finally, assume that $G-T$ is connected and every vertex in $T$ is adjacent to at least one vertex in $G-T$. Then we prove that there exists a vertex $u\in V\bs T$ that is adjacent to every vertex in $T$. Let $u\in V\bs T$ be a vertex with a maximum number of neighbors in $T$. 
Suppose for a contradiction that at least one vertex in $T$ is not adjacent to $u$. 
Then we can choose  
a {\em shortest} path $P$ in $G-T$ from $u$ to a vertex $v$ with $T\cap (N(v)\bs N(u))\not=\emptyset$. (Such a $v$ and $P$ exist because every vertex in $T$ has a neighbor outside of $T$ and $G-T$ is connected.) Let $w$ be the {\em last} vertex on $P$ (when going from $u$ to $v$) with $T\cap (N(w)\bs N(v))\not=\emptyset$. Since $u$ satisfies this condition (because it has a maximum number of neighbors in $T$), such a vertex $w$ exists. 
Now we have chosen distinct vertices $w$ and $v$ such that there exists a path $P'$ between them (namely the sub path of $P$ from $w$ to $v$) with the following property: for all internal vertices $x$ of $P'$, $T\cap N(x)\subseteq T\cap N(u)\cap N(v)$ holds. 
Choose $y\in T\cap (N(v)\bs N(u)$ and $z\in T\cap (N(w)\bs N(v))$ (by the choice of $v$ and $w$, such vertices exist). 
Now combining the path $P'$ from $w$ to $v$ with the edges $vy$, $yz$ and $wz$ yields an induced cycle in $G$ of length at least~4, contradicting that $G$ is chordal. 

We conclude that there exists a vertex $u\in V\bs T$ with $T\subseteq N(u)$. So $(G,T)$ can be obtained from $(G,T\cup \{u\})$ by a forget operation, such that $T\cup \{u\}$ is a clique in $G$, and $2|V|-|T\cup \{u\}|=2|V|-|T|-1$.

The above case study can easily be translated to a polynomial-time algorithm for finding the graph operation that applies.
\qed
\end{proof}

We are now ready to state the following result.

\begin{corol}
\label{corol:chordalNiceTreedecomp}
Let $G$ be a chordal $k$-colorable graph on $n$ vertices, and let $T$ be a clique of $G$. Then it is possible to find a chordal nice tree decomposition of $(G,T)$ on at most $(k+3)n$ nodes in polynomial time.
\end{corol}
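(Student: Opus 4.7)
The plan is to apply Lemma~\ref{lem:ChordalNiceTreedecomp} recursively to construct a chordal nice tree decomposition from the root down, and then invoke Lemma~\ref{lem:SizeNiceTreedecomp} to bound its size. Concretely, I would define a recursive procedure that takes as input a pair $(G',T')$ in which $G'$ is a chordal graph and $T'\subseteq V(G')$ is a clique of $G'$, and returns a chordal nice tree decomposition rooted at some node $r$ with $X_r=T'$. If $T'=V(G')$, output a single leaf node $r$ with $X_r=T'$, which is valid by case~(1) of the definition in Section~\ref{ssec:nice1}. Otherwise, apply Lemma~\ref{lem:ChordalNiceTreedecomp} to decide in polynomial time whether $(G',T')$ arises from a smaller chordal terminal graph via a forget or introduce operation, or from two smaller chordal terminal graphs via a join operation, and to obtain the corresponding smaller instance(s); recurse on each, and glue the returned decompositions together with a fresh root $r$ carrying $X_r=T'$, as specified by cases~(2)--(4).

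The correctness argument is an induction on the measure $2|V(G')|-|T'|$ used in Lemma~\ref{lem:ChordalNiceTreedecomp}: every recursive subcall strictly decreases this measure, so the recursion terminates, and the produced tree matches cases~(1)--(4) of the definition by construction. The crucial invariant to maintain is that the terminal set remains a clique in every subcall; Lemma~\ref{lem:ChordalNiceTreedecomp} explicitly guarantees this, since the forget step passes from $T'$ to $T'\cup\{u\}$ with $u$ adjacent to all of $T'$, the introduce step passes from $T'$ to $T'\setminus\{v\}$, and the join step keeps $T'$ unchanged on both sides. Hence every bag $X_u$ of the produced tree is a clique of $G$, which is precisely the chordality condition.

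To bound the size, note that since $G$ is $k$-colorable and each bag is a clique of $G$, every bag contains at most $k$ vertices, so the width is at most $k-1$. Lemma~\ref{lem:SizeNiceTreedecomp} with $w=k-1$ then yields at most $(w+4)n=(k+3)n$ nodes. Polynomial running time is immediate: the number of recursive calls equals the number of nodes in the output tree, hence at most $(k+3)n$, and each call performs only the polynomial work guaranteed by Lemma~\ref{lem:ChordalNiceTreedecomp}. I do not expect any serious obstacle: the main subtlety is simply maintaining the clique invariant so that the base case triggers on a complete graph of size at most $k$, where case~(1) applies cleanly. The corollary is essentially an inductive assembly of the two preceding lemmas.
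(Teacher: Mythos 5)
Your proposal is correct and follows essentially the same route as the paper's proof: recursively apply Lemma~\ref{lem:ChordalNiceTreedecomp} to select the root operation and smaller instance(s), observe that every bag stays a clique and hence has size at most $k$ by $k$-colorability, and invoke Lemma~\ref{lem:SizeNiceTreedecomp} with $w=k-1$ to get the $(k+3)n$ bound and polynomial running time. Your write-up is in fact slightly more explicit than the paper's about the termination measure and the clique invariant, but the argument is the same.
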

\begin{proof}
Lemma~\ref{lem:ChordalNiceTreedecomp} shows how we can choose the proper type of root node. We can build the
chordal
 nice tree decomposition by adding this node to the tree decomposition(s) of (a) smaller graph(s). 
The entire chordal nice tree decomposition is constructed by continuing this process recursively. Lemma~\ref{lem:SizeNiceTreedecomp} shows that the resulting chordal nice tree decomposition has at most $(w+4)n$ nodes, where $w+1$ is the maximum bag size. Since every bag is a clique of $G$ and the graph is $k$-colorable, we have $w+1\le k$, so there are at most $(k+3)n$ nodes. 
Since we have a polynomial number of nodes, and for every node we spend polynomial time (Lemma~\ref{lem:ChordalNiceTreedecomp}), the entire process terminates in polynomial time. 
\qed\end{proof}

We note that the precise complexity bound 
in Corollary~\ref{corol:chordalNiceTreedecomp}
depends on implementation details, which are beyond the scope of this paper.

\subsection{The Structure of CSGs for $(k-2)$-Connected Chordal Graphs}\label{ssec:invariant}

Using an inductive proof based on Lemma~\ref{lem:ChordalNiceTreedecomp}, we will now characterize the shape of CSGs for $(k-2)$-connected $k$-colorable chordal graphs.
We start with a lemma that we will apply to $(k-2)$-connected $k$-colorable chordal graphs in our induction proofs.
\begin{lem}
\label{propo:connectedness}
Let $G$ be a $\ell$-connected chordal graph, and let $T$ be a clique of $G$ with $T\not=V(G)$. 
If $(G,T)$ can be obtained from $(G-v,T\bs \{v\})$ using an introduce operation, then $|T|\ge \ell+1$ and $G-v$ is $\ell$-connected.
If $(G,T)$ can be obtained from $(G_1,T)$ and $(G_2,T)$ using a join operation, then $|T|\ge \ell$ and both $G_1$ and $G_2$ are $\ell$-connected.
\end{lem}
\begin{proof}
If $(G,T)$ is obtained from $(G-v,T\bs \{v\})$ using an introduce operation, then $N(v)\subseteq T\bs \{v\}$ 
by definition. 
Since $T\not=V(G)$, it follows that $T\bs \{v\}$ is a vertex cut of $G$ that separates $v$ from at least one other vertex, so $|T|=|T\bs \{v\}|+1\ge \ell+1$. In addition, since $T$ is a clique of $G$, every vertex cut of $G-v$ is also a vertex cut of $G$, and therefore $G-v$ is $\ell$-connected.

If $(G,T)$ is obtained from $(G_1,T)$ and $(G_2,T)$ using a join operation, then 
$T$ is a vertex cut of $G$ that separates $V(G_1)\bs T$ from $V(G_2)\bs T$, so $|T|\ge \ell$. 
In addition, since $T$ is a clique of $G$, every vertex cut of $G_i$ is a vertex cut of $G$ (for $i=1,2$),
and therefore $G_1$ and $G_2$ are $\ell$-connected.
\qed\end{proof}

We need some extra definitions.
For integers $m,k$ with $1\le m\le k$, a labeled graph $(H,\ell)$ is an {\em $(m,k)$-color-complete graph} if there exists a set $T$ with $|T|=m$ such that:
\begin{itemize}
 \item for all vertices $v\in V(H)$, $\ell(v)$ is a $k$-coloring of a complete graph on vertex set~$T$,  
 \item every such $k$-coloring of $T$ appears at exactly one vertex of $H$, and
 \item two vertices of $H$ are adjacent if and only if their labels differ on exactly one element of~$T$. 
\end{itemize}
From this definition it follows that for every pair of integers $m$ and $k$, there is a unique $(m,k)$-color complete graph, up to the choice of 
$T$. 
An $(m,k)$-color-complete graph has $k!/(k-m)!$ vertices (this is the number of ways to $k$-color a complete graph on $m$ vertices), and every vertex has degree $m(k-m)$. In particular, if $m=k$ then the graph consists of $k!$ isolated vertices (meaning that the graph is a forest). 
A labeled graph $(H,\ell)$ is said to satisfy the {\em injective neighborhood property} (INP) if for every vertex $u\in V(H)$ and every pair of distinct neighbors $v,w\in N(u)$, it holds that $\ell(v)\not=\ell(w)$. 
Note that $(m,k)$-color-complete graphs trivially satisfy the INP.

We will now show that for the graphs we consider, the following invariant is maintained by introduce, forget and join operations: 

\medskip
\noindent
{\it The CSG is an $(m,k)$-color complete graph, or a forest that satisfies the INP.}

\medskip
\noindent
Note that a $(k,k)$-color complete graph is trivially a forest that satisfies the INP.
We start with the trivial observation that this invariant initially holds.

\begin{lem}
\label{propo:invariantLeaf}
Let $G=(V,E)$ be a complete graph on $m$ vertices, with $m\le k$. 
Then $\C^c_k(G,V)$ is an $(m,k)$-color complete graph.
\end{lem}

We now prove that a {\em forget} operation maintains the invariant (below, we argue that all the relevant cases are covered by the next lemma).
Recall that a label $\ell(u)$ of a node $u$ of $\C^c_k(G,T)$ is a coloring of $G[T]$, so by $\ell(u)(x)$ we denote the color that $x\in T$ receives in this coloring.

\begin{lem}
\label{lem:invariantForget}
Let $G$ be a 
$k$-colorable chordal graph and let $T$ be a clique of $G$ with $k-1\le |T|$, and $v\in T$. 
If $\C^c_k(G,T)$ is a $(k-1,k)$-color complete graph, then $\C^c_k(G,T\bs \{v\})$ is a $(k-2,k)$-color complete graph. 
If $\C^c_k(G,T)$ is a forest that satisfies the INP, then $\C^c_k(G,T\bs \{v\})$ is a forest that satisfies the INP. 
\end{lem}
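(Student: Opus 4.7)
The plan is to invoke the forget operation of Lemma~\ref{lem:recolForget} (label restriction followed by iterative edge contraction) and analyze each of the two cases separately via a careful combinatorial count on $k$-colorings of the clique $T$.

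For Case 1, the $(k-1,k)$-color complete graph has $k!$ nodes, one per $k$-coloring of the $(k-1)$-clique $T$. When labels are restricted to the $(k-2)$-clique $T\setminus\{v\}$, each coloring of $T\setminus\{v\}$ arises from exactly two $T$-colorings (the two colors not used on $T\setminus\{v\}$), and these two $T$-colorings differ only on $v$, hence are CSG-adjacent and get identified by contraction. The result has $k!/2$ nodes, one per coloring of $T\setminus\{v\}$. For the surviving edges, the forward direction is immediate: any surviving edge connects labels whose restrictions differ on exactly one vertex of $T\setminus\{v\}$. For the converse, given restricted labels $c_1,c_2$ differing at a single vertex $u\in T\setminus\{v\}$, the union of their color images has size $k-1$, leaving a unique color $a$ available to assign to $v$ so that both $c_1\cup(v\mapsto a)$ and $c_2\cup(v\mapsto a)$ are proper colorings of $T$; these form a CSG-edge in $H$ that survives contraction, yielding the required adjacency in $H'$.

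For Case 2, that $H'$ is a forest is immediate, since edge-contracting a forest yields a forest. For INP, proceed by contradiction. Suppose some node $z$ of $H'$ has two distinct neighbors $x,y$ with $\ell'(x)=\ell'(y)=c'$. Lift $x,y,z$ to the underlying connected contracted subsets $S_x,S_y,S_z$ of $H$, pairwise disjoint, and choose witnessing $H$-edges $pq$ and $p'q'$ with $p\in S_x$, $q,q'\in S_z$, $p'\in S_y$. The crucial structural observation is $|S_z|\leq 2$: every node of $S_z$ has full label $\ell'(z)\cup(v\mapsto c)$ with $c$ among the $k-|T\setminus\{v\}|\leq 2$ colors missing from $\ell'(z)$, and INP of $H$ rules out any $S_z$-node having two $S_z$-neighbors (such neighbors would be forced to share the unique remaining $v$-color, hence the full label). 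Since $H$ is a forest, the unique $H$-path from $p$ to $p'$ through $S_z$ has length $n\in\{2,3\}$.

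When $n=2$, the path is $p-u_1-p'$: INP of $H$ at $u_1$ forces $\ell(p)\neq\ell(p')$, yet CSG edges differ on exactly one vertex of $T$ (so $\ell(p),\ell(u_1)$ must agree on $v$ since their restrictions already differ on $T\setminus\{v\}$), giving $\ell(p)(v)=\ell(u_1)(v)=\ell(p')(v)$; combined with equal restrictions this yields $\ell(p)=\ell(p')$, a contradiction. When $n=3$ the path is $p-u_1-u_2-p'$ with $u_1u_2\in E(H)$, so $\ell(u_1)(v)\neq\ell(u_2)(v)$ (same restriction, one-vertex difference), while the same edge argument gives $\ell(p)(v)=\ell(u_1)(v)$ and $\ell(p')(v)=\ell(u_2)(v)$. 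A color-count in the $(k-1)$-clique $T$ shows that any $v$-color compatible with both restricted labels $\ell'(x)$ and $\ell'(z)$ must lie in the singleton $\{1,\dots,k\}\setminus(\mathrm{image}(\ell'(x))\cup\mathrm{image}(\ell'(z)))$, whose unique element $a^*$ therefore equals both $\ell(p)(v)$ and $\ell(p')(v)$, contradicting $\ell(u_1)(v)\neq\ell(u_2)(v)$. The main obstacle is this $n=3$ sub-case: it crucially uses that $T$ is a clique of size $k-1$, so that the two restricted labels together occupy $k-1$ colors and only a single $v$-color remains available to link them through $S_z$.
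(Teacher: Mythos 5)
Your proof is correct and takes essentially the same route as the paper's: the color-complete case is handled by counting the two proper extensions of each $(k-2)$-coloring of $T\setminus\{v\}$ and checking adjacencies, and the INP case by bounding $|S_z|\le 2$ via the INP of $H$ and then deriving a contradiction from a color count along the length-$2$ or length-$3$ path through $S_z$ (your $n=2$ subcase is exactly the paper's step showing that the two witnessing nodes in $M_a$ must be distinct, and your $n=3$ contradiction is an equivalent rephrasing of the paper's $\ell(z)(w)=\ell(z)(v)$ contradiction). No gaps.
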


\begin{proof}
Let $(H,\ell)=\C^c_k(G,T)$ and $(H',\ell')=\C^c_k(G,T\bs \{v\})$.
We will use that $(H',\ell')$ can be constructed from $(H,\ell)$ as shown in Lemma~\ref{lem:recolForget}.

First consider the case that $(H,\ell)$ is a $(k-1,k)$-color-complete graph.
Then, from the definition of  $(k-1,k)$-color-complete graphs it follows that for every coloring $\alpha$ of $G[T\bs \{v\}]$, the nodes $\{x\in V(H) \mid \ell(x)\restr_{T\bs \{v\}}=\alpha\}$ induce a nonempty complete subgraph of $H$. 
When constructing $(H',\ell')$ from $(H,\ell)$, 
this subgraph will be contracted into one node, so for every such coloring $\alpha$, $H'$ contains exactly one node with label $\alpha$. 
Consider two colorings $\alpha_1$ and $\alpha_2$ of $G[T\bs \{v\}]$ that differ on only one vertex $w\in T\bs \{v\}$. We can extend both to a coloring of $G[T]$ by choosing a color for $v$ that occurs in neither $\alpha_1$ nor $\alpha_2$ (since $|T\bs \{v\}|=k-2$), which yields colorings of $G[T]$ that are adjacent in $H$ (since $(H,\ell)$ is $(k-1,k)$-color complete) and that are compatible with $\alpha_1$ resp.\ $\alpha_2$.
It follows that the nodes of $H'$ with labels $\alpha_1$ and $\alpha_2$ are adjacent (Lemma~\ref{lem:recolForget}). We conclude that $(H',\ell')$ is a $(k-2,k)$-color-complete graph. 

Next, consider the case that $(H,\ell)$ is a forest that satisfies the INP. Then $H'$ is clearly a forest, since it can be obtained by contracting $H$ (Lemma~\ref{lem:recolForget}). If $H$ contains no edges, then $H'$ contains no edges, and trivially 
$(H',\ell')$ satisfies the INP. Note that this is the case if $|T|=k$.
So it only remains to consider the case that $H$ contains at least one edge, and therefore $|T|=k-1$. 
The last part of the proof is illustrated in Figure~\ref{fig:invariantForget}.

\begin{figure}
\centering
\scalebox{0.6}{$\input{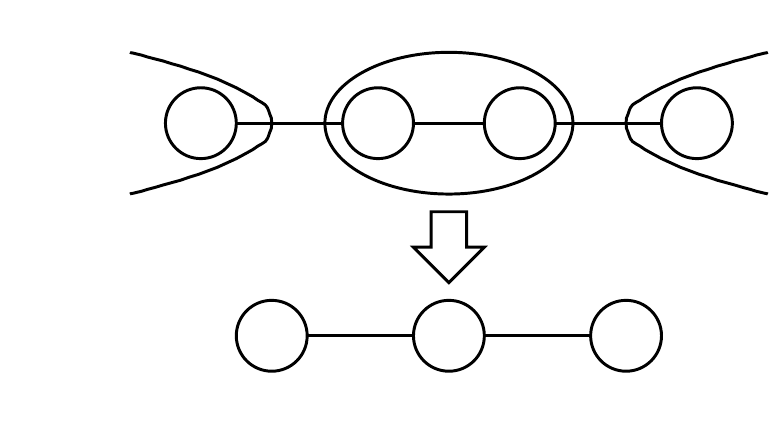_t}$}
\caption{An illustration of the proof of Lemma~\ref{lem:invariantForget}.}
\label{fig:invariantForget}
\end{figure}

For every node $a\in V(H')$, denote by $M_a$ the set of nodes of $H$ that are contracted to obtain $a$, when constructing $(H',\ell')$ from $(H,\ell)$ as shown in Lemma~\ref{lem:recolForget} (so all nodes in $M_a$ are labeled with a $G[T]$-coloring that is compatible with the $G[T\bs \{v\}]$-coloring $\ell'(a)$). 
For every $k$-coloring $\alpha$ of $G[T\bs \{v\}]$, there are at most two compatible colorings of $G[T]$, since $|T\bs \{v\}|=k-2$. So since $H$ satisfies the INP, the subgraph of $H$ induced by the nodes that have an $\alpha$-compatible label has maximum degree at 
most~1, and thus maximum component size at 
most~2. 
It follows that for every $a\in V(H')$, $|M_a|\le 2$. 

We now prove that $(H',\ell')$ satisfies the INP. 
Suppose to the contrary that $H'$ contains a node $a$ with $\ell'(a)=\alpha$, that has two neighbor nodes $b$ and $c$ with $\ell'(b)=\ell'(c)=\beta$. 
Let $w\in T\bs \{v\}$ be the vertex on which $\alpha$ and $\beta$ differ. 
So there are nodes $y\in M_b$ and $x_1\in M_a$ that are adjacent in $H$, and nodes $z\in M_c$ and $x_2\in M_a$ that are adjacent in $H$ (Lemma~\ref{lem:recolForget}). 
Because the adjacent colorings~$\ell(y)$ and~$\ell(x_1)$ differ on vertex $w$, they differ on no other vertex. The same holds for $\ell(z)$ and $\ell(x_2)$. As $H$ satisfies the INP, it follows that $x_1\not=x_2$, so $|M_a|\ge 2$, and thus $|M_a|=2$. 

We conclude that $y,x_1,x_2,z$ is a path in $H$ such that 
$\ell(y)(w)\not=\ell(x_1)(w)$, $\ell(x_1)(v)\not=\ell(x_2)(v)$, and $\ell(x_2)(w)\not=\ell(z)(w)$. 
Recall that labels of adjacent nodes in $H$ differ on exactly one vertex.
The colorings $\ell(y)$, $\ell(x_1)$, $\ell(x_2)$ and $\ell(z)$ all use $|T|=k-1$ different colors out of a total of $k$ colors. Combining these facts shows that $\ell(y)(w)=\ell(x_2)(v)=\ell(z)(v)$.
But since $\ell(y)$ and $\ell(z)$ are both compatible with $\beta$, $\ell(z)(w)=\ell(y)(w)$. This contradicts that $\ell(z)$ is a (proper) coloring of $G[T]$. 
We conclude that $(H',\ell')$ satisfies the INP.
\qed
\end{proof}

Next, we show that the introduce operation maintains the invariant.

\begin{lem}
\label{lem:invariantIntroduce}
Let $G=(V,E)$ be a $(k-2)$-connected $k$-colorable chordal graph and let $T$ be a clique of $G$, with $T\not=V$, such that $(G,T)$ can be obtained from $(G-v,T\bs\{v\})$ using an introduce operation. 
If $\C^c_k(G-v,T\bs \{v\})$ is a $(k-2,k)$-color complete graph, then $\C^c_k(G,T)$ is a $(k-1,k)$-color complete graph. 
If $|T|=k$ or $\C^c_k(G-v,T\bs \{v\})$ is a forest that satisfies the INP, then $\C^c_k(G,T)$ is a forest that satisfies the INP. 
\end{lem}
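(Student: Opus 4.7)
The plan is to apply the introduce rule of Lemma~\ref{lem:recolIntroduce} to construct $H',\ell'=\C^c_k(G,T)$ from $H,\ell=\C^c_k(G-v,T\bs\{v\})$ and verify the stated structures by case analysis. A preliminary observation I would first establish: since $(G,T)$ is obtained by introducing $v$, $N(v)\subseteq T\bs\{v\}$; since $T$ is a clique containing $v$, also $T\bs\{v\}\subseteq N(v)$, so $N(v)=T\bs\{v\}$. Hence every label $\ell(x)$ of $H$ is a proper coloring of a clique on $|T|-1$ vertices, using exactly $|T|-1$ distinct colors, so the number of admissible colors for $v$ at $x$ is $k-|T|+1$. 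Proposition~\ref{propo:connectedness} combined with $k$-colorability gives $|T|\in\{k-1,k\}$, so this count is exactly $2$ or $1$.

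For the first claim, $|T|=k-1$ and each node $x$ of $H$ yields two children $x_{c_1},x_{c_2}$. A count gives $|V(H')|=2\cdot k!/2!=k!$, matching the size of a $(k-1,k)$-color-complete graph. Each $k$-coloring $\delta$ of $G[T]$ arises exactly once, via the unique $x\in V(H)$ with $\ell(x)=\delta\restr_{T\bs\{v\}}$ and $c=\delta(v)$. Two labels differ on exactly one vertex of $T$ iff either they agree on $T\bs\{v\}$ and differ on $v$ (giving $x=y$, $c\ne d$) or agree on $v$ and differ on exactly one vertex of $T\bs\{v\}$, which by $(k-2,k)$-color-completeness of $H$ forces $xy\in E(H)$ and $c=d$. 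These are precisely the two edge types of Lemma~\ref{lem:recolIntroduce}, so $H'$ is $(k-1,k)$-color-complete.

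For the second claim I would split into subcases. When $|T|=k$, each $x$ has a unique admissible color $c(x)$, so $V(H')$ is in bijection with $V(H)$ and the only possible edges are inter-node edges $x_{c(x)}y_{c(y)}$ with $xy\in E(H)$ and $c(x)=c(y)$. If additionally $H$ is $(k-1,k)$-color-complete, a short calculation shows that the missing colors of $\ell(x)$ and $\ell(y)$ on an edge $xy\in E(H)$ differ (they are the swap values at the single differing vertex), so $c(x)\ne c(y)$ and $H'$ is edgeless. If instead $H$ is a forest satisfying the INP (still with $|T|=k$), $H'$ is obtained from $H$ by deleting a subset of edges, so it is again a forest; the INP transfers since distinct inter-node neighbors $y_{c(y)},z_{c(z)}$ of $x_{c(x)}$ come from distinct $H$-neighbors $y,z$ of $x$, whose labels differ by the INP of $H$ and hence so do their extensions.

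The main obstacle is the remaining subcase: $|T|=k-1$ with $H$ a forest satisfying the INP, so each $x$ yields a $2$-clique $\{x_{c_1},x_{c_2}\}$ in $H'$. I would first show that for each edge $xy\in E(H)$ the admissible-color sets $\phi(x),\phi(y)\subseteq\{1,\ldots,k\}$ (both of size~$2$) share exactly one color $r$, obtained by examining how the unique color swap at the differing vertex affects the missing-color pair; this adds exactly one inter-clique edge $x_r y_r$ per edge of $H$. For the forest property I would project any hypothetical cycle of $H'$ onto $H$ via $x_c\mapsto x$: since each $H$-edge supports only one inter-clique edge of $H'$, the projection of a simple cycle uses each $H$-edge at most once, and a forest admits no closed walk with this property except the trivial one; a trivial projection means the cycle lies inside a single fibre, but a fibre spans only a single edge, a contradiction. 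For the INP, I would examine the neighbors of any node $x_c$: they consist of the unique intra-clique partner $x_{c'}$ (whose label differs from any inter-clique neighbor on the $v$-coordinate since $c'\ne c$) and the inter-clique neighbors $y_c$; distinct inter-clique neighbors $y_c,z_c$ come from distinct $H$-neighbors $y,z$ of $x$, so $\ell(y)\ne\ell(z)$ by the INP of $H$, whence the extended labels also differ.
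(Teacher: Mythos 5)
Your proof is correct and follows essentially the same route as the paper's: apply the introduce rule of Lemma~\ref{lem:recolIntroduce}, split on $|T|\in\{k-1,k\}$ via Proposition~\ref{propo:connectedness}, and in the forest case show via the ``$k-1$ colors out of $k$'' count that each edge of $H$ contributes exactly one inter-fibre edge, which together with the acyclicity of $H$ and the INP gives the claim. The only differences are presentational: you make explicit the admissible-color-set computation and the $|T|=k$ edge analysis that the paper dismisses as ``obvious'', and you argue acyclicity by projecting a hypothetical cycle onto $H$ rather than by the paper's at-most-one-edge-between-groups formulation --- these are equivalent.
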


\begin{proof}
Let $(H,\ell)=\C^c_k(G-v,T\bs \{v\})$ and $(H',\ell)=\C^c_k(G,T)$.
We will use that $(H',\ell')$ can be constructed from $(H,\ell)$ as shown in Lemma~\ref{lem:recolIntroduce}. 
By Lemma~\ref{propo:connectedness}, $|T|\ge k-1$. 
If $|T|=k$, then $H'$ is a set of isolated vertices, which proves the statement. So we may now assume that $|T|=k-1$.  

First consider the case that $(H,\ell)$ is a $(k-2,k)$-color-complete graph.
For every $k$-coloring~$\alpha$ of $G[T]$, there exists exactly one node in $H$ that has a label~$\beta$ that is compatible 
with~$\alpha$. So $H'$ contains exactly one node with label~$\alpha$. 
Consider two colorings~$\alpha_1$ and~$\alpha_2$ of $G[T]$ that differ on exactly one vertex. 
If this vertex is $v$, then the nodes of $H'$ with labels~$\alpha_1$ and~$\alpha_2$ are adjacent (Lemma~\ref{lem:recolIntroduce}). Otherwise, let $\beta_i=\alpha_i\restr_{T\bs \{v\}}$ for $i=1,2$. 
The nodes with labels $\beta_1$ and $\beta_2$ are adjacent in $H$ since it is a color-complete graph. Therefore, the nodes of $H'$ with labels $\alpha_1$ and $\alpha_2$ are also adjacent in this case (Lemma~\ref{lem:recolIntroduce}).
We conclude that $(H',\ell')$ is a $(k-1,k)$-color complete graph.

Next, consider the case that $(H,\ell)$ is a forest that satisfies the INP. From Lemma~\ref{lem:recolIntroduce} it follows easily that $(H',\ell')$ satisfies the INP. 
We now prove that $H'$ is a forest. Since $|T\bs \{v\}|=k-2$, every node $x$ of $H$ has as label $\ell(x)$ a $(k-2)$-coloring of the complete graph $G[T\bs \{v\}]$. So there are exactly two nodes in $H'$ that correspond to $x$, which are adjacent (Lemma~\ref{lem:recolIntroduce}). 

We will now show that for any edge $xy\in E(H)$, the following holds: if $x_1$ and $x_2$ are the vertices of $H'$ that correspond to $x$, and $y_1$ and $y_2$ are the vertices of $H'$ that correspond to $y$, then there is at most one edge in $H'$ with one end in $\{x_1,x_2\}$ and one end in $\{y_1,y_2\}$.
Observe that this property, combined with the fact that $H$ contains no cycles, shows that $H'$ contains no cycles. 

Assume without loss of generality 
that $x_1$ and $y_1$ are adjacent in $H'$. 
Let $w\in T$ be the unique vertex with $\ell'(x_1)(w)\not=\ell'(y_1)(w)$ (note that $w\neq v$).
Observe that the colorings $\ell'(x_1)$ and $\ell'(x_2)$ differ only on $v$, and that the same holds for the colorings $\ell'(y_1)$ and $\ell'(y_2)$. 
Since all colorings in $\ell'$ use $k-1$ colors out of $k$ total colors, it follows that 
$\ell'(x_2)(v)=\ell'(y_1)(w)=\ell'(y_2)(w)$, and $\ell'(y_2)(v)=\ell'(x_1)(w)=\ell'(x_2)(w)$.
Because all of these labels are (proper) colorings, we conclude that $\ell'(x_2)$ differs from the colorings $\ell'(y_1)$ and $\ell'(y_2)$ on both $v$ and $w$, and 
$\ell'(y_2)$ differs from the colorings $\ell'(x_1)$ and $\ell'(x_2)$ on both $v$ and $w$. 
Therefore, $x_1y_1$ is indeed the only edge between these two vertex groups. 
It follows that $H'$ contains no cycles and is a forest.
\qed
\end{proof}

Finally, we show that the join operation maintains the invariant.

\begin{lem}
\label{lem:invariantJoin}
Let $G=(V,E)$ be a
$k$-colorable chordal graph and let $T$ be a clique of $G$, such that $(G,T)$ can be obtained from $(G_1,T)$ and $(G_2,T)$ using a join operation.
If one of $\C^c_k(G_1,T)$ or $\C^c_k(G_2,T)$ is an $(m,k)$-color complete graph, then $\C^c_k(G,T)$ 
equals
the other graph.
If both $\C^c_k(G_1,T)$ and $\C^c_k(G_2,T)$ are forests satisfying the INP, then $\C^c_k(G,T)$ is a forest satisfying the INP.
\end{lem}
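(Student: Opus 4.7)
The plan is to work directly from Lemma~\ref{lem:recolJoin}, which prescribes the vertex and edge sets of $\C^c_k(G,T)$ in terms of $H_1 := \C^c_k(G_1,T)$ and $H_2 := \C^c_k(G_2,T)$. A useful preliminary observation, used implicitly throughout, is that in \emph{any} CSG, two adjacent nodes must have labels that differ on exactly one vertex of $T$: by Lemma~\ref{lem:characterizationCSGs}(\ref{pr:adjacency}) their label components contain adjacent $k$-colorings of the ambient graph, which differ on a single vertex, and if that vertex lay outside $T$ the labels would coincide, contradicting Lemma~\ref{lem:characterizationCSGs}(\ref{pr:propercoloring}).

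For the first assertion, assume without loss of generality that $H_1$ is $(m,k)$-color complete. Then every $k$-coloring of $G[T]$ occurs as the label of exactly one vertex of $H_1$, so for every $y\in V(H_2)$ there is a unique $x_y\in V(H_1)$ with $\ell_1(x_y)=\ell_2(y)$. By Lemma~\ref{lem:recolJoin}, the map $\phi: y\mapsto (x_y,y)$ is a label-preserving bijection from $V(H_2)$ to $V(\C^c_k(G,T))$. To check adjacency, the direction ``$\phi(y_1)\phi(y_2)\in E(\C^c_k(G,T))\Rightarrow y_1y_2\in E(H_2)$'' is immediate from the join rule, while the converse follows because an edge $y_1y_2\in E(H_2)$ forces $\ell_2(y_1)$ and $\ell_2(y_2)$, and hence $\ell_1(x_{y_1})$ and $\ell_1(x_{y_2})$, to differ on exactly one vertex of $T$; color-completeness of $H_1$ then supplies the required edge $x_{y_1}x_{y_2}\in E(H_1)$.

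For the second assertion, I would establish the INP first and use it to rule out cycles. If a vertex $(x,y)$ of the join had two distinct neighbors $(x_1,y_1)$ and $(x_2,y_2)$ carrying the same label, then by the join rule $x_1,x_2$ would be neighbors of $x$ in $H_1$ with $\ell_1(x_1)=\ell_1(x_2)$, so the INP of $H_1$ would force $x_1=x_2$, and symmetrically $y_1=y_2$, a contradiction. Hence the join satisfies the INP. Now suppose for contradiction that the join contained a cycle $(x_0,y_0),\ldots,(x_{n-1},y_{n-1})$ of length $n\geq 3$. Its first-coordinate projection $x_0,x_1,\ldots,x_{n-1},x_0$ is a closed walk in the forest $H_1$, which must therefore backtrack: some index $i$ (taken modulo $n$) satisfies $x_{i-1}=x_{i+1}$. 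Since the cycle vertices are distinct, $y_{i-1}\neq y_{i+1}$, so $(x_{i-1},y_{i-1})$ and $(x_{i+1},y_{i+1})$ are two \emph{distinct} neighbors of $(x_i,y_i)$ sharing the label $\ell_1(x_{i-1})=\ell_1(x_{i+1})$, contradicting the INP just established.

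The main point requiring care is the adjacency half of the color-complete case, where the high edge density of a color-complete graph (any two labels differing in a single coordinate are adjacent) is precisely what matches the edges contributed by the other CSG. The forest/INP case is then a clean two-step argument: INP via coordinate-wise projection, then acyclicity via the standard fact that closed walks in forests must backtrack.
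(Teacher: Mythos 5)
Your proof is correct and follows essentially the same route as the paper's: the color-complete case via the label-matching bijection and the observation that adjacent CSG nodes have labels differing on exactly one terminal, and the forest case via INP preservation plus the backtracking of a closed walk in a forest. The only (immaterial) difference is that you find the backtrack in the projection onto $H_1$ and invoke the INP of the join itself, whereas the paper finds it in the projection onto $H_2$ and invokes the INP of $H_1$ directly.
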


\begin{proof}
Let $(H_1,\ell_1)=\C^c_k(G_1,T)$, $(H_2,\ell_2)=\C^c_k(G_2,T)$, and $(H,\ell)=\C^c_k(G,T)$. We use that $(H,\ell)$ can be constructed from $(H_1,\ell_1)$ and $(H_2,\ell_2)$ as shown in Lemma~\ref{lem:recolJoin}. 

First suppose that $(H_1,\ell_1)$ is a color-complete graph. 
Then Lemma~\ref{lem:recolJoin} shows that every node of $H_2$ is combined with exactly one node of $H_1$ (there is exactly one node with the same label), so the nodes of $H$ correspond bijectively to nodes of $H_2$. Furthermore, any edge of $H_2$ is maintained, since $H_1$ has edges between every pair of nodes labeled by colorings that differ on exactly one vertex. So $(H,\ell)$ 
equals
$(H_2,\ell_2)$. If $(H_2,\ell_2)$ is a color-complete graph, the proof is analog. 

So it only remains to prove the statement in the case that both $(H_1,\ell_1)$ and $(H_2,\ell_2)$ are forests that satisfy the INP. From Lemma~\ref{lem:recolJoin} it is easily seen that the INP is preserved in that case. We now argue that the resulting graph $H$ is a forest. Suppose to the contrary that $H$ contains a cycle $C=(u_0,v_0),(u_1,v_1),\ldots,(u_k,v_k)$ with $u_0=u_k$ and $v_0=v_k$ (we represent nodes of $H$ by tuples $(x,y)$ where $x\in V(H_1)$ and $y\in V(H_2)$, as shown in Lemma~\ref{lem:recolJoin}). 
Then $u_0,u_1,\ldots,u_k$ is a closed walk in $H_1$, and $v_0,v_1,\ldots,v_k$ is a closed walk in $H_2$, of length $k\ge 3$. 
Since $H_2$ is a forest, there is an index $i$ such that $v_{i-1}=v_{i+1}$.
It follows that $\ell_1(u_{i-1})=\ell_2(v_{i-1})=\ell_2(v_{i+1})=\ell_1(u_{i+1})$. But $u_{i-1}$ and $u_{i+1}$ are both neighbors of $u_i$, so since $H_1$ satisfies the INP, $u_{i-1}=u_{i+1}$. We conclude that the vertices $(u_{i-1},v_{i-1})$ and $(u_{i+1},v_{i+1})$ in $H$ are the same, contradicting that $C$ is a cycle. So $H$ is a forest that satisfies the INP.
\qed
\end{proof}

Combining the above lemmas yields:

\begin{thm}
\label{thm:invariant}
Let $k\ge 3$. Let $G=(V,E)$ be a $(k-2)$-connected $k$-colorable chordal graph, and let $T\subseteq V(G)$ be a clique of $G$ with $m=|T|\ge k-2$. Then $\C^c_k(G,T)$ is an $(m,k)$-color-complete graph, or it is a forest that satisfies the INP.
\end{thm}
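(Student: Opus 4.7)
The plan is a straightforward induction on the quantity $2|V(G)|-|T|$, following the recursive construction of a chordal nice tree decomposition guaranteed by Lemma~\ref{lem:ChordalNiceTreedecomp}. The three invariant-preservation lemmas (Lemmas~\ref{lem:invariantForget}, \ref{lem:invariantIntroduce} and~\ref{lem:invariantJoin}) together with the base case (Proposition~\ref{propo:invariantLeaf}) do all of the substantive work; the proof amounts to assembling them and verifying that their hypotheses are met at each recursive step.

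For the base case $T=V(G)$, the graph $G$ is itself a clique on $m\leq k$ vertices, and Proposition~\ref{propo:invariantLeaf} immediately gives that $\C^c_k(G,T)$ is an $(m,k)$-color-complete graph. For the inductive step with $T\neq V(G)$, I will invoke Lemma~\ref{lem:ChordalNiceTreedecomp} to determine which of the three operations reduces $(G,T)$ to a smaller chordal terminal graph, and argue case by case. In the \emph{join} case, Proposition~\ref{propo:connectedness} preserves $(k-2)$-connectivity of the two constituents $(G_1,T)$ and $(G_2,T)$ and guarantees $|T|\geq k-2$, so the inductive hypothesis applies to each; Lemma~\ref{lem:invariantJoin} then combines them. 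In the \emph{forget} case $(G,T)\leftarrow(G,T\cup\{v\})$, one has $k-1\leq|T\cup\{v\}|\leq k$ (the lower bound from $|T|\geq k-2$, the upper bound because $T\cup\{v\}$ is a clique in a $k$-colorable graph), so the hypothesis of Lemma~\ref{lem:invariantForget} is satisfied, and we apply it after invoking induction. In the \emph{introduce} case $(G,T)\leftarrow(G-v,T\setminus\{v\})$, Proposition~\ref{propo:connectedness} yields $|T|\geq k-1$ and confirms that $G-v$ is still $(k-2)$-connected, while $T\setminus\{v\}$ is a clique of size at least $k-2$; induction and Lemma~\ref{lem:invariantIntroduce} then give the conclusion.

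The main potential snag is really just a small boundary-case check, making sure that the lemma clauses align at the extremes of $|T|$. In the forget step when $|T\cup\{v\}|=k$, the inductive hypothesis may give a $(k,k)$-color-complete graph rather than an object directly matching Lemma~\ref{lem:invariantForget}'s color-complete hypothesis; but a $(k,k)$-color-complete graph is precisely $k!$ isolated vertices, which also counts as a forest satisfying the INP, so the forest-with-INP clause of Lemma~\ref{lem:invariantForget} applies and yields a forest with the INP as the output. Similarly, the introduce step at $|T|=k$ is covered by the explicit ``$|T|=k$'' branch in Lemma~\ref{lem:invariantIntroduce}, which forces a forest-with-INP conclusion regardless of the shape of the smaller CSG. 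With these two observations in place, the induction closes cleanly and delivers the stated dichotomy.
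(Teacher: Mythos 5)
Your proposal is correct and follows essentially the same route as the paper's own proof: induction on $2|V(G)|-|T|$ driven by Lemma~\ref{lem:ChordalNiceTreedecomp}, with Proposition~\ref{propo:invariantLeaf} as the base case, Proposition~\ref{propo:connectedness} supplying connectivity and the bound $|T|\ge k-1$ in the introduce case, and Lemmas~\ref{lem:invariantForget}--\ref{lem:invariantJoin} handling the three operations. Your boundary observation that a $(k,k)$-color-complete graph is a set of isolated nodes, hence a forest satisfying the INP, is exactly the remark the paper uses to see that Lemma~\ref{lem:invariantForget} covers all cases.
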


\begin{proof}
We prove the statement by induction over $2|V|-|T|$. 
If $T=V(G)$, then $\C^c_k(G,T)$ is isomorphic to $\C_k(G)$, with the trivial label function (Lemma~\ref{propo:recolLeaf}), so this is an $(m,k)$-color-complete graph (since $T$ is a clique). Now assume that $T\not=V(G)$. 

If $(G,T)$ can be obtained from a graph $(G,T\cup \{v\})$ using a forget operation, where $T\cup \{v\}$ is a clique of $G$, then by induction, $\C^c_k(G,T\cup \{v\})$ is either an $(m+1,k)$-color-complete graph or a forest that satisfies the INP. Because $T\cup \{v\}$ is a clique on $m+1$ vertices and $G$ is $k$-colorable, $m\le k-1$. If $m=k-1$ then $\C^c_k(G,T\cup \{v\})$ is a set of isolated nodes. This shows that Lemma~\ref{lem:invariantForget} covers all cases, and thus $\C^c_k(G,T)$ satisfies the desired property. 

If $(G,T)$ can be obtained from a graph $(G-v,T\bs\{v\})$ using an introduce operation then
Lemma~\ref{propo:connectedness} shows that $G-v$ is  $(k-2)$-connected, and obviously it is chordal, so 
we may use induction to conclude that
$\C^c_k(G-v,T\bs \{v\})$ is either an $(m+1,k)$-color-complete graph or a forest that satisfies the INP.
Lemma~\ref{propo:connectedness} also shows that  $|T|\ge k-1$. 
This shows that Lemma~\ref{lem:invariantIntroduce} covers all cases, and thus $\C^c_k(G,T)$ satisfies the desired property. 

In the remaining case, Lemma~\ref{lem:ChordalNiceTreedecomp} shows that $(G,T)$ is the join of two (smaller) graphs $(G_1,T)$ and $(G_2,T)$, 
which are  $(k-2)$-connected (Lemma~\ref{propo:connectedness}), and chordal since they are induced subgraphs of $G$, so we can use the 
induction 
hypothesis.
Then Lemma~\ref{lem:invariantJoin} can be applied, to show that $\C^c_k(G,T)$ satisfies the desired property. 
\qed
\end{proof}

\noindent
{\bf Remark 2.}
The examples in Figure~\ref{fig:cutvertCSG} show that if we relax the connectivity requirement to $(k-3)$-connectedness, 
the property in Theorem~\ref{thm:invariant} 
does not necessarily hold
anymore:
the examples in Figure~\ref{fig:cutvertCSG}(c) and~(d) are not forests, and the example in Figure~\ref{fig:cutvertCSG}(e) does not satisfy the INP. 
Hence, we cannot generalize our polynomial-time result on ${\cal C}_k$-{\sc Reachability} to $(k-3)$-connected chordal graphs in a straightforward way.

\medskip
\noindent
The characterization 
of $\C^c_k(G,T)$ in Theorem~\ref{thm:invariant}
does not yet guarantee that simply keeping track of the (relevant component of the) \CSG\  yields a polynomial-time algorithm, as shown by the second example in Section~\ref{sec:badexamples}.
However, we will now show that it suffices to only keep track of the following essential information, which remains polynomially bounded. 

\subsection{An Efficient Algorithm: Computing Essential Information}
\label{ssec:algorithm}

Let $G=(V,E)$ be a graph with $T\subseteq V$, and let $\alpha$ and $\beta$ be $k$-colorings of a supergraph of $G$. 
(The graph $G$ should be viewed as a subgraph that occurs during the dynamic programming, while $\alpha$ and $\beta$ are the colorings of the full graph.)
Let $\alpha'=\alpha\restr_{V}$ and $\beta'=\beta\restr_{V}$. 
If $\C^c_k(G,T)$ is a forest with the $\alpha'$-node $x$ and $\beta'$-node $y$ in the same component, then we define the {\em $\alpha$-$\beta$-path} to be the unique path in $\C^c_k(G,T)$ with end vertices $x$ and $y$ (together with its vertex labels). 
Given the two colorings $\alpha$ and $\beta$, the {\em essential information} for $\C^c_k(G,T)$ consists of the following:
\begin{itemize}
\item 
whether the $\alpha'$ and $\beta'$ nodes appear in the same component, 
\item 
whether $\C^c_k(G,T)$ is a forest, and 
\item 
in case the answers to both questions are yes: the $\alpha$-$\beta$-path in $\C^c_k(G,T)$ (including vertex labels). 
\end{itemize}

We also need to prove a polynomial upper bound on the length of the $\alpha$-$\beta$-path. This is nontrivial, since the introduce operation may increase the length by a factor~2. However, we will show that this only happens when earlier, a forget operation has decreased the length by a similar amount. To formalize this, we use the following alternative length measure for paths in CSGs for recoloring. 

For a subgraph $F$ of $G$ and $v\in V(G)$, we denote the neighbors of $v$ in~$F$ by $N_F(v)=N(v)\cap V(F)$.
Let $(H,\ell)$ be a labeled graph, where every node label~$\ell(v)$ is a $k$-coloring of a complete graph on vertex set $T$. The set of colors {\em used by} a node $v\in V(H)$  is defined as $U(v)=\{\ell(v)(x) \mid x\in T\}$. If $P$ is a subgraph in $H$ and $v\in V(P)$, then the {\em node weight} for $v$ is defined as $w_P(v)=|(\cup_{x\in N_P(v)} U(x))\bs U(v)|$. So this is the total number of colors that are used in the labels (colorings) for neighbors of $x$ in $P$, that are not used by the label for $x$ itself. 
We define the {\em weight of a subgraph $P$ of $H$} to be $w(P)=\sum_{v\in P} w_P(v)$.
For example, consider the last CSG shown in Figure~\ref{fig:unitintCSG}: the vertex with label $24$ has 
weight~1 
in the path with node labels $32,34,24,23,21$, but 
weight~2
in the  path with node labels $32,34,24,14,12$. This weight depends on whether the corresponding path in the previous CSG (before forgetting $f$) contained the (blue) edge between nodes $124$ and $324$.
The main idea is that for a path $P$, $w(P)$ bounds the length of $P$.
This follows from the following simple observation, where we deduce amongst others that $w_P(v)\geq 1$ if $v$ is not an isolated vertex.

\begin{propo}
\label{propo:weightbounds}
Let $(H,\ell)$ be a labeled graph, where every node label~$\ell(v)$ is a $k$-coloring of a complete graph on a vertex set $T$, such that adjacent nodes do not have the same label. 
Then for any subgraph $P$ of $H$ and any vertex $v\in V(P)$: $1\le w_P(v)\le k-|T|$ if $v$ is not an isolated vertex in $P$, and $w_P(v)=0$ otherwise.
\end{propo}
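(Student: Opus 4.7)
The statement splits into three natural cases: the boundary case of isolated $v$, the upper bound, and the lower bound. I would handle them in turn.

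The isolated case is immediate from the definition: when $N_P(v) = \emptyset$, the union $\bigcup_{x \in N_P(v)} U(x)$ is empty, so $w_P(v) = 0$.

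For the upper bound, since $\ell(v)$ is a proper $k$-coloring of a complete graph on $T$, it uses $|T|$ pairwise distinct colors; hence $|U(v)| = |T|$ and so $|\{1,\ldots,k\} \setminus U(v)| = k - |T|$. The quantity $\bigl(\bigcup_{x \in N_P(v)} U(x)\bigr) \setminus U(v)$ is contained in $\{1,\ldots,k\} \setminus U(v)$, which yields $w_P(v) \le k - |T|$.

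The lower bound is where the real content lies. When $v$ has a neighbor $u$ in $P$, I would invoke the CSG adjacency property that $\ell(u)$ and $\ell(v)$ differ on \emph{exactly one} vertex $x \in T$: this is a consequence of Lemma~\ref{lem:characterizationCSGs}(\ref{pr:adjacency}) together with the fact that adjacent colorings in $\C_k(G)$ differ on exactly one vertex, and that vertex must lie in $T$ (otherwise $u$ and $v$ would share a label and lie in the same label component, contradicting $u \ne v$). Setting $c := \ell(u)(x) \ne \ell(v)(x)$, I would show $c \notin U(v)$: if $c = \ell(v)(y)$ for some $y \in T$, then $y \ne x$ (since $\ell(v)(x) \ne c$), whence $\ell(u)(y) = \ell(v)(y) = c = \ell(u)(x)$, contradicting properness of $\ell(u)$ on the clique $T$. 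Therefore $c \in U(u) \setminus U(v) \subseteq \bigl(\bigcup_{y \in N_P(v)} U(y)\bigr) \setminus U(v)$, so $w_P(v) \ge 1$.

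The main subtlety is that the lower bound requires the CSG-specific property that adjacent labels differ on exactly one $T$-vertex, not merely that they are distinct: two permutations of the same color set have equal $U$-sets and would yield $w_P(v) = 0$ despite being distinct labels. In the paper's setting this stronger property is automatic from how CSGs arise, and this is what the proposition tacitly uses.
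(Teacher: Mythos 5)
Your proof is correct; note that the paper states this proposition without any proof at all (it is introduced as a ``simple observation''), so there is no paper argument to compare against. The isolated-vertex case and the upper bound are exactly as routine as you present them. The real value of your write-up is the lower bound, and your diagnosis there is right: the stated hypothesis --- adjacent nodes merely having \emph{distinct} labels --- does not by itself give $w_P(v)\ge 1$, since two distinct proper colorings of the clique on $T$ can be permutations of one another (e.g.\ $|T|=2$, $k=3$, labels $(1,2)$ and $(2,1)$), in which case $U(u)=U(v)$ and the weight is $0$ while $k-|T|=1$. What the proposition tacitly assumes, and what holds wherever it is applied (the $\alpha$-$\beta$-paths inside CSGs in Lemmas~\ref{lem:pathForget}--\ref{lem:pathJoin}), is the stronger property that adjacent labels differ on exactly one vertex of $T$; your derivation of that property from Lemma~\ref{lem:characterizationCSGs}(\ref{pr:adjacency}) and~(\ref{pr:propercoloring}) (the differing vertex must lie in $T$, else the two label components would merge), together with the observation that the new color at the differing vertex cannot reappear elsewhere in $\ell(v)$ because $\ell(u)$ is proper on the clique $T$, closes the gap correctly. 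So your proof is complete, and it additionally flags that the proposition's hypothesis, read literally, is slightly too weak for the claimed lower bound.
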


We observe that, as soon as the $\alpha'$ and $\beta'$ nodes are separated in some CSG that occurs during the dynamic programming, we may terminate and return 
NO. 

\begin{propo}
\label{propo:StopWhenNO}
Let $G'=(V',E')$ be a subgraph of $G=(V,E)$, and let $\alpha$ and $\beta$ be two $k$-colorings of $G$. Let $\alpha'=\alpha\restr_{V'}$ and $\beta'=\beta\restr_{V'}$. 
For any $T'\subseteq V'$ and $T\subseteq V$: if the $\alpha'$ and $\beta'$ nodes of $\C^c_k(G',T')$ are separated, then the $\alpha$ and $\beta$ nodes of $\C^c_k(G,T)$ are separated. 
\end{propo}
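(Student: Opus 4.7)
The plan is to establish the contrapositive: assuming the $\alpha$-node and $\beta$-node of $\C^c_k(G,T)$ lie in the same component, I will show that the $\alpha'$-node and $\beta'$-node of $\C^c_k(G',T')$ also lie in the same component. The bridge between these two statements is Proposition~\ref{propo:one}, which equates connectivity in a CSG with connectivity in the underlying solution graph, so the argument reduces to lifting reachability from $\C_k(G)$ to $\C_k(G')$.

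First, I would apply Proposition~\ref{propo:one} to $\C^c_k(G,T)$ to obtain a path $\alpha=\gamma_0,\gamma_1,\ldots,\gamma_m=\beta$ in $\C_k(G)$. I then consider the restricted sequence $\gamma_i'=\gamma_i\restr_{V'}$ for $0\le i\le m$. Since $G'$ is a subgraph of $G$, each $\gamma_i'$ is automatically a proper $k$-coloring of $G'$, hence a node of $\C_k(G')$. Consecutive colorings $\gamma_i$ and $\gamma_{i+1}$ differ on exactly one vertex $v\in V$; if $v\in V'$ then $\gamma_i'$ and $\gamma_{i+1}'$ differ on precisely that vertex and are therefore adjacent in $\C_k(G')$, while if $v\notin V'$ then $\gamma_i'=\gamma_{i+1}'$. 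In either case the restricted sequence is a pseudowalk from $\alpha'$ to $\beta'$ in $\C_k(G')$, which immediately contains a path between them. Applying Proposition~\ref{propo:one} in the other direction to $\C^c_k(G',T')$ then places the $\alpha'$-node and $\beta'$-node in the same component, completing the contrapositive.

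I expect no real obstacle: the only moving part is the observation that restriction transforms an edge of $\C_k(G)$ into an edge or a loop in $\C_k(G')$, i.e.\ a step of a pseudowalk. One subtlety worth a sentence of care is that Proposition~\ref{propo:one} identifies $\gamma$-nodes with respect to a fixed certificate, whereas the statement of Proposition~\ref{propo:StopWhenNO} speaks of ``the $\alpha$-node'' in absolute terms; however, the component of a CSG containing any chosen $\gamma$-node is exactly the image under contraction of the component of $\SSS(G)$ containing $\gamma$, so component membership is independent of the certificate and the two usages agree. Note also that no relationship between $T$ and $T'$ is needed, since the terminal projections enter only through the labels inside the CSGs and not through the connectivity statements themselves.
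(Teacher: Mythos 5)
Your proof is correct and follows essentially the same route as the paper's: both argue the contrapositive via Proposition~\ref{propo:one}, restrict the recoloring sequence to $V'$ to obtain a pseudowalk from $\alpha'$ to $\beta'$ in $\C_k(G')$, and apply Proposition~\ref{propo:one} again. Your additional remarks on certificate-independence of component membership are a sound (and welcome) elaboration of a point the paper leaves implicit.
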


\begin{proof}
Suppose that the $\alpha$ and $\beta$ nodes of $\C^c_k(G,T)$ are not separated. Then by Lemma~\ref{propo:one}, there exists a recoloring sequence $\gamma_0,\ldots,\gamma_p$ from $\alpha$ to $\beta$. Then restricting all of these colorings to $V'$ yields a recoloring sequence 
$\gamma_0\restr_{V'},\ldots,\gamma_p\restr_{V'}$ from $\alpha'$ to $\beta'$ for $G'$. So using Lemma~\ref{propo:one} again, the $\alpha'$ and $\beta'$ nodes in $\C^c(G',T')$ are not separated.
\qed
\end{proof}

Note that in the next lemmas, `polynomial time' means polynomial in the entire input size, which includes the essential information; in particular, the path length (we will show later, namely in the proof of Theorem~\ref{thm:main}, that the maximum path length that can occur is at most $2(k+3)n$).

\begin{lem}
\label{lem:pathForget}
Let $G$ be a $(k-2)$-connected $k$-colorable chordal graph and let $T$ be a clique of $G$ with $k-1\le |T|$, and $v\in T$. 
If we know the essential information for $\C^c_k(G,T)$, then in polynomial time we can compute the essential information for $\C^c_k(G,T\bs \{v\})$. 
If $\C^c_k(G,T)$ has a unique $\alpha$-$\beta$-path $P$, then $\C^c_k(G,T\bs \{v\})$ has a unique $\alpha$-$\beta$-path $P'$, and $w(P')\le w(P)$.
\end{lem}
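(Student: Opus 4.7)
The plan is to split into cases based on Theorem~\ref{thm:invariant}: $\C^c_k(G,T)$ is either an $(|T|,k)$-color-complete graph or a forest satisfying the INP. Lemma~\ref{lem:invariantForget} tells us which case applies to $\C^c_k(G,T\bs\{v\})$, and Proposition~\ref{propo:one} shows that whether $\alpha$ and $\beta$ lie in the same component depends only on reachability in $\C_k(G)$ and is therefore identical for the two CSGs. The only non-trivial case is when $\C^c_k(G,T)$ is a forest with $\alpha,\beta$ in the same component; in all other cases the essential information is output in constant time (in particular, if $\C^c_k(G,T)$ is a non-forest $(k-1,k)$-color-complete graph then $\C^c_k(G,T\bs\{v\})$ is the connected, non-forest $(k-2,k)$-color-complete graph).

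In the forest case, I would compute $P'$ by applying the construction of Lemma~\ref{lem:recolForget} directly to $P$: scan the path and contract every edge $v_iv_{i+1}$ whose endpoints satisfy $\ell(v_i)\restr_{T\bs\{v\}}=\ell(v_{i+1})\restr_{T\bs\{v\}}$. The result is indeed the $\alpha$-$\beta$-path in the forest $\C^c_k(G,T\bs\{v\})$, for two reasons. First, three consecutive nodes of $P$ cannot all share the restriction to $T\bs\{v\}$: this would force $|T|-1$ shared colors on $T\bs\{v\}$ together with three pairwise distinct colors at $v$, all disjoint from the shared set by the clique property of $T$, giving at least $|T|+2>k$ colors, contradicting $|T|\ge k-1$. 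Second, no two non-consecutive nodes of $P$ can be mergeable, because they would have to be adjacent in the forest $\C^c_k(G,T)$, which would create a chord of $P$ and hence a cycle. So the contractions affect disjoint consecutive pairs, producing a sequence of distinct nodes whose image in $\C^c_k(G,T\bs\{v\})$ is a path (consecutive nodes are adjacent in $\C^c_k(G,T\bs\{v\})$ by Lemma~\ref{lem:recolForget}), which is the unique $\alpha$-$\beta$-path in the forest. The procedure is linear in $|P|$.

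For the weight bound $w(P')\le w(P)$, I compare contributions vertex by vertex. The key identity is that whenever $x\in V(P)$ is not merged and $y$ is a neighbor of $x$ on $P$ with $P'$-image $n_y$, we have $U(n_y)\bs U(x)=U(y)\bs U(x)$ (understanding $U$ in each CSG's current labeling). Indeed, $x$ and $y$ agree on the color of $v$ (otherwise they would be merged), so removing this common color from $U(x)$ and $U(y)$ does not affect their single-element difference, because the unique element of $U(y)\bs U(x)$ is the color assigned by $\ell(y)$ to the vertex of $T\bs\{v\}$ on which $\ell(x)$ and $\ell(y)$ differ, and is therefore distinct from $\ell(y)(v)$ by the clique property. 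Since the $\ell'$-label of $n_y$ equals $\ell(y)\restr_{T\bs\{v\}}$ whether or not $y$ is merged, this gives $w_{P'}(x)=w_P(x)$ for every non-merged $x$. For a merged pair $v_i,v_{i+1}$ mapped to $m\in V(P')$, the new label of $m$ uses $|T|-1\ge k-2$ of the $k$ colors, so $w_{P'}(m)\le k-(|T|-1)\le 2$; meanwhile $w_P(v_i)\ge 1$ and $w_P(v_{i+1})\ge 1$ (each counts the distinct $v$-color of its merge partner), so $w_{P'}(m)\le w_P(v_i)+w_P(v_{i+1})$, with $w_{P'}(m)=0$ in the degenerate case $P=(v_i,v_{i+1})$. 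Summing over all nodes yields $w(P')\le w(P)$. The main technical obstacle is verifying the color-set identity underlying $w_{P'}(x)=w_P(x)$, which follows cleanly from the clique structure of $T$ combined with the INP.
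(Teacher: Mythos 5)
Your proposal is correct and follows essentially the same route as the paper: reduce to the forest case via Theorem~\ref{thm:invariant} and Lemma~\ref{lem:invariantForget}, build $P'$ from $P$ by the contractions of Lemma~\ref{lem:recolForget}, and compare weights. Your per-vertex weight accounting (non-merged nodes keep their weight exactly; a merged pair of weight-$\ge 1$ nodes yields a node of weight at most $2$) is just a reorganization of the paper's argument that every weight-$2$ node of $P'$ must arise from contracting an edge of $P$, so the two proofs are equivalent in substance.
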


\begin{proof}
Theorem~\ref{thm:invariant} shows that $\C^c_k(G,T)$ is either an $(m,k)$-color complete graph or a forest that satisfies the INP. 
Lemma~\ref{lem:invariantForget} then shows that $\C^c_k(G,T\bs \{v\})$ is a forest if and only if $\C^c_k(G,T)$ is a forest. Proposition~\ref{propo:StopWhenNO} shows that if $\C^c_k(G,T)$ has no $\alpha$-$\beta$-path, then $\C^c_k(G,T\bs \{v\})$ has no $\alpha$-$\beta$-path. 
If $\C^c_k(G,T)$ is a forest with a unique $\alpha$-$\beta$-path $P$, then Lemma~\ref{lem:recolForget} shows that we can find an $\alpha$-$\beta$-path $P'$ in $\C^c_k(G,T\bs \{v\})$ by starting with $P$, adjusting the labels, and possibly contracting some edges. This yields the unique $\alpha$-$\beta$-path in the forest $\C^c_k(G,T\bs \{v\})$.

We go more into detail on the construction of $P'$ from $P$ in order to prove that $w(P')\le w(P)$. If $|T|=k$ then $\C^c_k(G,T)$ consists of only isolated nodes, and thus $\C^c_k(G,T\bs \{v\})$ (which is a contraction of the former graph) as well, so the statement is trivial. 
So now assume that $|T|=k-1$. 
By Proposition~\ref{propo:weightbounds}, every node in $P$ has weight~1, and nodes in $P'$ have weight at most~2. 
So to prove that $w(P')\le w(P)$, it suffices to show that every node of $P'$ with 
weight~2 
results from contracting an edge of $P$ (that is, contracting two nodes of weight~1). 
Denote by~$\ell$ the node labels in $\C^c_k(G,T\bs \{v\})$ (which are $(k-2)$-colorings of $G[T\bs \{v\}]$).
Consider a node $y\in V(P')$ with weight~2, so it has two neighbors $x,z\in V(P')$. 
Let $a\in U(x)\bs U(y)$, and $b\in U(z)\bs U(y)$. 
Since $w_{P'}(y)=2$, it holds that $a\not=b$, so $U(x)\cup U(z)=\{1,\ldots,k\}$. 
So it is not possible to extend $\ell(x)$, $\ell(y)$ and $\ell(z)$ to 
$k$-colorings of $G[T]$ by assigning the same color to $v$, and therefore the node $y$ resulted from contracting two nodes of $P$.
\qed
\end{proof}

\begin{lem}
\label{lem:pathIntroduce}
Let $G=(V,E)$ be a $(k-2)$-connected $k$-colorable chordal graph. Let $T$ be a clique of $G$ with $T\not=V$, such that $(G,T)$ can be obtained from $(G-v,T\bs \{v\})$ by using an introduce operation. 
Let $\alpha$ and $\beta$ be two $k$-colorings of a supergraph of $G$.
If we know the essential information for $\C^c_k(G-v,T\bs \{v\})$, then in polynomial time we can compute the essential information for $\C^c_k(G,T)$. 
If $\C^c_k(G,T)$ has a unique $\alpha$-$\beta$-path $P'$, then $w(P')\le w(P)+2$ in the case where $\C^c_k(G-v,T\bs \{v\})$ has a unique $\alpha$-$\beta$-path $P$ and $w(P')=0$ otherwise.
\end{lem}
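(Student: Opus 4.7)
The plan is to split into cases based on Theorem~\ref{thm:invariant}, Lemma~\ref{lem:invariantIntroduce} and Proposition~\ref{propo:connectedness}, which together force $|T|\in\{k-1,k\}$ and pin down $\C^c_k(G,T)$ once $\C^c_k(G-v,T\bs\{v\})$ is known. If $|T|=k$, the new CSG consists of isolated nodes, so any $\alpha$-$\beta$-path $P'$ is a single node and $w(P')=0$; this realises the ``otherwise'' branch of the statement. If $|T|=k-1$ and the previous CSG is $(k-2,k)$-color-complete, then the new CSG is $(k-1,k)$-color-complete, hence contains cycles, so no unique $P'$ exists and the weight claim is vacuous. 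The main case is $|T|=k-1$ with the previous CSG a forest satisfying the \INP; by Lemma~\ref{lem:invariantIntroduce} the new CSG is again a forest with the \INP. In every case the essential information will be computed in polynomial time by direct application of Lemma~\ref{lem:recolIntroduce}, and Proposition~\ref{propo:StopWhenNO} ensures that whenever $P'$ exists in $\C^c_k(G,T)$, a path $P$ exists in $\C^c_k(G-v,T\bs\{v\})$.

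In the main case, let $P=x_0,\ldots,x_\ell$. Since $|T\bs\{v\}|=k-2$, each $x_i$ has exactly two valid $v$-colors, namely the colors in $\{1,\ldots,k\}\bs U(x_i)$, so Lemma~\ref{lem:recolIntroduce} gives two copies $x_i^{a_i},x_i^{b_i}$ of $x_i$ linked by a within-edge. For each edge $x_ix_{i+1}$ of $P$, because $\ell(x_i)$ and $\ell(x_{i+1})$ differ on a single vertex, a short computation shows that their two valid-color sets share exactly one color $c_i$, giving the unique cross-edge $x_i^{c_i}$--$x_{i+1}^{c_i}$ between consecutive levels of the ladder. Since the new CSG is a forest, the unique $\alpha$-$\beta$-path $P'$ is forced to be the zig-zag through this ladder from $(x_0)_{\alpha(v)}$ to $(x_\ell)_{\beta(v)}$, which can be written down in time linear in $|V(P)|$.

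To bound the weight, note that $|T|=k-1$ gives $w_{P'}(x)=1$ for every non-isolated $x\in V(P')$ by Proposition~\ref{propo:weightbounds}, hence $w(P')=|V(P')|$. The key identity is
\[
c_{i-1}=c_i \;\Longleftrightarrow\; w_P(x_i)=1,
\]
with $w_P(x_i)=2$ otherwise. Granting this, the zig-zag visits a single copy of an interior $x_i$ when $w_P(x_i)=1$ and both copies (via the within-edge) when $w_P(x_i)=2$; each endpoint contributes one visit plus a possible extra depending on whether $\alpha(v)=c_0$ (resp.\ $\beta(v)=c_{\ell-1}$). Combining with $w_P(x_0)=w_P(x_\ell)=1$, summing visits yields
\[
w(P')=w(P)+[\alpha(v)\ne c_0]+[\beta(v)\ne c_{\ell-1}]\le w(P)+2.
\]

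The main obstacle will be verifying the displayed equivalence: this reduces to showing that the ``extra colors'' $p_i=\ell(x_{i-1})(u_i)$ and $p_{i+1}=\ell(x_{i+1})(u_{i+1})$ at $x_i$ coincide iff the second valid $v$-color on each side of $x_i$ does, which is a direct unfolding of the definitions of $U(x_i)$, valid $v$-colors, and $w_P$. Once the identity is in place the counting above is routine, and the degenerate subcases ($\ell\le 1$, or $P$ not existing in the main regime) are handled by direct inspection of the ladder.
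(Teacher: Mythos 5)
Your proposal is correct and follows essentially the same route as the paper: reduce to the forest case via Theorem~\ref{thm:invariant} and Lemma~\ref{lem:invariantIntroduce}, build the ladder (the paper's ``caterpillar with a perfect matching'') from Lemma~\ref{lem:recolIntroduce}, and account for the weight by showing that an interior node of $P$ is doubled in $P'$ only when its $w_P$-value is already $2$. Your exact identity $w(P')=w(P)+[\alpha(v)\ne c_0]+[\beta(v)\ne c_{\ell-1}]$ is a slight sharpening of the paper's one-directional argument, but the underlying mechanism is identical.
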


\begin{proof}
Let $(H,\ell)=\C^c_k(G-v,T\bs \{v\})$ and $(H',\ell')=\C^c_k(G,T)$, such that $(H',\ell')$ is obtained from $(H,\ell)$ as shown in Lemma~\ref{lem:recolIntroduce}.
Let $\alpha'=\alpha\restr_{V(G)}$ and $\beta'=\beta\restr_{V(G)}$. 
By Lemma~\ref{propo:connectedness} we find that $G-v$ is $(k-2)$-connected. Hence, we can use 
Theorem~\ref{thm:invariant} to deduce that $(H,\ell)$ is either an $(m,k)$-color-complete graph (for $m=|T|-1$), or a forest that satisfies the INP. 

By Lemma~\ref{propo:connectedness} we find that $|T|=k$ or $|T|=k-1$.
First suppose that $|T|=k$. Then $H'$ is a forest consisting of isolated nodes. So its $\alpha'$ and $\beta'$ nodes are in the same component if and only if they are the same. This holds if and only if $\alpha'\restr_T=\beta'\restr_T$, and either $(H,\ell)$ is a $(k-1,k)$-color-complete graph or a forest with an $\alpha$-$\beta$-path of length~0. Clearly the $\alpha$-$\beta$-path in $H'$ has length~0 in this case, and the label of its node is $\alpha\restr_T$. 
This shows that we can deduce, in polynomial time, the essential information for $H'$ if $|T|=k$.

Now suppose that $|T|=k-1$. Recall that  $(H,\ell)$ is either a $(k-2,k)$-color-complete graph or a forest that satisfies the INP. 
Then by Lemma~\ref{lem:invariantIntroduce} the following holds. If $(H,\ell)$ is  a $(k-2,k)$-color-complete graph, then
$(H',\ell')$ is a $(k-1,k)$-color-complete graph.
If $H$ is a forest that satisfies the INP, then $H'$ is a forest that satisfies the INP.
Hence, $H'$ is a forest if and only if $H$ is a forest.
Below we show how to deduce in polynomial time the essential information for $H'$.

Proposition~\ref{propo:StopWhenNO} shows that if $H$ has no $\alpha$-$\beta$-path, then $H'$ has no $\alpha$-$\beta$-path. 
Hence it remains to consider the case where $H$ has a unique $\alpha$-$\beta$-path $P$. We will apply Lemma~\ref{lem:recolIntroduce} to the nodes of $P$ to construct a (labeled) $\alpha$-$\beta$-path $P'$, which is a (labeled) subgraph of $(H',\ell')$, and thus the unique $\alpha$-$\beta$-path in $(H',\ell')$, and show that $w(P')\le w(P)+2$. 
(As an illustration of this proof, consider for instance how in Figure~\ref{fig:unitintCSG}, the path $P'$ with $w(P')=7$ between node labels $142$ and $312$ in the CSG with $T=(f,g,h)$ is deduced from the path $P$ with $w(P)=5$ between node labels $14$ and $31$ in the previous CSG.)

Since $|T\bs \{v\}|=k-2$, Lemma~\ref{lem:recolIntroduce} shows that every node $x\in V(P)$ yields two adjacent nodes of $H'$, which we will denote as  $x_1$ and $x_2$, such that the labels $\ell'(x_1)$ and $\ell'(x_2)$ 
($(k-1)$-colorings 
of $G[T]$) assign the two colors $a$ and $b$ that are not used by $\ell(x)$ to vertex $v$. 
For any two adjacent nodes $x$ and $y$ in $P$, $\ell(x)$ and $\ell(y)$ differ on exactly one vertex of $T\bs \{v\}$, and both are $(k-2)$-colorings, so there is at least one color $c$ that is used neither by $\ell(x)$ nor by $\ell(y)$. 
So we can choose indices $i,j\in \{1,2\}$ such that $\ell'(x_i)(v)=c$ and $\ell'(y_j)(v)=c$, and therefore $x_i$ and $y_j$ are adjacent in $H'$ (Lemma~\ref{lem:recolIntroduce}). 
These two observations show that if we take the two nodes $x_1$ and $x_2$ for every $x\in V(P)$, then all of these nodes together induce a connected subgraph of $H'$ (a caterpillar with a perfect matching in fact) that contains the $\alpha'$-node and the $\beta'$-node of $H'$. Within this subgraph we can easily find the new $\alpha$-$\beta$-path $P'$.
Every node of the new path $P'$ has weight~1 (Proposition~\ref{propo:weightbounds}). The total weight of $P$ may increase by~2 if both end nodes of $P$ are replaced by a pair of nodes this way. Nevertheless, we will now show that the weight cannot increase by more than~2, by showing that internal nodes $y$ of $P$ are only replaced by a pair of nodes $y_1$ and $y_2$ in $P'$ if $w_{P}(y)=2$. 

Consider a node $y\in V(P)$ with neighbors $x$ and $z$ on $P$, such that 
without loss of generality
the path $P'$ contains the new nodes $x_1,y_1,y_2,z_1$, in this order. Let $u\in T\bs \{v\}$ be the unique vertex that the colorings $\ell'(x_1)$ and $\ell'(y_1)$ differ on, and let $w\in T\bs \{v\}$ be the unique vertex that the colorings $\ell'(y_2)$ and $\ell'(z_1)$ differ on. 
Since all of the colorings $\ell'(x_1),\ell'(y_1),\ell'(y_2),\ell'(z_1)$ use $k-1$ colors out of a total of $k$ colors, we conclude that $\ell'(y_2)(v)=\ell'(x_1)(u)$, and similarly, $\ell'(y_1)(v)=\ell'(z_1)(w)$. 
It follows that the colorings $\ell(x)=\ell'(x_1)\restr_{T\bs \{v\}}$ and $\ell(z)=\ell'(z_1)\restr_{T\bs \{v\}}$ together still use all $k$ colors, and therefore $w_{P}(y)=2$. We conclude that internal nodes of $P$ cannot contribute a weight increase, so $w(P')\le w(P)+2$.
\qed
\end{proof}

\begin{lem}
\label{lem:pathJoin} 
Let $G=(V,E)$ be a $(k-2)$-connected $k$-colorable chordal graph and let $T$ be a clique of $G$, such that $(G,T)$ can be obtained from $(G_1,T)$ and $(G_2,T)$ using a join operation.
If we know the essential information for both $\C^c(G_1,T)$ and $\C^c(G_2,T)$, then in polynomial time we can compute the essential information for $\C^c(G,T)$. 
If $\C^c_k(G,T)$ is a forest with a unique $\alpha$-$\beta$-path $P$, then for at least one choice of $i\in \{1,2\}$, $\C^c(G_i,T)$ is a forest with a unique $\alpha$-$\beta$-path $P_i$, and $w(P)=w(P_i)$. 
\end{lem}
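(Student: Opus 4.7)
The plan is to proceed by case analysis based on Lemma~\ref{lem:invariantJoin}. By Proposition~\ref{propo:connectedness}, both $G_1$ and $G_2$ are $(k-2)$-connected chordal graphs with $T$ a clique of each, so Theorem~\ref{thm:invariant} applies: each of $\C^c_k(G_1,T)$ and $\C^c_k(G_2,T)$ is either an $(|T|,k)$-color-complete graph or a forest satisfying the INP, and by Lemma~\ref{lem:invariantJoin} the same holds for $\C^c_k(G,T)$. The essential information records whether each $\C^c_k(G_i,T)$ is a forest, which lets us detect which case we are in.

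First suppose some $\C^c_k(G_j,T)$, say $j=1$, is color-complete. By Lemma~\ref{lem:invariantJoin}, $\C^c_k(G,T)$ equals $\C^c_k(G_2,T)$: since every label occurs exactly once in $\C^c_k(G_1,T)$, each node $y$ of $\C^c_k(G_2,T)$ pairs uniquely with some $x$ in $\C^c_k(G_1,T)$, and the map $(x,y)\mapsto y$ is an isomorphism by Lemma~\ref{lem:recolJoin}. The same lemma identifies the $\alpha$-node of $\C^c_k(G,T)$ with the $\alpha\restr_{V(G_2)}$-node of $\C^c_k(G_2,T)$, so the essential information simply transfers, and if an $\alpha$-$\beta$-path $P$ exists in $\C^c_k(G,T)$ then $P=P_2$ and so $w(P)=w(P_2)$, establishing the lemma in this case.

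Otherwise both $\C^c_k(G_i,T)$ are forests with INP, and so is $\C^c_k(G,T)$. The key structural claim is: if $P=(x_0,y_0),\ldots,(x_p,y_p)$ is the unique $\alpha$-$\beta$-path in $\C^c_k(G,T)$, then its projections $W_1=x_0,\ldots,x_p$ and $W_2=y_0,\ldots,y_p$ are precisely the unique $\alpha$-$\beta$-paths $P_1$ and $P_2$ in $\C^c_k(G_1,T)$ and $\C^c_k(G_2,T)$ respectively. Each edge of $P$ gives an edge in both projections simultaneously (Lemma~\ref{lem:recolJoin}), so $W_1$ and $W_2$ are walks of length $p$ between the respective $\alpha$-nodes and $\beta$-nodes. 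To show $W_1$ is a path, I will invoke the INP of $\C^c_k(G,T)$: if $W_1$ had immediate backtracking $x_{i-1}=x_{i+1}$ at an internal index, then $\ell_1(x_{i-1})=\ell_1(x_{i+1})$ would give the two distinct neighbors $(x_{i-1},y_{i-1})$ and $(x_{i+1},y_{i+1})$ of $(x_i,y_i)$ the same label, contradicting INP. A walk in a forest without immediate backtracking must be a path, since any earliest pair of repeated vertices would produce a closed subwalk of length at least three, i.e., a cycle. Hence $W_1=P_1$ and by symmetry $W_2=P_2$, both of length $p$ with the same vertex-label sequence as $P$.

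With this principle in hand, the essential information for $\C^c_k(G,T)$ is computed as follows: if either $\C^c_k(G_i,T)$ lacks an $\alpha$-$\beta$-path, then by Proposition~\ref{propo:StopWhenNO} so does $\C^c_k(G,T)$; otherwise check in linear time whether $|P_1|=|P_2|$ and whether their two vertex-label sequences coincide, and if so construct $P$ as the sequence of pairs $(P_1[i],P_2[i])$ with the shared label. For the weight identity, since $\ell((x_i,y_i))=\ell_1(x_i)$ and the neighbors of $(x_i,y_i)$ in $P$ project exactly to the neighbors of $x_i$ in $P_1$, we get $w_P((x_i,y_i))=w_{P_1}(x_i)$ vertex by vertex, so $w(P)=w(P_1)$ (and similarly $w(P)=w(P_2)$). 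I expect the main obstacle to be the INP-based argument identifying the projections with the actual paths $P_i$, which simultaneously underlies the polynomial-time reconstruction and the weight equality.
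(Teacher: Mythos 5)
Your proposal is correct and follows essentially the same route as the paper's proof: the same case split (one side color-complete versus both sides forests with the INP, using Lemma~\ref{lem:invariantJoin} and Proposition~\ref{propo:StopWhenNO}), and the same key step of projecting the path $P$ onto $H_1$ and $H_2$ and ruling out a spur $x_{i-1}=x_{i+1}$ via the injective neighborhood property of $\C^c_k(G,T)$. The only cosmetic difference is that you package the two directions as a single "projection" claim plus a label-sequence comparison in the algorithm, whereas the paper states the two implications separately; the content is identical.
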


\begin{proof}
Let $(H,\ell)=\C^c_k(G,T)$, $(H_1,\ell_1)=\C^c_k(G_1,T)$ and $(H_2,\ell_2)=\C^c_k(G_2,T)$ be labeled graphs such that $(H,\ell)$ is obtained from $(H_1,\ell_1)$ and $(H_2,\ell_2)$ as shown in Lemma~\ref{lem:recolJoin}. 
By Theorem~\ref{thm:invariant}, for $i\in \{1,2\}$, $(H_i,\ell_i)$ is either a $(|T|,k)$-color complete graph  or a forest that satisfies the INP 
(Lemma~\ref{propo:connectedness} shows that the graphs are $(k-2)$-connected).

By Lemma~\ref{lem:invariantJoin}, $H$ is a forest (that satisfies the INP) if and only if at least one of $H_1$ and $H_2$ is a forest. By Proposition~\ref{propo:StopWhenNO}, if there is no $\alpha$-$\beta$-path in one of $H_1$ and $H_2$, then there is no $\alpha$-$\beta$-path in $H$. So now assume that both $H_1$ and $H_2$ contain an  $\alpha$-$\beta$-path (though possibly not unique). 
If one of these, say $H_i$, is a forest with a unique $\alpha$-$\beta$-path $P_i$, but the other is a color-complete graph, then the unique $\alpha$-$\beta$-path $P$ of $H$ is 
the same as
$P_i$ (Lemma~\ref{lem:invariantJoin}), and thus $w(P)=w(P_i)$. 

It only remains to consider the case that both $H_1$ and $H_2$ are forests and contain a unique $\alpha$-$\beta$-path; call these $P_1$ and $P_2$ respectively. If 
$P_1$ equals $P_2$,
then $H$ also has an $\alpha$-$\beta$-path that 
equals
these paths (Lemma~\ref{lem:recolJoin}), which is therefore the unique $\alpha$-$\beta$-path $P$ in $H$, with $w(P)=w(P_1)=w(P_2)$. 
We conclude the proof by showing the other direction. 
(This is similar to the last part of the proof of Lemma~\ref{lem:invariantJoin}.)
Suppose $H$ has an $\alpha$-$\beta$-path $P=v_0,\ldots,v_p$. Every node $v_i$ of $H$ corresponds to a pair $v^1_i$ and $v^2_i$ of nodes in $H_1$ resp.\ $H_2$, with $\ell(v_i)=\ell_1(v^1_i)=\ell_2(v^2_i)$, and $P_1=v^1_0,\ldots,v^1_p$ and $P_2=v^2_0,\ldots,v^2_p$ are $\alpha$-$\beta$-walks in $H_1$ resp.\ $H_2$ (Lemma~\ref{lem:recolJoin}). 
If one of these, say $P_1$, is not a path, then since $H_1$ is a forest, there exists an index $i$ such that $v^1_{i-1}=v^1_{i+1}$. So $\ell(v_{i-1})=\ell_1(v^1_{i-1})=\ell_1(v^1_{i+1})=\ell(v_{i+1})$. 
But since $P$ is a path, $v_{i-1}$ and $v_{i+1}$ are distinct neighbors of $v_i$, so this contradicts the INP. 
We conclude that both $P_1$ and $P_2$ are paths, 
so $P_1$, $P_2$ and $P$ are all equal, 
so $w(P)=w(P_1)=w(P_2)$. 
This concludes the proof, which shows that we can decide in polynomial time whether $H$ is a forest with an $\alpha$-$\beta$-path, and compute it in that case. 
\qed
\end{proof} 

Combining the above statements yields the main result of this section:

\begin{thm}
\label{thm:main}
Let $G$ be a $k$-colorable $(k-2)$-connected chordal graph, and let $\alpha$ and $\beta$ be two $k$-colorings of $G$. Then in polynomial time, we can decide whether $\C_k(G)$ contains an $\alpha$-$\beta$ path. 
\end{thm}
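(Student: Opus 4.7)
The approach is to perform dynamic programming over a chordal nice tree decomposition of $(G,T)$, for a suitable initial clique $T$, maintaining only the essential information (as defined in Section~\ref{ssec:algorithm}) at each tree node.

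First, I would pick a clique $T$ of $G$ with $|T|\ge k-2$. Such a clique always exists: since $G$ is $(k-2)$-connected, its minimum degree is at least $k-2$, and every chordal graph contains a simplicial vertex whose closed neighborhood is a clique, yielding a clique of size at least $k-1$ (and computable in polynomial time from a perfect elimination ordering). I would then invoke Corollary~\ref{corol:chordalNiceTreedecomp} to construct a chordal nice tree decomposition $(\TTT,X,r)$ of $(G,T)$ on at most $(k+3)n$ nodes in polynomial time.

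Next, I would traverse $\TTT$ bottom up, maintaining for each tree node $u$ the essential information of the CSG $\C^c_k(G_u,T_u)$ associated with the subtree rooted at $u$, with respect to the restrictions of $\alpha$ and $\beta$ to $V(G_u)$. At leaf nodes, $(G_u,T_u)$ is a complete graph on its terminal set, so by Proposition~\ref{propo:invariantLeaf} the CSG is color-complete and its essential information is immediate. For forget, introduce and join nodes, Theorem~\ref{thm:invariant} ensures that every CSG encountered is either color-complete or a forest satisfying the \INP, so Lemmas~\ref{lem:pathForget}, \ref{lem:pathIntroduce} and \ref{lem:pathJoin} apply and each computes the essential information at $u$ in polynomial time from that of its child(ren). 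Upon reaching the root, the essential information for $\C^c_k(G,T)$ records in particular whether the $\alpha$- and $\beta$-nodes lie in the same component, which by Proposition~\ref{propo:one} decides whether $\C_k(G)$ contains an $\alpha$-$\beta$ path.

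The main obstacle is to show that the essential information stays polynomially bounded in size throughout, so that the $O(kn)$ DP steps each run in polynomial time. The nontrivial part of the essential information is the $\alpha$-$\beta$-path $P_u$ stored in each forest CSG; by Proposition~\ref{propo:weightbounds}, every non-isolated vertex of $P_u$ has weight at least $1$, so $|V(P_u)|\le w(P_u)+1$ and it suffices to bound $w(P_u)$. The key observation is that along any root-to-leaf path in $\TTT$, if one tracks the ``chosen'' child at each join (the one from which weight is inherited per Lemma~\ref{lem:pathJoin}), then forget nodes do not increase the weight (Lemma~\ref{lem:pathForget}), introduce nodes increase it by at most $2$ (Lemma~\ref{lem:pathIntroduce}), and joins contribute no increase. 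Since each vertex of $G$ is introduced at most once in $\TTT$, such a path contains at most $n$ introduce nodes, and the weight of the path stored at any leaf is at most $O(k^2)$ because leaf CSGs are color-complete graphs on at most $k$ terminals. Hence $w(P_u)=O(k^2+n)$ for every tree node $u$, the essential information always has polynomial size, and the whole algorithm runs in polynomial time.
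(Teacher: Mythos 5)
Your proposal is correct and follows essentially the same approach as the paper's proof: construct a chordal nice tree decomposition via Corollary~\ref{corol:chordalNiceTreedecomp}, propagate the essential information bottom up using Lemmas~\ref{lem:pathForget}, \ref{lem:pathIntroduce} and~\ref{lem:pathJoin} under the invariant of Theorem~\ref{thm:invariant}, and bound the length of the stored $\alpha$-$\beta$-path via the weight measure of Proposition~\ref{propo:weightbounds}. Your explicit choice of a root clique of size at least $k-2$ (which is indeed needed for Theorem~\ref{thm:invariant} to apply and is left implicit in the paper) and your per-branch accounting of the weight increase are slightly more careful than the paper's argument, which simply bounds the final weight by twice the total number of tree nodes, i.e.\ $2(k+3)n$; both yield the same polynomial bound.
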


\begin{proof}
Corollary~\ref{corol:chordalNiceTreedecomp} shows that for every chordal $k$-colorable graph $G$ on $n$ vertices, we can find in polynomial time a chordal nice tree decomposition on at most $(k+3)n$ nodes. 
So every node of this tree decomposition corresponds to a $(k-2)$-connected chordal subgraph $H$ of $G$ with terminal set $T$, such that either $H$ is a clique with $T=V(H)$ (leaf nodes), or $(H,T)$ can be obtained from the graph(s) corresponding to its child node(s) using a forget, introduce or join operation. 
(The fact that all of these graphs are $(k-2)$-connected follows inductively using Lemma~\ref{propo:connectedness}, and that they are chordal follows since they are induced subgraphs.)
For every one of those terminal subgraphs, we compute the essential information, bottom up (Lemma~\ref{propo:invariantLeaf}, Lemmas~\ref{lem:pathForget}--\ref{lem:pathJoin}). The computation terminates, answering NO, as soon as one subgraph $(H,T)$ is encountered such that $\alpha$ and $\beta$ are separated in $\C^c_k(H,T)$, which is correct by Proposition~\ref{propo:StopWhenNO}. (We remark that this can occur when $(H,T)$ is obtained by a join operation, or by an introduce operation when $|T|=k$.)
Otherwise, the computation terminates for the root node of the tree decomposition, which corresponds to the entire graph $G$ itself, with some terminal set $T$, with the conclusion that either $\C^c_k(G,T)$ is a color-complete graph, or that it is a forest that contains an $\alpha$-$\beta$-path. In either case, the answer to the problem is YES (Lemma~\ref{propo:one}).

Now we consider the complexity. We find the chordal nice tree decomposition in polynomial time, and it has at most $(k+3)n$ nodes (Corollary~\ref{corol:chordalNiceTreedecomp}). 
Computing the essential information for all nodes can be done in polynomial time, although the input size here includes the $\alpha$-$\beta$-path. Nevertheless, every operation increases the weight of the path by at most~2 (Lemmas~\ref{lem:pathForget}, \ref{lem:pathIntroduce} and \ref{lem:pathJoin}), and in every case the weight of the path is an upper bound for its length (Proposition~\ref{propo:weightbounds}), so the maximum path length that can occur during the execution of the algorithm is at most $2(k+3)n$. 
Together this shows that the whole procedure terminates in polynomial time. 
\qed
\end{proof}

We stress that $(m,k)$-color complete graphs, which have $k!/(k-m)!$ nodes, are not computed explicitly in our algorithm. 
So indeed,  in order to obtain a polynomial-time algorithm,
we do not need to assume that $k$ is a constant.

\section{Discussion}\label{s-discussion}

Due to the PSPACE-completeness result  of Hatanaka, Ito and Zhou~\cite{HIZ17b} on {\sc $\C_k$-Reachability} for chordal graphs, which holds when $k$ is a sufficiently large constant,
our polynomial-time algorithm cannot be extended to all chordal graphs.
Since {\sc $\C_k$-Reachability} problem is polynomial-time solvable for general graphs if $k=3$~\cite{CHJ11} and PSPACE-complete for $k=4$~\cite{BC09}, it would nevertheless still be interesting  to determine the complexity of {\sc $\C_4$-Reachability} for chordal graphs (with at least one cut vertex).
We refer to Remark~2 for a brief discussion on why our current proof technique does not work for this case. We also note that the complexity of {\sc $\C_4$-Reachability} is open for proper interval graphs. 
Initial experimental results suggest that solving this problem is not straightforward.

Below we discuss a number of other possible directions for future work, which may require additional experimental results in order to refine our DP method.
The two most important research goals are the following:

\medskip
\noindent
{\it 1. 
Explore for which other solution graph concepts $\SSS$
the DP method 
can be used to obtain polynomial-time algorithms
for the 
{\sc $\SSS$-Reachability} problem.} 

\medskip
\noindent
{\it 2. 
Explore which other commonly studied reconfiguration problems can be solved 
efficiently 
using CSGs.}

\medskip
\noindent 
The method of using CSGs can be applied to solve the {\sc $\SSS$-Connectivity} problem. Hence, this problem (which is one of the central problems in reconfiguration) is a most suitable candidate problem for the second research goal.
In this context we recall that the {\sc $\C_k$-Connectivity} problem is trivial for chordal graphs~\cite{BJLPP14} (see Section~\ref{sec:intro}).
Nevertheless, studying the complexity of the following related problem seems interesting. Call two $k$-colorings $\alpha$ and $\beta$ of a graph $G$ {\em compatible} if they coincide on all $k$-cliques of $G$. Given a chordal graph $G$ and $k$-coloring $\alpha$, is the subgraph of $\C_k(G)$ induced by all $k$-colorings that are compatible with $\alpha$ connected? 

\medskip
\noindent
Finally we discuss 
the  list coloring generalization $\C_L$ of $\C_k$.
In Remark~1, we explained how to
 generalize the DP rules presented in Section~\ref{sec:DP} to $\C_L$ (namely, by simply omitting all nodes that correspond to invalid vertex colors).
In this way, 
we showed that 
the DP rules presented in~\cite{HIZ15} can be generalized.
However, it is 
not obvious whether the results from Section~\ref{sec:chordalgraphs} also generalize to list colorings.
The following question by Hatanaka (asked at CoRe 2015) is also interesting:
is there a polynomial-time algorithm for {\sc $\C_L$-Reachability} restricted to trees?
Note that {\sc ${\C_k}$-Reachability} is trivial for trees, because $\C_k(G)$ is connected for every tree~$G$ and every integer $k\geq 3$ 
(see~\cite{BJLPP14}; this also follows easily from Lemma~\ref{propo:degeneracy}).

\medskip
\noindent
{\bf Acknowledgement.} The authors would like to thank Carl Feghali and Matthew Johnson for 
fruitful discussions and three anonymous reviewers for comments on a previous version of our paper.

\bibliographystyle{abbrv}

\appendix

\section{The Proofs of Lemmas~\ref{lem:recolForget}--\ref{lem:recolJoin}}\label{a-rules}

\def\Cnew{H'}  
\def\Corig{H}  

\medskip
\noindent
{\bf Lemma~\ref{lem:recolForget} (Forget) [Restated].}
{\it Let $(G,T)$ be a terminal graph.
For every $v\in T$, 
it holds that
$(H',\ell')=\C^c_k(G,T\bs \{v\})$ can be computed from $(H,\ell)=\C^c_k(G,T)$ as follows:
\begin{itemize}
 \item For every node $x$ in $\Corig$ with $\ell(x)=\gamma$, let
 $\ell'(x)=\gamma\restr_{T\bs\{v\}}$.
 \item Iteratively contract every edge between two nodes $x$ and $y$ with $\ell'(x)=\ell'(y)$ and assign
 label $\ell'(z):=\ell'(x)$ to the resulting node $z$. 
 \end{itemize}
Moreover, for any coloring $\gamma$ of $G$, the $\gamma$-node of $\C^c_k(G,T\bs \{v\})$ is the node that results from contracting the set of nodes that includes the $\gamma$-node of $\C^c_k(G,T)$.}

\begin{proof}
Let $S$ denote the certificate for $(\Corig,\ell)$, so for every node $x$ of $\Corig$, $S_x$ denotes the set of $k$-colorings of $G$ (or {\em solutions}), such that these sets satisfy the properties stated in Lemma~\ref{lem:characterizationCSGs}. In addition, for every coloring $\gamma$ for which a $\gamma$-node $x$ has been marked in $\Corig$, we may assume that $\gamma\in S_x$. 
We will prove the statement using Lemma~\ref{lem:characterizationCSGs} again, by giving a certificate $S'$ for $(\Cnew,\ell')$, and proving that the five properties hold for these. 

The graph $\Cnew$ is obtained by iteratively contracting edges of $\Corig$, so every node 
$y$ of $\Cnew$ corresponds to a connected set of nodes of $\Corig$, which we will denote by $M_y$. So $\{M_y \mid y\in V(\Cnew)\}$ is a partition of $V(\Corig)$. For every node $y\in V(\Cnew)$, we define $S'_y=\cup_{x\in M_y} S_x$. 

For every $k$-coloring $\gamma$ of $G$ such that the $\gamma$-node $x\in V(\Corig)$
is marked, we define the 
$\gamma$-node of $\Cnew$ to be the node~$y$ with $x\in M_y$. Clearly, $\gamma\in S'_y$ then holds, so this is correct.
It now remains to verify that the solution sets $S'_x$ satisfy the five properties stated in Lemma~\ref{lem:characterizationCSGs}.

\begin{enumerate}[1.]
\item 
$\{S_x \mid x\in V(\Corig)\}$ is a partition of the nodes of $\C_k(G)$ (Lemma~\ref{lem:characterizationCSGs}(\ref{pr:partition})), and $\{M_y \mid y\in V(\Cnew)\}$ is a partition of $V(\Corig)$, so $\{S'_y \mid y\in V(\Cnew)\}$ is a partition of the nodes $\C_k(G)$.
 
\item 
Consider a node $y\in V(\Cnew)$, with label $\ell'(y)$, which is a $k$-coloring of $G[T\bs \{v\}]$. Every node $x\in M_y$ has a label $\ell(x)$ with $\ell(x)\restr_{T\bs \{v\}}=\ell'(y)$, and for every $\gamma\in S_x$, it holds that $\gamma\restr_{T}=\ell(x)$ (Lemma~\ref{lem:characterizationCSGs}(\ref{pr:correctlabels})), and thus  $\gamma\restr_{T\bs \{v\}}=\ell'(y)$. 
Therefore, for every 
$\gamma\in S'_y$, it holds that $\gamma\restr_{T\bs \{v\}}=\ell'(y)$.

\item 
Consider two adjacent nodes $x$ and $y$ in $\Cnew$. This implies that there exists an edge $ab$ between the node sets $M_x$ and $M_y$ of $\Corig$. By definition, all nodes $a\in M_x$ have $\ell'(a)=\ell'(x)$, and all nodes $b\in M_y$ have $\ell'(b)=\ell'(y)$. So if $\ell'(x)=\ell'(y)$, then the edge $ab$ should also have been contracted when constructing $\Cnew$, a contradiction. Hence $\ell'(x)\not=\ell'(y)$.

\item 
Consider a node $x$ of $\Cnew$. The node set $M_x$ is connected, so for any two nodes $y,z\in M_x$, the subgraph of $\Corig$ induced by $M_x$ contains a path from $y$ to $x$. Edges $ab$ of this path correspond to 
solution sets $S_a$ and $S_b$ that contain adjacent solutions (Lemma~\ref{lem:characterizationCSGs}(\ref{pr:adjacency})). In addition, all such solution sets~$S_a$ are connected in $\C_k(G)$ (Lemma~\ref{lem:characterizationCSGs}(\ref{pr:connectedsets})). Combining these facts shows that the new solution sets $S'_x$ are connected in $\C_k(G)$. 

\item 
Let $z$ and $z'$ be two nodes of $\Cnew$. By construction, $z$ and $z'$ are adjacent if and only if there exist nodes $x\in M_z$ and $x'\in M_{z'}$ that are adjacent in $\Corig$.
Two such nodes $x$ and $x'$ are adjacent in $\Corig$ if and only if there exist solutions $\alpha\in S_x$ and $\alpha'\in S_{x'}$ that are adjacent in $\C_k(G)$ (Lemma~\ref{lem:characterizationCSGs}(\ref{pr:adjacency})). 
Using the definition of $S_z$ and $S_{z'}$, we conclude that $z$ and $z'$ are adjacent if and only if there exist solutions $\alpha\in S_z$ and $\alpha'\in S_{z'}$ that are adjacent in $\C_k(G)$.\qed
\end{enumerate}
\end{proof}

\def\Cnew{H'}	
\def\Corig{H}

\medskip
\noindent
{\bf Lemma~\ref{lem:recolIntroduce} (Introduce) [Restated].}
{\it Let $(G,T)$ be a terminal graph obtained from a terminal graph $(G-v,T\setminus \{v\})$ by introducing $v$.
Then $(\Cnew,\ell')=C^c_k(G,T)$ can be computed as follows from $(\Corig,\ell)=\C^c_k(G-v,T\setminus \{v\})$:
\begin{itemize}
 \item For every node $x$ of $\Corig$ with label~$\ell(x)$, and 
 every color $c\in \{1,\ldots,k\}$: if the (unique) function $\delta:T\to \{1,\ldots,k\}$ with $\delta(v)=c$ and $\delta\restr_{T}=\ell(x)$ is a 
 coloring of $G[T]$ then introduce a node $x_c$ with label $\ell'(x_c)=\delta$.
 \item 
 For every pair of distinct nodes $x_c$ and $y_d$: add an edge between them if and only if
 (1) $x=y$ or (2) $xy$ is an edge in $\Corig$ and $c=d$. 
\end{itemize} 
Moreover, for every $k$-coloring $\gamma$ of $G$, if $x$ is the $\gamma\restr_{V(G)\bs \{v\}}$-node in $\Corig$ and $\gamma(v)=c$, then $x_c$ is the $\gamma$-node of $\Cnew$.}

\begin{proof}
Let $S$ be a certificate for $(\Corig,\ell)$, so for every node $x$ of $\Corig$, let $S_x$ denote the set of $k$-colorings of $G-v$ 
(or solutions), 
such that these sets satisfy the properties stated in Lemma~\ref{lem:characterizationCSGs}. 
In addition, for every coloring $\gamma$ for which a $\gamma$-node $x$ has been marked in $\Corig$, we may assume that $\gamma\in S_x$. 
Now we construct a certificate $S'$ for $(\Cnew,\ell')$. 
For every node $x_c$ of $\Cnew$ (that corresponds to a node~$x$ of $\Corig$, and to assigning a color $c$ to the new vertex $v$), we define $S'_{x_c}$ to be the set of $k$-colorings $\alpha$ of $G$ with $\alpha(v)=c$ and $\alpha\restr_{T\bs \{v\}}\in S_x$. 
For every $k$-coloring~$\gamma$ of $G$ and node $x_c$ of $\Cnew$, we define $x_c$ to be the $\gamma$-node of $\Cnew$ if and only if $\gamma(v)=c$ and $x$ is the $\gamma\restr_{V(G)\bs \{v\}}$-node of $\Corig$. 
Clearly, this guarantees $\gamma\in S'_{x_c}$ for the chosen $\gamma$-node $x_c$.
To prove the statement, it only remains to show that the new solution sets $S'_{x_c}$ satisfy the five properties stated in Lemma~\ref{lem:characterizationCSGs}.

\begin{enumerate}[1.]
\item 
First, we observe that for every node $x_c$ of $\Cnew$, $S'_{x_c}$ is a nonempty set of $k$-colorings of $G$, because $S_x$ is nonempty (Lemma~\ref{lem:characterizationCSGs}), and by choice of $c$, every coloring~$\alpha\in S_x$ can be extended to a 
$k$-coloring of $G$ by setting $\alpha(v)=c$ (this uses the fact that $N(v)\subseteq T$). 
So to prove that the new solution sets form a partition of the nodes of $\C_k(G)$, it only remains to show that every $k$-coloring~$\alpha$ of $G$ is included in $S'_{x_c}$ for exactly one new node $x_c$. 
For every such $\alpha$, there exists a unique node $x$ of $\Corig$ such that $\alpha\restr_{V(G)\bs \{v\}}\in S_x$ (Lemma~\ref{lem:characterizationCSGs}(\ref{pr:partition})). 
Since $\alpha$ is a coloring of $G$, $\alpha\restr_{T}$ is a coloring of $G[T]$, so we have created one node $x_c$ with $c=\alpha(v)$. This is the unique node of $\Cnew$ with $\alpha\in S'_{x_c}$. 

\item 
Consider a node $x_c$ of $\Cnew$, with label $\ell'(x_c)=\delta$. For every $\alpha\in S'_{x_c}$, it holds that $\alpha(v)=c$ and $\delta(v)=c$. Furthermore, $\delta\restr_{T\bs \{v\}}=\ell(x)=\alpha\restr_{T\bs \{v\}}$ (Lemma~\ref{lem:characterizationCSGs}(\ref{pr:correctlabels})). This shows that the label $\ell'(x_c)$ is chosen correctly. 

\item 
Consider two adjacent nodes $x_c$ and $y_d$ of $\Cnew$. If $x=y$ then $c\not=d$, so $\ell'(x)\not=\ell'(y)$. Otherwise, $x$ and $y$ are adjacent nodes in $\Corig$, so $\ell(x)\not=\ell(y)$ (Lemma~\ref{lem:characterizationCSGs}(\ref{pr:propercoloring})). The labels $\ell(x)$ and $\ell(y)$ are the restrictions of $\ell'(x_c)$ and $\ell'(y_d)$ to $T\bs \{v\}$, so also in this case we conclude that $\ell'(x)\not=\ell'(y)$. 

\item 
Consider a node $x_c$ of $\Cnew$, and two $k$-colorings $\alpha$ and $\beta$ in $S'_{x_c}$. There is a path $P$ from $\alpha\restr_{V(G)\bs \{v\}}$ to $\beta\restr_{V(G)\bs \{v\}}$ in the subgraph of $\C_k(G-v)$ induced by $S_x$ (Lemma~\ref{lem:characterizationCSGs}(\ref{pr:connectedsets})). As $N(v) \subseteq T$, all colorings $\gamma$ in $P$ have an {\em extension} $\gamma'\in S_{x_c}$ with $\gamma'(v)=c$ and $\gamma'\restr_{V(G)\bs \{v\}}=\gamma$. 
So replacing all colorings in $P$ by their extension this way yields a path from $\alpha$ to $\beta$ in the subgraph of $\C_k(G)$ induced by $S'_{x_c}$. Therefore, $S'_{x_c}$ is connected.

\item 
Consider two distinct nodes $x_c$ and $y_d$ in $\Cnew$, and their corresponding sets of solutions $S'_{x_c}$ and $S'_{y_d}$. Observe that these contain solutions that are adjacent in $\C_k(G)$ if and only if at least one of the following is true: (1) $c=d$ (and thus $x\not=y$) and $S_x$ and $S_y$ contain solutions that are adjacent in $\C_k(G-v)$, or (2) $c\not=d$ and $S_x\cap S_y\not=\emptyset$. The first case holds if and only if $c=d$ and the nodes $x$ and $y$ are adjacent in $\Corig$ (Lemma~\ref{lem:characterizationCSGs}(\ref{pr:adjacency})). In the second case, $S_x\cap S_y\not=\emptyset$ holds if and only if $S_x=S_y$, and thus $x=y$ (Lemma~\ref{lem:characterizationCSGs}(\ref{pr:partition})). 
This shows that we have added the edges correctly.\qed
\end{enumerate}
\end{proof}

\def\Cnew{H}	
\def\Cone{H_1}	
\def\Ctwo{H_2}	

\medskip
\noindent
{\bf Lemma~\ref{lem:recolJoin} (Join) [Restated].}
{\it Let $(G,T)$ be a terminal graph that is the join of terminal graphs $(G_1,T)$ and $(G_2,T)$.
Let $(\Cone,\ell_1)=C^c_k(G_1,T)$ and $(\Ctwo,\ell_2)=C^c_k(G_2,T)$.
Then $(\Cnew,\ell)=C^c_k(G,T)$ can be computed as follows:
\begin{itemize}
\item For every pair of nodes $x\in V(\Cone)$
 and $y\in V(\Ctwo)$: if $\ell_1(x)=\ell_2(y)$ then introduce a node $(x,y)$ with $\ell((x,y))=\ell_1(x)$. 
\item 
 For two distinct nodes $(x,y)$ and $(x',y')$, add an edge between them if and only if $xx'$ is an edge in $\Cone$ and $yy'$ is an edge in $\Ctwo$.
\end{itemize}
Moreover, for every $k$-coloring $\gamma$ of $G$, if $x$ is the $\gamma\restr_{V(G_1)}$-node in
$\Cone$ and $y$ is the $\gamma\restr_{V(G_2)}$-node in
$\Ctwo$, then $(x,y)$ is the $\gamma$-node in $\Cnew$.}

\begin{proof}
Denote $V_1=V(G_1)$ and $V_2=V(G_2)$. 
For every node~$x$ of $\Cone$, let $S^1_x$ denote the set of $k$-colorings of $G_1$ such that these sets satisfy the properties stated in Lemma~\ref{lem:characterizationCSGs}. Similarly, we define the sets $S^2_x$ for every node~$x$ of $\Ctwo$. In addition, we assume again that these sets coincide with the choices of $\gamma\restr_{V_1}$-nodes and $\gamma\restr_{V_2}$-nodes. 

We define a certificate $S$ for $(\Cnew,\ell)$ as follows. 
For every node $(x,y)$ of $\Cnew$, we define the set $S_{(x,y)}$ to consist of all $k$-color assignments $\alpha$ of $G$ such that $\alpha\restr_{V_1}\in S^1_x$ and $\alpha\restr_{V_2}\in S^2_y$.
For any $k$-coloring $\gamma$ of $G$ and node $(x,y)$ of $\Cnew$, we choose $(x,y)$ to be the $\gamma$-node of $\Cnew$ if and only if $x$ is the $\gamma\restr_{V_1}$-node of $\Cone$ and $y$ is the $\gamma\restr_{V_2}$-node of $\Ctwo$. This obviously guarantees that $\gamma\in S_{(x,y)}$ for the chosen $\gamma$-node $(x,y)$.
To prove the statement, it only remains to show that the new solution sets $S_{(x,y)}$ satisfy the five properties stated in Lemma~\ref{lem:characterizationCSGs}.
\begin{enumerate}[1.]
\item 
First, we show that for every node $(x,y)$ of $\Cnew$, $S_{(x,y)}$ is a nonempty set of $k$-colorings of $G$. The set $S^1_x$ contains at least one coloring $\alpha_1$ of $G_1$, and $S^2_y$ contains at least one coloring $\alpha_2$ of $G_2$ (Lemma~\ref{lem:characterizationCSGs}(\ref{pr:partition})). Both of these colorings yield the coloring $\ell((x,y))=\ell_1(x)=\ell_2(y)$ when restricted to $T$ (Lemma~\ref{lem:characterizationCSGs}(\ref{pr:correctlabels})), so they can be combined into a $k$-color assignment $\alpha$ for $G$. Since all edges of $G$ are part of $G_1$ or $G_2$ (by definition of the join operation), the resulting $\alpha$ is a $k$-coloring of $G$. 

To prove that the sets $S_{(x,y)}$ partition the $k$-colorings of $G$, it now suffices to show that every $k$-coloring $\alpha$ of $G$ is included in exactly one set $S_{(x,y)}$. Consider $\alpha_i=\alpha\restr_{V_i}$ for $i=1,2$. Then $\alpha_1\in S_x$ for exactly one node $x$ of $\Cone$, and $\alpha_2\in S_y$ for exactly one node $y$ of $\Ctwo$ (Lemma~\ref{lem:characterizationCSGs}(\ref{pr:partition})). These nodes have $\ell_1(x)=\alpha\restr_{T}$ and $\ell_2(y)=\alpha\restr_{T}$ (Lemma~\ref{lem:characterizationCSGs}(\ref{pr:correctlabels})), so we have created exactly one node $(x,y)$ with $\alpha\in S_{(x,y)}$. 

\item 
Consider a node $(x,y)$ of $\Cnew$, and a solution $\alpha\in S_{(x,y)}$. Let $\alpha_1=\alpha\restr_{V_1}$. Then $\alpha\restr_{T}=\alpha_1\restr_{T}=\ell_1(x)=\ell((x,y))$ (Lemma~\ref{lem:characterizationCSGs}(\ref{pr:correctlabels})).

\item 
Consider adjacent nodes $(x,y)$ and $(x',y')$ of $\Cnew$. Then by definition, $x$ and $x'$ are adjacent in $\Cone$, so $\ell_1(x)\not=\ell_2(x')$ (Lemma~\ref{lem:characterizationCSGs}(\ref{pr:propercoloring})), and thus $\ell((x,y))\not=\ell((x',y'))$. 

\item 
Consider a node $(x,y)$ of $\Cnew$. We prove that $S_{(x,y)}$ is a connected set in $\C_k(G)$. Consider any two colorings $\alpha,\beta\in S_{(x,y)}$. Define $\alpha_i=\alpha\restr_{V_i}$ and $\beta_i=\beta\restr_{V_i}$ for $i=1,2$. 
Then for $i=1,2$, there exists a path $P^i$ (or {\em recoloring sequence}) from $\alpha_i$ to $\beta_i$, in the subgraph of $\C_k(G_i)$ induced by $S^1_x$ resp.\ $S^2_y$ (Lemma~\ref{lem:characterizationCSGs}(\ref{pr:connectedsets})). All colorings $\gamma$ in both paths satisfy $\gamma\restr_{T}=\ell((x,y)=\ell_1(x)=\ell_2(y)$ (Lemma~\ref{lem:characterizationCSGs}(\ref{pr:correctlabels})). Therefore, we can construct a recoloring sequence from $\alpha$ to $\beta$ that contains only colorings in $S_{(x,y)}$ by first recoloring vertices of $V_1\bs T$ as prescribed by the recoloring sequence $P^1$ (which yields a coloring~$\delta$ of $G$ with $\delta\restr_{V_1}=\beta_1$ and $\delta\restr_{V_2}=\alpha_2$), 
and subsequently recoloring vertices of $V_2\bs T$ as prescribed by the recoloring sequence $P^2$
(which yields the coloring~$\beta$). 
This can be done because $V_1\cap V_2=T$ and neither $P^1$ nor $P^2$ recolors a vertex of $T$. All of the color assignments in the resulting sequence are part of $S_{(x,y)}$ by definition (and they are in fact colorings, as argued above 
in~(\ref{pr:partition})). 

\item 
Consider two distinct nodes $(x,y)$ and $(x',y')$ in $\Cnew$. We prove that they are adjacent if and only if there exist solutions $\alpha\in S_{(x,y)}$ and $\beta\in S_{(x',y')}$ that are adjacent in $\C_k(G)$. 

Suppose $(x,y)$ and $(x',y')$ are adjacent. By definition, this means that $x$ and $x'$ are adjacent (and thus distinct) nodes of $\Cone$, and $y$ and $y'$ are adjacent nodes of $\Ctwo$. So we can choose solutions $\alpha^1\in S^1_x$ and $\beta^1\in S^1_{x'}$ that are adjacent in $\C_k(G_1)$, and solutions $\alpha^2\in S^2_y$ and $\beta^2\in S^2_{y'}$ that are adjacent in $\C_k(G_2)$ (Lemma~\ref{lem:characterizationCSGs}(\ref{pr:adjacency})). Since $\ell_1(x)\not=\ell_1(x')$ (Lemma~\ref{lem:characterizationCSGs}(\ref{pr:propercoloring})), and $\alpha^1\restr_{T}=\ell_1(x)$ and $\beta^1\restr_{T}=\ell_2(x')$ (Lemma~\ref{lem:characterizationCSGs}(\ref{pr:correctlabels})), 
the colorings $\alpha^1$ and $\beta^1$ differ on $T$, and therefore, since they are adjacent, only on $T$ (so their restrictions to $V_1\bs T$ are the same). Similarly, the colorings $\alpha^2$ and $\beta^2$ differ only on $T$.
By definition of $(x,y)$, $\ell_1(x)=\ell_2(y)$, so we can choose a $k$-coloring $\alpha$ of $G$ with $\alpha\restr_{V_1}=\alpha^1$ and $\alpha\restr_{V_2}=\alpha^2$. Similarly, we can choose a $k$-coloring $\beta$ of $G$ with $\beta\restr_{V_1}=\beta^1$ and $\beta\restr_{V_2}=\beta^2$. As argued above, the colorings $\alpha$ and $\beta$ differ only on one vertex in $T$, so they are adjacent in $\C_k(G)$. By their construction, $\alpha\in S_{(x,y)}$ and $\beta\in S_{(x',y'})$, so this proves the first direction.

For the converse, suppose that there exist adjacent colorings $\alpha\in S_{(x,y)}$ and $\beta\in S_{(x',y'})$. Let $\alpha^i=\alpha\restr_{V_i}$ and $\beta^i=\beta\restr_{V_i}$ for $i=1,2$. 
Let $w$ be the (unique) vertex of $G$ with $\alpha(w)\not=\beta(w)$. 
If $w\in T$ then, by using similar arguments as in the previous paragraph, one can verify that $x$ and $x'$ are adjacent nodes in $\Cone$ and $y$ and $y'$ are adjacent nodes in $\Ctwo$, and therefore $(x,y)$ and $(x',y')$ are adjacent in $\Cnew$. We conclude the proof by showing that $w\in T$ always holds. Suppose for contradiction that $w\not\in T$; 
without loss of generality
assume that $w\in V_1\bs T$. So $\alpha^2=\beta^2$. Therefore $S^2_y\cap S^2_{y'}\not=\emptyset$, and thus $y=y'$ (Lemma~\ref{lem:characterizationCSGs}(\ref{pr:partition})). It follows that $\ell_1(x)=\ell_2(y)=\ell_2(y')=\ell_1(x')$. In addition, since $(x,y)$ and $(x',y')$ are distinct nodes 
and $y=y'$, it follows that
$x\not=x'$. 
But $\alpha^1\in S^1_x$ and $\beta^1\in S^2_{x'}$ are adjacent, so $xx'\in E(\Cone)$ (Lemma~\ref{lem:characterizationCSGs}(\ref{pr:adjacency})). This is a contradiction to the fact that
$\ell_1(x)\neq \ell_1(x')$ must hold due to
Lemma~\ref{lem:characterizationCSGs}(\ref{pr:propercoloring}).
\qed
\end{enumerate}
\end{proof}

\end{document}